\def\bra#1{\langle#1|} \def\ket#1{|#1\rangle}
\def\braket#1#2{\langle#1|#2\rangle}
\def\proj#1{\ket{#1}\!\bra{#1}}
\def\id{{\mathbb I}}
\def\q0{\underline{0}}
\def\H{{\cal H}}
\def\Z{\mathbb{Z}}
\def\spec{\mbox{spec}}
\def\C{{\mathbb C}}
\def\id{{\mathbb I}}
\def\E{\vec{E}}
\def\H{{\cal H}}
\def\U{{ U}}
\def\R{\mathbb{R}}
\def\D{{\cal D}}
\def\N{\mathbb{N}}
\def\tr{\mbox{tr}}
\def\sign{\mbox{sign}}
\newtheorem{theorem}{Theorem}
\newtheorem{remark}[theorem]{Remark}
\newtheorem{lemma}[theorem]{Lemma}
\newtheorem{prop}[theorem]{Proposition}
\newtheorem{defin}{Definition}
\begin{document}

\title{Timed demolition measurements}
\author{Konstantinos Manos}
\affiliation{Institute for Quantum Optics and Quantum Information (IQOQI) Vienna,\\
 Austrian Academy of Sciences, Boltzmanngasse 3, Wien 1090, Austria}
\author{Mirjam Weilenmann} 
\affiliation{Inria Saclay, Télécom Paris -- LTCI, Institut Polytechnique de Paris,\\
19 Place Marguerite Perey, 91120 Palaiseau, France}
\affiliation{Department of Applied Physics, University of Geneva, Switzerland}
\author{Miguel Navascu\'es}
\affiliation{Institute for Quantum Optics and Quantum Information (IQOQI) Vienna,\\
 Austrian Academy of Sciences, Boltzmanngasse 3, Wien 1090, Austria}

 \begin{abstract}
Picture an experimental scenario where a closed quantum system, evolving through a time-independent Hamiltonian, is subject to a demolition measurement at a chosen time. The Hamiltonian, the measured observables, the initial state of the physical system and even its Hilbert space dimension are unknown; we nonetheless assume a promise or constraint on the energy distribution of the state. In this context we find that, for many natural energy constraints, the set of feasible time series or datasets can be characterized efficiently. Furthermore, under the assumption of a bounded energy spectrum, we prove that there exist ``self-testing'' datasets, whose approximate realization singles out specific Hamiltonians, states and measurement operators. Investigating to what extent the extrapolation of past measurement data is possible in this framework, we identify energy-constrained physical systems for which a non-trivial prediction at time $\tau$ requires a precision in the measurement data superexponential in $\tau$. We also discover two extrapolation phenomena: ``aha! datasets", which drastically increase the predictability of the future statistics of an unrelated measurement; and ``fog banks": fairly simple datasets that exhibit complete unpredictability at some future time $\tau$, but full predictability at a later time $\tau'>\tau$. Besides their relevance for quantum foundations, our results have applications in semi-device independent quantum communication, the simulation of complex quantum systems, and the design of optimal atomic clocks.

\end{abstract}

 \maketitle
 \newpage

\section{Introduction}
The study of the correlations observed by distant parties conducting experiments in a space-like separated way is a well-established subject in quantum information science (QIS) \cite{ScaraniBellNonlocality2019}. Research on such ``space-like quantum correlations'' is justified on the grounds that the security or soundness of many quantum information processing protocols can be certified device-independently. In this regard, space-like quantum correlations alone suffice to certify the randomness of the outcomes of a quantum measurement \cite{Pironio2009, colbeck2009} and generate shared secret keys between separate parties \cite{DI_QKD,DI_QKD2}. Results on device-independent QIS often rely on partial characterizations of the set of quantum space-like correlations through hierarchies of SDPs \cite{npa, npa2, Berta_2016}, and on the phenomenon of \emph{self-testing}, the property of certain space-like correlations to single out the inner workings of the preparation and measurement devices used to produce them \cite{self_testing_review}.

The majority of the experiments in quantum mechanics, though, do not involve multiple parties working in separate labs. On the contrary, at a high level, a standard quantum experiment can be described as the preparation of a quantum system in a single lab, followed by a demolition measurement. This description encompasses all experiments in high-energy physics, as well as most experiments in quantum optics and solid state physics. In many such experiments, one can further choose \emph{when} to carry out the measurement. This allows one to study the natural dynamics of the underlying physical system, by exploring how the measurement statistics change with time. 

Quantum speed limits relate the energy content of a quantum system with the speed with which it can change its state, see, e.g., \cite{speed_limits1,speed_limits2,speed_limits3,margolus_levitin, energy_max_bound}. Quantum states are, however, not directly observable: what one estimates in the lab are measurement statistics. Notwithstanding, a framework to study how measurement statistics change as a function of time, complementary to the framework of quantum nonlocality, is missing. In such a framework, one could explore time-like analogs of basic problems in quantum nonlocality, such as: how does one characterize measurement statistics in time? Are some time series self-testing, i.e., do they pinpoint the Hilbert space of the probed system, the preparation and measurement apparatuses and its time-independent Hamiltonian?

In this paper, we solve the first problem under different constraints on the energy distribution of the underlying quantum state. We find that, if the spectrum of the system Hamiltonian is finite and known, then the problem of characterizing timed correlations can be cast as a semidefinite program (SDP) \cite{sdp}, i.e., a class of convex optimization problems that can be solved efficiently. Alternative energy constraints, such as having a bounded Hamiltonian $0\leq H\leq E^+$, having a bounded average energy or a bounded energy variance can be characterized through complete hierarchies of SDP relaxations, for which we also derive convergence bounds. As a result, the corresponding sets of correlations can be characterized up to arbitrary precision with reasonable computational resources. 

With regards to the second problem, we identify families of datasets that, under the constraint $0\leq H\leq E^+$, are robustly \emph{self-testing}: in analogy with self-testing distributions in quantum nonlocality \cite{self_testing_review}, their approximate realization in the lab would single out the underlying initial state, Hamiltonian and measurement operators. More specifically, for any $N\in\N$, $N\geq 2$, we find an $N$-point dataset that self-tests an $N$-dimensional quantum system.

We then turn to a task that has no analog in the framework of space-like correlations: the extrapolation of past measurement data. If the characterization of time-like quantum correlations answers the question ``how do things change?'', the extrapolation problem raises the question ``what is going to happen?''. On this matter, we show that the extrapolation problem can be solved under all energy constraints listed above through our characterization of the corresponding sets of correlations. While the behavior of some physical systems can be extrapolated to very long times $\tau$ with just a few datapoints and reasonably high statistical noise, for some others the magnitude of the noise must be below $1/g(\tau)$, with $g$ being a superexponential function, in order to make a non-trivial prediction, even in the limit of infinitely many datapoints.

In addition, we discover two unexpected extrapolation phenomena. 

The first one, which we dub the ``aha! effect", refers to a situation in which the dataset generated by conducting measurement $1$ on a quantum system at specific times does not allow us to say anything at all about the measurement statistics at some future time $\tau$. However, if we also considers the corresponding dataset generated by a second, independent measurement $2$, then it is possible to predict the future outcome distribution of measurement $1$ with certainty. 

The second effect, which we call ``fog bank", describes an extrapolation scenario where the system's behavior at some future time $\tau$ is completely unpredictable, and yet it becomes fully predictable at some later time $\tau'>\tau$. 

We find that both effects can be observed in datasets generated by measuring closed, low-dimensional quantum systems.

It is worth mentioning that, throughout this paper, we assume that the system under observation is closed, namely, that its dynamics are governed by some unknown, time-independent Hamiltonian. All our results are nevertheless robust against weak interactions with an arbitrarily energetic environment.

Besides its fundamental interest, our work has several applications in different fields:
\begin{enumerate}
    \item \textbf{Device independent energy witnesses.}
    Through our hierarchies of SDP relaxations, it is possible to certify, solely from its dynamical statistical behavior, that certain energy assumptions on an untrusted device do not hold. For instance, given an experimental dataset, obtained by probing a complex quantum system, we might certify that said system has average energy greater than $\bar{E}$. This could play a role in future experiments that aim to clarify the quantum nature of gravity \cite{BMM+17, MV17}. Note that the measurement statistics of classical systems are not limited by energy considerations: if models of classical gravity (see, e.g., \cite{TD16, oppenheim}) approximately describe the actual dynamics of quantum massive systems, it might be possible to devise an experiment to disprove the quantumness of gravity under energy assumptions.

    \item \textbf{Design of optimal clocks.} Our characterization of the set of timelines generated by a Hamiltonian of known spectrum allows identifying the physical setup that best approximates  arbitrary ``ideal'' measurement statistics, such as those of a perfect clock. That is: given the available energy levels $E_1,...,E_n$ of the system, we can efficiently solve the problem of determining what state to prepare and what measurement to conduct on the system to optimally approximate (in $2$-norm) the ``clock'' timeline $P_c(a|t)=\Theta(a-t)\Theta(t+1-a)$, for $a=0,1,2,...$. 

    \item \textbf{Semi-device independent randomness certification.} In \cite{jones2024, jones2025certifiedrandomnessquantumspeed} it is shown how to certify and quantify the randomness of the outcomes of a quantum measurement under different energy assumptions on the underlying quantum system. Both results rely on the characterization of measurement statistics in timed demolition scenarios with one measurement of two possible outcomes and two different measurement times. The techniques used, based on quantum speed limits \cite{speed_limits1} and the characterization of correlations generated by probing two different states of bounded overlap \cite{Van-Himbeeck_2019}, do not generalize to scenarios with more than two measurement times. In our work, we show that the set of feasible datasets under the energy restriction considered in \cite{jones2024} can be characterized in scenarios with arbitrarily many measurements, outcomes and measurement times with a single SDP. Similarly, through hierarchies of SDP relaxations, it is possible to characterize, up to the desired precision, the set of datasets in arbitrary timed demolition scenarios under the variance energy constraint invoked in \cite{jones2025certifiedrandomnessquantumspeed}. Our work therefore allows extending the randomness certification protocols in \cite{jones2024, jones2025certifiedrandomnessquantumspeed} to more complicated measurement scenarios, with the potential to certify more randomness under less demanding experimental requirements.

    \item \textbf{Semi-device independent quantum key distribution.} The randomness certification results of \cite{jones2024, jones2025certifiedrandomnessquantumspeed} suggest that timed demolition scenarios could similarly be useful for semi-device independent quantum key distribution (QKD) \cite{semi_QKD, semi_QKD2}, in which case our characterization of sets of feasible datasets will be needed. Think of a prepare-and-measure QKD protocol where we take for granted a constraint on the energy of the transmitted quantum system. At the receiving site, there is just one measurement setting, but one can choose when to conduct said measurement. In this scenario, suitable measurement statistics would suffice to certify secrecy, without further assumptions on the preparation and measurement devices, or on the quantum system itself. 

    \item \textbf{Dynamics of complex quantum systems.} Computing the exact evolution of many-body or continuous variable quantum systems is an intractable problem \cite{hamil_intract1, hamil_intract2}. Some methods allow, however, determining the average value of an observable at different times, with rigorous uncertainty bounds \cite{MPS_evolution, CV_evolution, differential_paper}. Sometimes the uncertainty bounds explode, e.g., when the state of the many-body system can no longer be approximated by a matrix product state \cite{MPS} of low bond dimension \cite{MPS_evolution}, or when the continuous variable state cannot be approximated by a short expansion into number states \cite{CV_evolution}. It is worth remarking that, for many such systems, the average energy of the initial state can be computed exactly. As we show in section \ref{sec:extrapolation}, in such circumstances our general extrapolation techniques allow estimating the future values of any bounded observable beyond the breaking point of the numerical method.

\end{enumerate}
Since this paper aims to properly introduce the timed measurements framework, these applications will be sketched rather than fully worked out, with one exception: in Section \ref{sec:clocks} we will compute the optimal clock signals achievable by manipulating the first levels of standard atomic spectra.

The structure of this paper is as follows: in Section \ref{sec:timed_demolition}, we will introduce the framework of timed demolition measurements, which we will use to formulate all our subsequent results. We will motivate a number of different energy constraints and point to experimental platforms where such constraints faithfully represent our knowledge of the system. We will also discuss our central assumption of handling a closed quantum system and explain how our results still apply (with minor modifications) when the system interacts with its environment weakly. In Section \ref{sec:charact} we will introduce a general method to derive SDP relaxations for sets of feasible time series under arbitrary energy constraints. We will apply this method to formulate complete hierarchies of SDP relaxations for the constraints listed above. In section \ref{sec:self_testing} we will define robust self-testing in the timed demolition measurements scenario and present a family of self-testing datasets. In section \ref{sec:extrapolation}, we will introduce the extrapolation problem, explain how to solve it and prove the existence of Aha! datasets and fog banks. We will present our final conclusions in section \ref{sec:conclusions}.

\section{Timed demolition measurements}
\label{sec:timed_demolition}
\subsection{Basic framework}
\label{sec:framework}
Consider a closed quantum system, initially in state $\rho\in B(\H)$, for some Hilbert space $\H$, evolving through the action of a Hamiltonian $H$. Since we are free to choose the origin of energies, we will frequently assume that $H\geq 0$. At time $t$, we conduct a demolition measurement labeled by $x\in\{1,...,X\}$, obtaining an outcome $a\in\{1,...,A\}$ with some probability $P(a|x,t)$. Measurement $M_x$ can be described by a Positive Operator Valued Measure (POVM), i.e., a tuple of positive semidefinite operators $(M_{a|x})_a\subset B(\H)^A$, with $\sum_aM_{a|x}=\id_\H$. With this notation, it holds that
\begin{equation}
P(a|x,t)=\tr(e^{-iHt}\rho e^{iHt}M_{a|x}),\forall a,x,t.
\end{equation}
For fixed $t$, we will call the vector of probabilities $(P(a|x,t))_{a,x}$ a \emph{datapoint}. The set of all datapoints will be denoted by ${\cal D}$. For finite ${\cal T}\subset \R$, any function $P:{\cal T}\to{\cal D}$, with $P(t):=(P(a|x,t))_{a,x}$, for all $t\in{\cal T}$, will be called a \emph{dataset} or a \emph{time series}. If we allow the domain of $P$ to include all $\R$, we will speak of a \emph{timeline}. Contrarily to datasets, timelines cannot be estimated through a finite number of experiments. Both for timelines and datasets, we call the tuple $(\H,\rho,H,M)$ a \emph{realization}, where $M$ denotes the collection of measurements $M_x$. Figure~\ref{fig:setup} illustrates the setup for collecting datasets.
\begin{figure}
     \centering  
   \includegraphics[trim={4cm 6cm 4cm 0.5cm},clip,width=\linewidth]{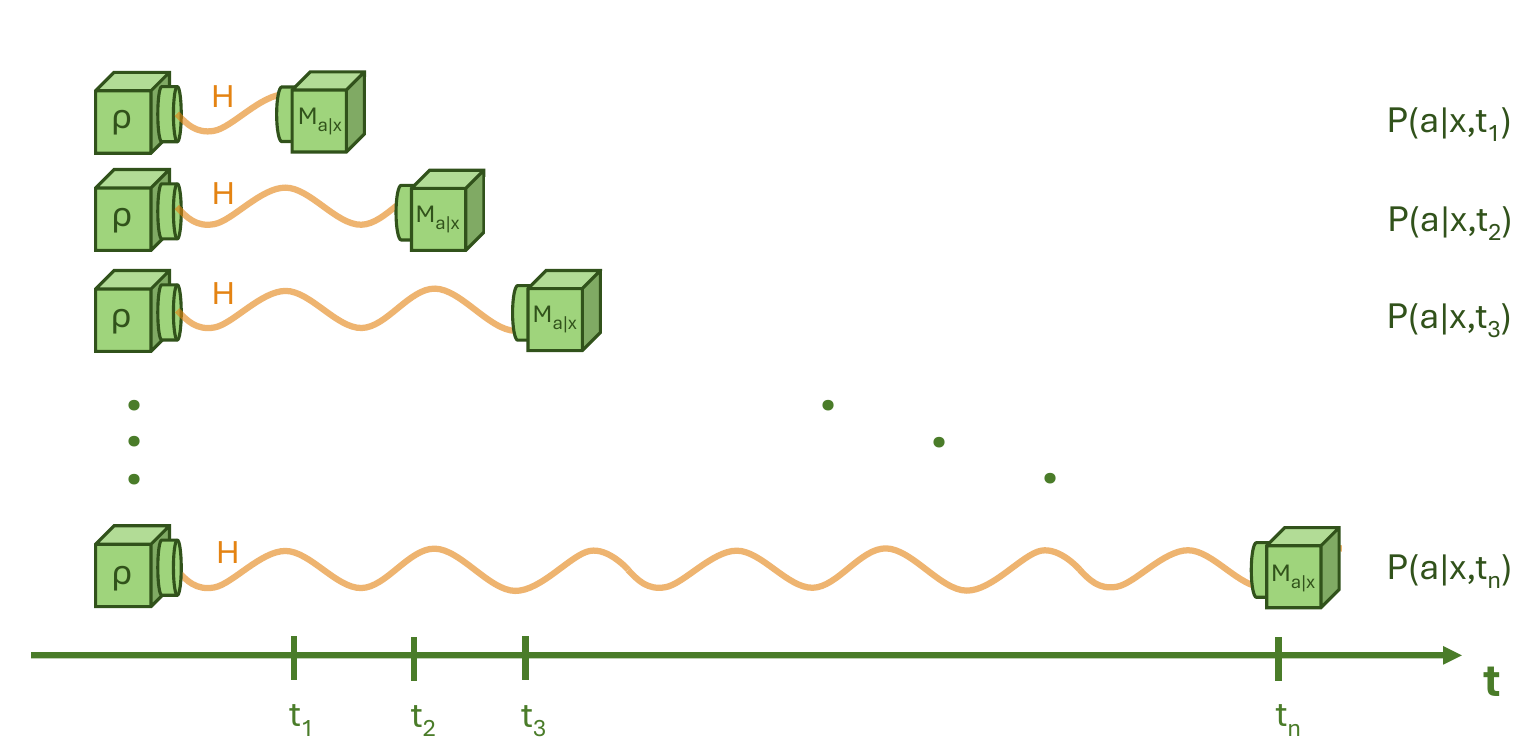}
     \caption{\textbf{Datasets.} A datapoint $(P(a|x,t_i))_{a,x}$ characterises the outcome probabilities when a state $\rho$ is prepared, evolved for a time $t_i$ with Hamiltonian $H$ and measured with one of the measurements $M_x$. For the different datapoints making up a dataset only the time the system evolves for differs (shown in different rows in the figure).
     In order to obtain an estimate $(\tilde{P}(a|x,t_i))_{a,x}$ for the datapoint $(P(a|x,t_i))_{a,x}$, one may prepare many copies of the state $\rho$, evolve each with Hamiltonian $H$ for a time $t_i$ and perform one of the different measurements $M_x$ on it. }
     \label{fig:setup}
\end{figure}

In practice, through finitely many experiments, we can only hope to arrive at \emph{noisy datasets}, consisting of the probability distribution estimates $(\tilde{P}(a|x,t_j))_{a,x,j}$, together with a list of error bars $(\delta(x,j))_{x,j}$. We say that a dataset or timeline $P$ \emph{fits} the noisy dataset $(\tilde{P}, \delta)$, defined over times $\vec{t}=(t_1,..,t_n)$, it if it holds that
\begin{equation}
\sum_a|P(a|x,t_j)-\tilde{P}(a|x,t_j)|\leq \delta(x,j).
\label{dist_fixed_x}
\end{equation}
for $j=1,...,n$. For any $\delta\in\R^+$, we will use the expression $(\tilde{P},\delta)$ to mean the noisy dataset with error bars $\delta(x,j)=\delta$, for all $x,j$.

Connected to the previous equation, we next define a notion of distance between datapoints:
\begin{equation}
\|P(t)-\tilde{P}(t)\|:=\max_{x}\sum_a|P(a|x,t)-\tilde{P}(a|x,t)|,
\end{equation}
To understand the intuition behind this distance, think of a device that, with probability $\frac{1}{2}$ has measurement statistics described by either $P$ or $\tilde{P}$. If we have just one use of the device at time $t$, then the maximum probability of correctly identifying the datapoint is
\begin{equation}
\frac{1}{2}+\frac{\|P(t)-\tilde{P}(t)\|}{4}.
\end{equation}
Correspondingly, eq.~\eqref{dist_fixed_x} is related to the probability of distinguishing the two datapoints when were are obliged to measure $x$.

The closure of the set of datasets $P$ admitting a quantum realization is only limited by the requirement that, for each $x$ and all $t\in{\cal T}$, $P(a|x,t)$ is a probability distribution in $a$. Indeed, let $\vec{t}=(t_1,...,t_N)$, $P:\{t_j\}_j\to{\cal D}$ and consider any approximate realization of an ideal quantum ``clock" \cite{Woods_2018, Woods_2022}, i.e., any tuple $(\H_c, \rho_c, H_c, M(t)dt)$ such that 
\begin{equation}
\tr(e^{-iH_ct}\rho_c e^{iH_ct} M(t'))\approx\delta(t-t'),
\end{equation}
for $t\in [-R,R]$, with $R>\max_j |t_j|$. Let $Q(a|x,t)$ be any smooth function of $t$ such that, for fixed $x,t$, $Q(a|x,t)$ is a probability distribution and $Q(a|x,t_j)=P(a|x,t_j)$, for all $a,x,j$. Then we can define the $A$-outcome POVM $M_{a|x}:=\int Q(a|x,t) M(t)dt$ and verify that the dataset
\begin{equation}
P'(a|x,t_j):=\tr(e^{-iH_ct_j}\rho_c e^{iH_ct_j} M_{a|x})
\end{equation}
satisfies $P'(a|x,t_j)\approx P(a|x,t_j)$. Since the approximation can be made as accurate as one wishes, it follows that $P$ can be approximated arbitrarily well by datasets $P'$ admitting a quantum realization. 

Accurate clocks require, however, very energetic quantum states to operate \cite{Woods_2018}. Given that energy is a resource, this suggests concerning ourselves with timelines and datasets whose realizations $(\H,\rho,H,M)$ satisfy certain energy constraints. The nature of those constraints is the subject of the next section.

\subsection{Energy constraints and their corresponding sets of correlations}
\label{sec:sets}
There exist many inequivalent ways to constrain the energy of a quantum system: which one is most suitable will depend on the considered experimental platform. In the following, we list natural energy constraints for standard quantum setups.

\subsubsection{Constraints for discrete spectra}
In some experimental scenarios, the spectrum of the Hamiltonian is finite and known, call it $\E=(E_1,...,E_n)$ (from now on, for any presented list of energy levels, we will assume that it is in increasing order, i.e., $E_1<E_2< \ldots < E_n$). If the dynamics of the system only depends on a finite number of known energy levels, then we speak of a \emph{spectral constraint}. The relevant set of correlations is the set $S(\E)$ of timelines or datasets with realization $(\H, \rho, H, M)$ such that $\mbox{spec}(H)\subset \{E_j\}_j$; or, alternatively, such that the energy distribution $\sigma(E)dE$ of $\rho$ has support in $\{E_j\}_j$. This set of datasets has already appeared in the literature, see, e.g., \cite{jones2024}, where the $X=1, A=N=2$ scenario is studied for the truncated harmonic oscillator, i.e., $\E=(0,\Delta, 2\Delta,...,(n-1)\Delta)$. 

Trapped ions \cite{ion_traps_review} and cold atoms \cite{cold_atoms_review} are obvious experimental platforms where some form of the spectral constraint holds. For starters, by tuning the shape of the trap, one can modify the spectrum of the motion Hamiltonian of the system at will. Furthermore, one can make sure that the ion or atom be prepared in an initial state with support on just the first energy levels $\E$ of the trap. Another experimental platform of interest where the set $S(\E)$ plays a role are optical resonators \cite{circuit_QED_review}: coupling them to a superconducting circuit allows engineering arbitrary superpositions of finitely many photon number states at a given frequency \cite{Hofheinz2009Synthesizing}. In that case, the energy levels of the Hamiltonian are those of a truncated harmonic oscillator. 
    
A ``soft" version of the spectral constraint is the assumption that the first $n$ energy levels of $H$ are $\E=(E_1,...,E_n)$; that $H$ has a gap from $E_n$ to $\eta> E_n$; and that energy levels beyond $\eta$ are populated with probability at most $\epsilon$. We would then be interested in timelines and datasets whose realizations $(\H, \rho, H, M)$ satisfy
\begin{align}
&\mbox{spec}(H)\cap [0,\eta)=\{E_j\}_j,\label{first_levels_cond_temp}\\
&\int_{\eta}^\infty dE\sigma(E)\leq \epsilon\label{density_cond_temp},
\end{align}
where $\sigma(E)dE$ denotes the energy distribution of $\rho$. Call $S_\eta(\E; \epsilon)$ the set of datasets or timelines satisfying such a soft spectral constraint. Curiously enough, for datasets, the closure $\overline{S_\eta(\E;\epsilon)}$ is independent of $\eta$, see Appendix \ref{app:independence_of_eta}. Thus, from now on we will take $\eta\to E_n$ and focus on the set $S(\E;\epsilon)$, defined as:
\begin{align}
&\mbox{spec}(H)\cap [0,E_n]=\{E_j\}_j,\label{first_levels_cond}\\
&\lim_{\delta\to 0^+}\int_{E_n+\delta}^\infty dE\sigma(E)\leq \epsilon\label{density_cond},
\end{align}

The set $S(\E;\epsilon)$ becomes relevant when we prepare atoms or ions in superposition of both bound and free states, like in \cite{exp_cold_atoms}. In that case, it is possible to estimate, through motion state Rabi spectroscopy, the probability $\epsilon$ that the system has energy beyond some threshold $E_n$. 

\subsubsection{Constraints for continuous spectra}
\label{sec:constraints}
Suppose that the energy levels in $\E$ are numerous and very close to each other. Then we can set the origin of energies to be the ground state energy, i.e., $E_1=0$, define $E^+:=E_n$ and consider the slightly larger set $S(E^+)$ of datasets and timelines with realization $(\H, \rho, H, M)$ such that $0\leq H\leq E^+$, or, alternatively, such that the energy density $\sigma(E)dE$ of $\rho$ is supported in $[0,E^+]$. Such a \emph{hard energy support constraint} is considered in \cite{energy_max_bound} in the context of quantum speed limits. 

The soft version of this constraint defines the set $S(E^+;\epsilon)$ of datasets and timelines with realization $(\H,\rho, H,M)$ such that $H\geq 0$ and the energy distribution $\sigma(E)dE$ of $\rho$ satisfies eq. (\ref{density_cond}), with $E_n=E^+$. Note that $S(E^+)=S(E^+;0)$.

In some physical situations, bounding the average energy of the state is also a rather natural assumption, i.e.:
\begin{equation}
\tr(\rho H)\leq \bar{E}
\label{aver_cond}
\end{equation}
(here we assume, of course, that $H\geq 0$). Such scenarios naturally occur in quantum thermodynamics: imagine, for instance, that, starting from a large number $N$ of identical and independent quantum systems at zero temperature, we conduct an operation over the ensemble with energy cost $N\bar{E}$. Next, we take a random constituent system. Assuming that part of the energy might be lost, the only constraint we can give a priory on the state $\rho$ we hold is given by eq. (\ref{aver_cond}). This is the scenario considered in the seminal paper by Margolus and Levitin \cite{margolus_levitin} on quantum speed limits. Translated to our language, their bound on state overlaps reads:
\begin{equation}
\|P(0)-P(t)\|\leq 2\sin\left(\frac{\pi\bar{E}t}{2}\right),
\end{equation}
for $t\in \left[0,\frac{\pi}{2\bar{E}}\right]$.

A concrete experimental scenario where one might encounter condition (\ref{aver_cond}) is in ion or cold atom experiments under a very loose trap. In such setups, the determination of the system's average energy can easily be achieved through Rabi spectroscopy, see \cite{ion_traps_review}, \cite{cold_atoms_review}.

We will denote by $A(\bar{E})$ the set of datasets/timelines with realization $(\H, \rho, H, M)$ satisfying the \emph{average energy constraint} (\ref{aver_cond}) and the condition $H\geq 0$. Note that $A(\bar{E})$ equals the set of datasets with realization $(\H,\rho,H,M)$ such that $H\geq 0$ and $\tr(\rho H)=\bar{E}$ \emph{exactly} \footnote{The reason is that, if some feasible realization in $A(\bar{E})$ satisfies $\tr(\rho)<\bar{E}$, then the realization $(\H\otimes \C^2,\tilde{\rho},\tilde{H}, \tilde{M})$, with $\tilde{\rho}=\rho\otimes \proj{1}$, $\tilde{H}=H\otimes \id_2+(\bar{E}-\tr(\rho H))\id_{\H}\otimes \proj{1}$ and $\tilde{M}_a=M_a\otimes \id_2$ generates the same timeline and satisfies $\tr(\tilde{\rho}\tilde{H})=\bar{E}$.}.

If we lift the condition that the Hamiltonian is bounded from below (this is a good approximation when we are working in a range of very high energies), another natural assumption appears, the \emph{energy variance constraint}:
\begin{equation}
\sqrt{\tr(\rho H^2)-\tr(\rho H)^2}\leq \Delta E.
\label{variance_cond}
\end{equation}
This energy constraint already features in the first paper ever written on quantum speed limits \cite{speed_limits1}, where the Mandelstam-Tamm bound is derived. Adapting the strongest form of the bound \cite{speed_limits2} to our framework, we have:
\begin{equation}
\|P(0)-P(t)\|\leq 2\sin\left(\Delta Et\right),
\end{equation}
for $t\in \left[0,\frac{\pi}{2\Delta E}\right]$. Otherwise, the energy variance constraint appears quite often in works on resource theories, see \cite{speed_limits3, variance1, variance2, variance3}. Both the energy variance assumption and the Mandelstam-Tamm bound are used in \cite{jones2025certifiedrandomnessquantumspeed} to certify measurement-outcome randomness through a 2-point dataset.

Like $\bar{E}$, the quantity $\Delta E$ can be estimated in ion trap or cold atom experiments through Rabi spectroscopy, see, e.g., \cite{exp_cold_atoms} for a relatively recent experiment.

We will call $U(\Delta E)$ the set of datasets/timelines with realization $(\H, \rho, H, M)$ satisfying the energy variance constraint. Similarly to the case of $A(\bar{E})$, one can show that $U(\Delta E)$ equals the set of datasets with realizations saturating the inequality (\ref{variance_cond}).

\subsection{Basic properties of the sets of timed correlations}
\label{sec:relations_sets}
The sets of datasets/timelines defined above are all convex: this can be seen by noting that, for any such set $T$, given two realizations $(\H^j,\rho^j,H^j,M^j)$, $j=1,2$, of feasible timelines $P_1(t), P_2(t)\in T$, for any $\lambda\in [0,1]$, the tuple 
\begin{equation}
(\H^1\oplus\H^2,\lambda\rho^1\oplus (1-\lambda)\rho^2,H^1\oplus H^2,M^1\oplus M^2)
\end{equation}
is a feasible realization for $\lambda P_1(t)+(1-\lambda) P_2(t)$.

In addition, there exist obvious inclusion relations among the sets. Namely, for $\E=(E_1,...,E_n)$, $E^+=E_n$, it holds that 
\begin{itemize}
    \item $S(E^+)\subset A(E^+), U\left(\left(\frac{E^+}{2}\right)^2 \right)$,
    \item $S(\E;\epsilon), S(E^+;\epsilon)\subset S(E^+;\epsilon')$, for $\epsilon'\geq\epsilon>0$,
    \item $S(E^+;\epsilon)\subset S(E^++\Delta;\epsilon)$, for $\epsilon,\Delta>0$,
    \item $S(\E+\lambda;\epsilon)=S(\E;\epsilon)$ for $\epsilon>0,\lambda\in\R$.    
\end{itemize}

Beyond exact relations, there also exist approximate inclusion relations among the sets of correlations listed in Section \ref{sec:constraints}. Given two sets $A, \tilde{A}$ of datasets/timelines and a function $f:\R\to \R^+$, by $A\subset_f \tilde{A}$ we mean that, for any $P\in A$, there exists $\tilde{P}\in \tilde{A}$ such that $\|P(t)-\tilde{P}(t)\|\leq f(t)$, for all $t\in \R$ (for timelines) or $t\in \{t_1,...,t_N\}$ (for datasets). Similarly, $A=_f \tilde{A}$ will be shorthand for $A\subset_f \tilde{A}\subset_f A$.

With this notation, one can easily show that 
\begin{equation}
S(E^+;\epsilon)\subset_{2\sqrt{\epsilon}} S(E^+).    
\label{rel_E_plus_E_plus_epsilon}
\end{equation}
Indeed, let $P\in S(E^+;\epsilon)$, with realization $(\H,\rho,H,M)$. First, note that we can choose $\rho$ to be a pure state, i.e., $\rho=\proj{\psi}$. This is so because, given a mixed state $\rho\in B(\H)$, there exists $\ket{\psi}\in \H^{\otimes 2}$ such that $\rho=\tr_2(\proj{\psi}_{12})$. Thus, the realization $(\H^{\otimes 2},\proj{\psi},H\otimes \id,M\otimes\id)$ generates the same timeline and still complies with the soft energy support constraint. 

We can decompose $\ket{\psi}$ as $\ket{\psi}=\alpha\ket{\psi^-}+\beta\ket{\psi^+}$, where the normalized ket $\ket{\psi^-}$ ($\ket{\psi^+}$) belongs to the subspace of $\H$ spanned by (generalized) eigenstates of $H$ with eigenvalue smaller than or equal to $E^+$ (greater than $E^+$), and $\alpha,\beta$ are real numbers such that $\alpha^2+\beta^2=1$, $\beta\leq \sqrt{\epsilon}$. Then, the realization $(\H,\proj{\psi^-},H,M)$ generates a timeline $\tilde{P}(a|x,t)\in S(E^+)$ with
\begin{align}
&\sum_a|P(a|x,t)-\tilde{P}(a|x,t)|\nonumber\\
&\leq \|e^{-iHt}(\proj{\psi}-\proj{\psi^-})e^{iHt}\|_1\nonumber\\
&=2\sqrt{1-|\braket{\psi}{\psi^-}|^2}\leq 2\sqrt{\epsilon}.
\end{align}

To prove eq.~\eqref{rel_E_plus_E_plus_epsilon}, we relied on the fact that any feasible timeline $P(t)$ admits a realization $(\H,\rho,H,M)$ where $\rho$ is pure. We will call any such realization a \emph{pure-state realization}.

Similarly to (\ref{rel_E_plus_E_plus_epsilon}), one can prove that
\begin{align}
S(\E;\epsilon)\subset_{2\sqrt{\epsilon}} S(\E).
\end{align}

Let us now focus on sets of datasets. Let $\vec{t}=(t_1,...,t_N)$ denote a set of measurement times. For $\E=(E_1,...,E_n)$, with $E_1=0$, $E_n=E^+$, define the maximum spectral gap $g(\E):=\max_j E_{j}-E_{j-1}$. In Appendix \ref{app:discretization}, we prove that, if $g(\E)\leq \frac{\pi}{\max_j|t_j|}$, then, for all $1\geq\epsilon\geq 0$, it holds that
\begin{align}
&S(E^+;\epsilon) \subset_f S(\E;\epsilon),\mbox{ with } f(t)=2\sin\left(\frac{g(\E) |t|}{2}\right).
\label{approx_E_plus_epsilon_by_discrete_epsilon}
\end{align}

So far, we have explored how sets of datasets change when we change the energy assumptions. But, how do they react to changes on the measurement times? Given a set of datasets $T(\bullet)$, let us denote, for the rest of the section, the set of datasets in $T(\bullet)$ with measurement times $\vec{t}$ as $T(\bullet;\vec{t})$. It is easy to prove that, for any converging sequence $(P^j, \vec{t}^j)_j$, such that $P^j\in S(E^+;\vec{t}^j)$, it holds that $\lim_{j\to\infty}P^j\in \overline{S(E^+;\vec{t})}$. That is, the set $\overline{S(E^+;\vec{t})}$ is continuous on $\vec{t}$. So are the closures of $S(\E;\vec{t})$, $A(\bar{E};\vec{t})$ and $U(\Delta E;\vec{t})$, see Appendix \ref{app:discretization}.

This is not the case, however, for soft sets of the form $S(\E;\epsilon)$ or $S(E^+;\epsilon)$: in Appendix \ref{app:continuity_datasets}, we prove that, for $0<\epsilon<\frac{1}{2}$, there exists a dataset $P$ such that, for any $\E$, $P\in S\left(\E;\epsilon;\vec{t}^j\right)$, for all $j\in\N^+$, with $\vec{t}^j=(0,\Delta,2\Delta+\frac{\Delta}{2j+1})$. And yet, for $E^+ \Delta$ small enough, $P$ is at a finite distance from the set $S\left(E^+;\epsilon;\vec{t}\right)$, with $\vec{t}=\lim_{j\to\infty}\vec{t}^j=(0,\Delta, 2\Delta)$.

The set of datasets $S\left(E^+;\frac{1}{2};\vec{t}\right)$ does not depend continuously on the measurement times $\vec{t}$, how can that be? Well, conducting a measurement at a specific time $t$ is an idealization: the closest one can do in practice is conducting a measurement \emph{approximately} at time $t$. If we model a measurement at time $t$ as an instantaneous measurement that takes place at a random time $\tilde{t}$, following a distribution $\gamma(\tilde{t})d\tilde{t}$ centered in $t$, then the sets $S(E^+;\epsilon; \vec{t})$ become continuous under the averaging of the measurement times. 

An exact characterization of $S(E^+;\epsilon;\vec{t})$ will tell us, for a fixed realization $(\H,\rho,H,M)$ compatible with the soft constraint, the limiting dataset generated by $(\H,\rho,H,M)$ as we decrease the width of $\gamma(\tilde{t})d\tilde{t}$. In information processing scenarios where an adversary might control the measurement times and adapt the realization correspondingly, such as in randomness certification \cite{jones2025certifiedrandomnessquantumspeed}, the set 
\begin{equation}
\tilde{S}(\vec{E};\epsilon;\vec{t}):=\lim_{\delta\to 0}\mbox{conv}\left(\cup_{\|\vec{t}-\vec{s}\|\leq \delta}S(E^+;\epsilon;\vec{s})\right)
\label{def_tilde_S_epsilon}
\end{equation}
would be more relevant.

\subsection{Weak interactions with the environment}
Closed quantum systems are an idealization. In practice, a physical system $S$ is expected to jointly evolve with its environment $E$ through a time-independent Hamiltonian of the form:
\begin{equation}
H_T=H_S+H_E+H_{ES}.
\end{equation}
In the expression above, the free Hamiltonians $H_S$, $H_{E}$ are understood to respectively act non-trivially only on $S$ and $E$, while the interaction term $H_{ES}$ acts on the Hilbert spaces of both $S$ and $E$. With this notation, the ``open systems'' observed measurement statistics are
\begin{equation}
P(a|x,t)=\tr\{e^{-iH_Tt}\rho_{ES}e^{iH_Tt}(\id_E\otimes M_{a|x})\},
\end{equation}
where $\rho_{ES}$ denotes the original joint state of system and environment.

In the quantum setups motivating the definitions of the sets $S(\E;\epsilon), S(E^+;\epsilon), A(\bar{E}), U(\Delta E)$ (namely, cold atoms \cite{cold_atoms_review}, trapped ions \cite{ion_traps_review} and frequency resonators coupled to superconducting circuits \cite{circuit_QED_review}), the interaction with the environment is small and controlled. Hence, for short times $t$, the evolution of the system under observation is well approximated by the ``free'' timeline
\begin{equation}
P_0(a|x,t):=\tr\{e^{-iH_St}\rho_{S}e^{iH_St}M_{a|x}\}.
\end{equation}
In fact, experimental demonstrations of quantum speed limits for closed systems have already been successfully conducted in cold atoms \cite{exp_cold_atoms} and frequency resonators \cite{exp_supercond}.

In the following pages, we will provide several results on the structure of $(\H_S,\rho_S,H_S, M)$ or on unobserved measurement statistics statistics $P(t)$, for $t\not\in \{t_1,...,t_N\}$, that hold under the assumption that system $S$ is closed, satisfies certain initial energy constraints and its timeline fits some noisy dataset $(\tilde{P}, \delta)$ for small enough $\delta$. Suppose that we drop the first assumption, but we can nonetheless quantify the deviation from the free evolution of the system, as:
\begin{equation}
\|P(t)-P_0(t)\|\leq \delta(t),
\end{equation}
for some computable function $\delta(t)$. Then we would conclude that $P_0$ fits the dataset $(\tilde{P},\delta')$, with
\begin{equation}
\delta'_{x,j}:=\delta_{x,j}+\delta(t_j).
\end{equation}
If the error bars $\{\delta'_{x,j}\}_{x,j}$ are small enough, structural results on $(\H_S,\rho_S,H_S, M)$ might thus still apply. Similarly, for $\{\delta'_{x,j}\}_{x,j}$ small enough, we will be able to predict $P_0(\tau)$ for future, unobserved times $\tau$. $P_0(\tau)$ is, however, unobserved: to arrive at experimental predictions at time $\tau$, one would have to add a correction $\delta(\tau)$.

In this direction, we next prove a general result: for times $t\sim \|H_{SE}\|^{-1}$, $P(a|x,t)$ is well approximated by
$P_0(a|x,t)$. Let the operators $U(t):=e^{-itH_T}$, $U_0(t):=e^{-it(H_S+H_E)}$ respectively denote the actual and the non-interacting evolution. Expressing $U(t)$ as a Dyson series with respect to the free Hamiltonian $H_S+H_E$, one finds that
\begin{equation}
\|U(t)-U_0(t)\|\leq e^{|t|\|H_{SE}\|}-1.
\end{equation}
If the interaction term is small, i.e., $\|H_{ES}\|\leq \lambda$, with $\lambda |t|\ll 1$, then 
\begin{equation}
\|U(t)-U_0(t)\|\leq O(\lambda |t|).
\end{equation}
This implies that
\begin{align}
&\|U(t)\rho_{ES}U(t)-U_0(t)\rho_{ES}U_0(t)\|_1\nonumber\\
&\leq \|(U(t)-U_0(t))\rho_{ES}U(t)\|_1\nonumber\\
&+\|U_0(t)\rho_{ES}(U(t)-U_0(t))\|_1\nonumber\\
&\leq O(\lambda |t|).
\end{align}
Consequently,
\begin{equation}
\|P(t)-P_0(t)\|\leq O(\lambda |t|).
\label{inter_diff}
\end{equation}

From the above discussion we conclude that, under weak environmental interactions, experimental datasets generated by probing for short times a quantum system initially satisfying the considered energy constraint will approximately belong to the corresponding set of free time-like correlations. Moreover, any general result that holds for finite noise $\delta\not=0$, derived under the free evolution assumption, will also hold in all physical situations where both the statistical noise and the interaction with the environment are sufficiently weak. This applies to all the phenomena we will be reporting next, namely: self-testing datasets (Section \ref{sec:self_testing}), aha! datasets (Section \ref{sec:aha}) and fog banks (Section \ref{sec:fog}). 

\section{Characterization of the correlation sets}
\label{sec:charact}

In this section we explain how to characterize the sets of correlations that we have listed so far through complete sequences of semidefinite programming (SDP) relaxations \cite{sdp} (if the reader is unfamiliar with SDP, they will find a short introduction thereabout in Appendix \ref{app:SDP}). More specifically, we will prove that the set $S(\E)$ is SDP-representable. In addition, for $T\in\{S(E^+),S(\E;\epsilon), S(E^+;\epsilon), A(\bar{E}), U(\Delta E)\}$ we will show that there exists a sequence of SDP-representable sets $(T_k)_k$ such that $T\subset T_k$, and $T_k\subset_{\omega_k} T$, for some monotonically decreasing sequence of positive numbers $(\omega_k)_k$ with $\lim_{k\to \infty}\omega_k=0$. As we will see, for $T\in\{S(E^+), A(\bar{E}), U(\Delta E)\}$ the sequence $(\omega_k)_k$ converges so fast, that the problem of deciding if an $N$-point dataset belongs to $T$ or is at a distance from $T$ greater than $\delta$ (namely, the weak membership problem) can be solved in time polynomial in $\frac{1}{\delta}$ and $N$. For $T\in\{S(\E;\epsilon), S(E^+;\epsilon)\}$, both the form of the SDP relaxations $(T_k)_k$ and the convergence bounds $(\omega_k)_k$ depend on the \emph{exact} measurement times in complicated ways.

Any SDP relaxation of $T$ can be used to falsify the corresponding energy constraint device-independently. For example: the hierarchy of SDP relaxations for $A(\bar{E})$ is $(A_m(\bar{E}))_m$. For any $m$, deciding if there exists $P\in A_m(\bar{E})$ fitting an experimental dataset $(\tilde{P},\delta)$ can be cast as an SDP feasibility problem and thus it can be solved efficiently. If, for some $m$, we find that no dataset in $A_m(\bar{E})$ fits $(\tilde{P},\delta)$, then neither will a dataset in $A(\bar{E})\subset A_m(\bar{E})$. We will thus have proven \emph{device-independently} that the system that generated the noisy dataset has average energy \emph{beyond} $\bar{E}$. Moreover, from $\lim_{m\to\infty}A_m(\bar{E})=\overline{A(\bar{E})}$, such a hierarchy of infeasibility detection tests is complete.

In addition, SDP relaxations of the corresponding set of feasible datasets can help certify randomness in energy-constrained quantum systems. Indeed, recent works \cite{jones2024, jones2025certifiedrandomnessquantumspeed} study the problem of demonstrating the randomness of measurement outcomes in the most basic timed demolition measurement scenario of $X=1, A=N=2$, under different assumptions on the system's energy: a truncated harmonic oscillator in \cite{jones2024} and a system with bounded energy variance in \cite{jones2025certifiedrandomnessquantumspeed}. Through our SDP formulations of $S(\E), U(\Delta E)$, it is possible to extend these protocols to more complex scenarios, with more settings, outcomes and measurement times. Moreover, we feel that our characterization of $A(\bar{E})$ is more relevant than that of $U(\Delta E)$ to analyze the motivating examples in \cite{jones2025certifiedrandomnessquantumspeed} involving coherent states of light.

Finally, our SDP characterization of $S(\E)$ has an immediate application: the design of optimal atomic clocks. We expand on this in Section \ref{sec:clocks}.

\subsection{SDP relaxations: main ingredients}
\label{sec:ingredients}
To derive complete hierarchies of SDP relaxations for the sets of correlations we discussed in the previous section, we will rely on four ideas or ingredients:
\begin{enumerate}
    \item Start from a pure-state realization $(\H,\ket{\psi},H,M)$ of a general element $P$ of the set $T$ and choose an energy cut-off $\eta>0$.
    \item If $\mbox{spec}(H)\cap [-\eta,\eta]$ is not a finite set, define a \emph{discretization function} $\varphi:\R\to \{\tilde{E}_k:k\in\Z\}$. This function will vary according to the set we wish to characterize; in general, $\varphi$ will satisfy these three conditions:
    \begin{itemize}
    \item The timeline $\tilde{P}$ generated by the realization $(\H,\psi, \tilde{H},M)$, with $\tilde{H}=\varphi(H)$, fulfills $\tilde{P}\in T$.
    \item $\varphi(\eta)=\eta$ and $|E-\varphi(E)|\leq \alpha,\forall E\in\spec(H)$, for some $\alpha\in\R^+$ that we call the \emph{resolution} of $\varphi$.
    \item The set of energy levels $\{\tilde{E}_k\}_k\cap [-\eta,\eta]$ is finite.
    \end{itemize}
    From the second condition, it follows by Appendix \ref{app:discretization} (Proposition \ref{prop:discretization}) that 
    \begin{equation}
    \|P(t)-\tilde{P}(t)\|\leq 2\sin\left(\alpha |t|\right),
    \end{equation}
    for all $|t|\leq \frac{\pi}{2\alpha}$. 
    
    Whether $H$ was postulated to be discrete or not, we end up with a realization $(\H,\ket{\psi}, \tilde{H},M)$, with $\spec(\tilde{H})=\{\tilde{E}_k\}_k$. In the following, we assume that either $\tilde{E}_m=\eta$ or that $\tilde{E}_{m-1}=\eta$. In either case, we define $\E:=(\tilde{E}_0,...,\tilde{E}_{m-1})$. From this point on, we focus on characterizing $\tilde{P}$, since it either equals $P$ or approximates $P$ for small $\alpha$.
    
    \item Now there comes the most complicated step. First, we express the initial wave-function in this manner:
    \begin{equation}
    \ket{\psi}=\sum_{k=0}^{m-1}\sqrt{p_k}\ket{\tilde{E}_k}+\sqrt{p_{m}}\ket{\psi^+},
    \end{equation}
    for some normalized states $\{\ket{\tilde{E}_k}\}_k$ with $\tilde{H}\ket{\tilde{E}_k}=\tilde{E}_k\ket{\tilde{E}_k}$. With respect to $\tilde{H}$, the normalized state $\ket{\psi^+}$ has energy support in either $(-\infty,\eta] \cup [\eta,\infty)$ or $(-\infty,\eta) \cup (\eta,\infty)$. If the discretization function $\varphi$ was properly chosen, the vector of weights $p=(p_j)_{j=0}^m$ will be related to the energy distribution of $\ket{\psi}$ under the original Hamiltonian $H$ and thus to the type of energy constraint. Such weights will thus satisfy some constraints $R(p)\leq 0$, which we here assume to be SDP-representable.

    Next, let us define the $(N+m)\times (N+m)$ positive semidefinite matrices
    \begin{equation}
    \tilde{M}_{a|x}:=\Phi M_{a|x}\Phi^\dagger,
    \end{equation}
    with
    \begin{equation}
    \Phi:=\sum_{k=0}^{m-1}\sqrt{p_k}\ket{k}\bra{\tilde{E_k}}+\sqrt{p_{m}}\sum_{j=0}^{N-1}\ket{j+m}\bra{\psi^+_j},
    \end{equation}
    and the vectors $\ket{\psi^+_j}:=e^{-i\tilde{H}t_j}\ket{\psi^+}$. Further define the function $p(a,x|t_j,\tilde{M},\E)$:
    \begin{align}
    p(a|x,t_j,\tilde{M},\E)&:=\bra{\hat{\psi}_j}\tilde{M}_{a|x}\ket{\hat{\psi}_j},
    \label{def_p_spec}
    \end{align}
    with
    \begin{equation}
    \ket{\hat{\psi}_j}:=\sum_{k=0}^{m-1}e^{-i\tilde{E}_kt_j}\ket{k}+\ket{m+j};
    \end{equation}
    denote by $p(t_j,\tilde{M},\E)$ the corresponding datapoint.
    
    Then it holds that
    \begin{align}
    p(a|x,t_j,\tilde{M},\E)&=\bra{\hat{\psi}_j}\tilde{M}_{a|x}\ket{\hat{\psi}_j}\nonumber\\
    &=\bra{\psi}e^{i\tilde{H}t_j}\tilde{M}_{a|x}e^{-i\tilde{H}t_j}\ket{\psi}\nonumber\\
    &=\tilde{P}(a|x,t_j),\forall j,a,x.
    \label{tilde2POVM}
    \end{align}
    In addition, 
    \begin{equation}
    \sum_a\tilde{M}_{a|x}=\Phi\Phi^\dagger=\left(\begin{array}{cc}\mbox{diag}(p_0,...,p_{m-1})&0\\0&\gamma\end{array}\right),
    \label{norm_cond_aux}
    \end{equation}
    where $\gamma_{jk}:=p_{m}\braket{\psi^+_j}{\psi^+_k}$. 
    \item The matrix $\gamma$ belongs to $G(\vec{t})$, the (complex) cone of \emph{decay matrices}. Their exact nature is the subject of Appendix \ref{app:overlaps} and will be discussed in Section \ref{sec:charact_soft}. To arrive at an SDP relaxation for $T$, the last step is choosing an SDP relaxation of $G(\vec{t})$, call it $\tilde{G}(t)$.    
\end{enumerate}

With these ingredients, the resulting SDP-representable relaxation of $T$ is $\tilde{T}$: the set of all datasets $P$ (with times $t_1,...,t_N$) such that there exist a normalized distribution $p=(p_j)_{j=0}^m$, an $N\times N$ complex matrix $\gamma$, and $(m+N)\times (m+N)$ matrices $\{\tilde{M}_{a|x}\}_{a,x}$ satisfying:
\begin{align}
&R(p)\leq 0,\label{conds_ps}\\
&\tilde{M}_{a|x}\geq 0, \forall a,x,\label{tilde_POVMs_pos}\\
&\sum_a\tilde{M}_{a|x}=\left(\begin{array}{cc}\mbox{diag}(p_0,...,p_{m-1})&0\\0&\gamma\end{array}\right),\forall x,\label{tilde_POVMs_norm}\\
&\gamma\in \tilde{G}(\vec{t}),\label{G_relax}\\
&\|P(t_j)-p(t_j,\tilde{M},\E)\|\leq 2\sin\left(\alpha |t_j|\right),\forall j\label{approx_Ps}.
\end{align}
Clearly $T\subset \tilde{T}$. Moreover, as we will next see, for suitably chosen $\varphi,\tilde{G}(\vec{t})$, approximate converse relations of the form $\tilde{T}\subset_{f} T$ can be derived.

In the following sections, we show how to use these ideas to characterize the sets of time-like correlations defined in Section \ref{sec:sets}. The method is, however, very general: it could also be used to characterize, e.g., the datasets generated by Hamiltonians whose spectrum has both discrete eigenvalues as well as finite bands, or the datasets generated by a harmonic oscillator with bounded average energy.

\subsection{Characterization of $S(\E)$}
\label{sec:charact_finite}
To characterize $S(\E)$, with $\E=\{E_0,...,E_{n}\}$, we set $\eta=E_n$, $m=n+1$. There is no need for a discretization function $\varphi$, since the Hamiltonian's spectrum is already finite. Apart from normalization and positivity, the only condition on the weights $(p_j)_{j=0}^{m}$ is $p_m=0$, which implies that $\gamma=0$. We can thus replace the $N+m$-dimensional vectors $\ket{\hat{\psi}_j}$ in eq. (\ref{def_p_spec}) by the $j$-independent, $m$-dimensional vector
\begin{equation}
\ket{\hat{\psi}(t)}:=\sum_{k=0}^{m-1}e^{-iE_kt}\ket{k}. 
\end{equation}

The SDP relaxation we arrive is thus the set of $P(a|x,t)$ such that:
\begin{align}
&P(a|x,t)=\bra{\hat{\psi}(t)}\tilde{M}_{a|x}\ket{\hat{\psi}(t)},\label{P_func_hat_psi}\\
&\sum_{k=0}^{n}p_k=1,\label{norm_coeffs_spec}\\
&\tilde{M}_{a|x}\geq 0, \forall a,x,\label{posi_spec}\\
&\sum_a\tilde{M}_{a|x}=\mbox{diag}(p_0,...,p_{n}),\forall x.\label{norm_spec}
\end{align}
Two things to note here:
\begin{enumerate}
\item This is an SDP relaxation for timelines, not just datasets. This opens many possibilities: we can, for instance, integrate $P(a|x,t)$ in time against a number of functions $\{g_j(t)\}_j$ and characterize the set of feasible results. As we will soon see, this property plays an important role in the design of optimal clocks.
\item This SDP relaxation is, in fact, an SDP characterization. Indeed, let $\{p_k\}_k$, $\{\tilde{M}_{a|x}\}_{a,x}$ satisfy eqs. (\ref{norm_coeffs_spec}), (\ref{posi_spec}) and (\ref{norm_spec}). To prove that the timeline $P(a|x,t)$ defined by eq. (\ref{P_func_hat_psi}) belongs to $S(\E)$, it suffices to define $\Lambda:=\mbox{diag}(p^{1/2}_0,...,p^{1/2}_{m-1})$ and note that $(\C^m,\proj{\psi},\sum_{j}E_j\proj{j},M)$, with $\ket{\psi}=\Lambda\ket{\hat{\psi}}$, $M_{a|x}=\Lambda^{-1}\tilde{M}_{a|x}\Lambda^{-1}$ is a realization of $P$.

\end{enumerate}

The set of \emph{timelines} (not just datasets) $S(\E)$ is therefore SDP representable. This solves, in particular, the problem of characterizing the set ${\cal Q}_{J,A}$ of spin-bounded quantum rotation boxes raised in \cite{Aloy_2024}. In our language, that would be the set of datasets $S(\E)$, with $\E=(-J,-J+1,...,J)$.

\subsubsection{Application: devising an optimal clock}
\label{sec:clocks}
Suppose that we have a system with energies $\E=(E_0,...,E_n)$. We are allowed to prepare whatever initial state $\rho$ we wish and to conduct any POVM $M=(M_a)_a$, thus generating the timeline $P(a|t)$. For time $0\leq t\leq T$, we would like $P(a|t)$ to approximate an ``ideal'' timeline $\hat{P}(t)$. For example, $\hat{P}(t)$ could equal the perfect clock statistics $P_c(a|t)=\Theta(t+1-a)\Theta(a-t)$. In short, we wish to solve the optimization problem:
\begin{align}
&\min_{\rho, M}\int_0^Tdt\sum_a|P(a|t)-\hat{P}(a|t)|^2\nonumber\\
\mbox{such that }&P(a|t)=\tr(e^{-iHt}\rho e^{iHt} M_a),
\end{align}
where $H=\sum_{j=0}^n E_j\proj{j}$.

From the SDP characterization of $S(\E)$ in Section \ref{sec:charact_finite} and its proof of convergence, this problem is equivalent to:
\begin{align}
&\min_{P}\int_0^Tdt\sum_a|P(a|t)-\hat{P}(a|t)|^2\nonumber\\
\mbox{such that }&P(a|t)\in S(\E).
\label{casi_SDP}
\end{align}
In turn, this problem can be cast as an SDP. First, expand eq. (\ref{P_func_hat_psi}) as a sum of imaginary exponentials in $t$:
\begin{equation}
P(a|t)= \sum_{j,k=0}^n\bra{j}\tilde{M}_{a}\ket{k}e^{-i(E_k-E_j)t}.
\end{equation}
Define thus the functions $f_{j,k}(t):=e^{-i(E_k-E_j)t}$ and construct a Gram matrix $F$, with
\begin{equation}
F_{(j,k), (l,m)}=\int_0^Tdt f_{jk}^*(t) f_{lm}(t),
\end{equation}
as well as the vectors $\ket{c^a},\ket{d^a}$, with
\begin{align}
&\ket{c^a}:=\sum_{j,k}\bra{j}\tilde{M}_{a}\ket{k}\ket{(j,k)},\nonumber\\
&\ket{d^a}:=\sum_{j,k}\int_0^Tdt f^*_{jk}(t) \hat{P}(a|t)\ket{(j,k)}.
\end{align}

Then it can be verified that
\begin{align}
&\int_0^Tdt|P(a|t)-\hat{P}(a|t)|^2 \nonumber\\ &=
\int_0^Tdt |\hat{P}(a|t)|^2
-\braket{c^a}{d^a}-\braket{d^a}{c^a}+\bra{c^a}F\ket{c^a}.
\end{align}
The first term on the right-hand side of the equation above is constant; the second term depends linearly on the SDP variables $(\tilde{M}_a)_a$. The third term depends \emph{quadratically} in the SDP variables $(\tilde{M}_a)_a$. In Appendix \ref{app:SDP}, we nonetheless show that constraints of the form 
\begin{equation}
\mu\geq \bra{v}A\ket{v},
\end{equation}
where $A$ is a constant, positive semidefinite matrix and $\ket{v}$ is an affine linear function of the problem variables, can be modeled within SDP. It follows that problem (\ref{casi_SDP}) is equivalent to the SDP:
\begin{align}
\min_{\tilde{M},\vec{p},\vec{\mu}}&\sum_a \mu^a-\braket{c^a}{d^a}-\braket{d^a}{c^a}+\int_0^Tdt|\hat{P}(a|t)|^2\nonumber\\
\mbox{such that }&\tilde{M}_a\geq0,\nonumber\\
&\sum_{a}\tilde{M}_a=\mbox{diag}(p_0,...,p_n),\nonumber\\
&\sum_j p_j=1,\nonumber\\
&\mu^a\geq \bra{c^a}F\ket{c^a},\forall a.
\label{SDP_clock}
\end{align}
Moreover, from the minimizer $(\tilde{M}_a)_a$ it is possible to extract the optimal state and POVM, see the previous section.

For the spectra of the truncated harmonic oscillator and of the hydrogen atom we solved the problem of finding the best approximation of the clock $\check{P}(0|t)=\sum_{k=1}^M\Theta(2k-t)\Theta(t-(2k-1))$. We display the corresponding timelines for $M=5$ and a few different N-values in Figures~\ref{fig:HO} and~\ref{fig:HA}, respectively. We observe that for a harmonic oscillator, as we increase $N$, the clock function is better and better approximated. For the hydrogen atom on the other hand, we were not able to find comparable improvements above $N=6$. Notice also, however, that we were not able to reliably compute the timeline for high $N$ in case of the hydrogen atom (e.g.\ for  $N=20$ like for the harmonic oscillator): this is due to the small energy difference between the higher levels that led to a badly conditioned Gram matrix in the implementation of the last constraint in~\eqref{SDP_clock}, in the sense that many eigenvalues were discarded as being within numerical tolerance of zero.
\begin{figure}[htbp]
    \centering
    \begin{subfigure}{0.49\columnwidth}
        \centering
        \includegraphics[width=\linewidth, trim = 3.2cm 10cm 4.5cm 10.1cm, clip]{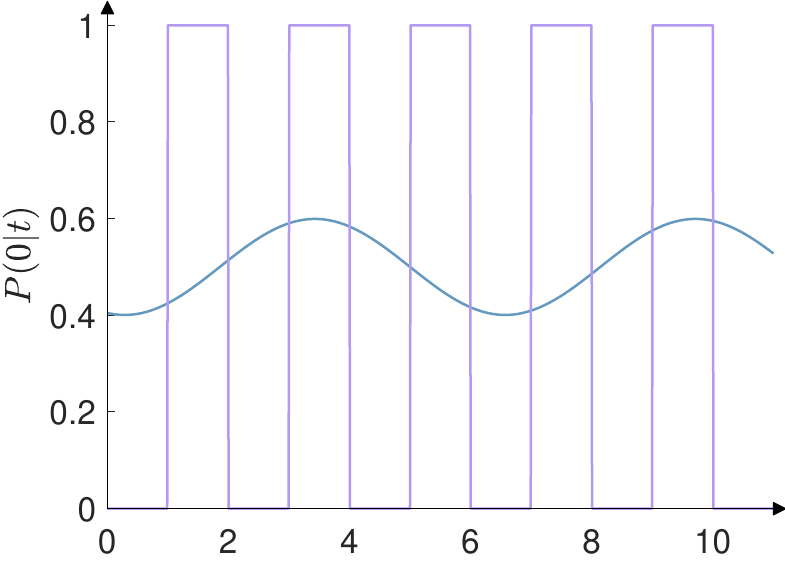}
        \label{fig:sub1}
    \end{subfigure}
    \hfill
    \begin{subfigure}{0.49\columnwidth}
        \centering
        \includegraphics[width=\linewidth, trim = 3.2cm 10cm 4.5cm 10.1cm, clip]{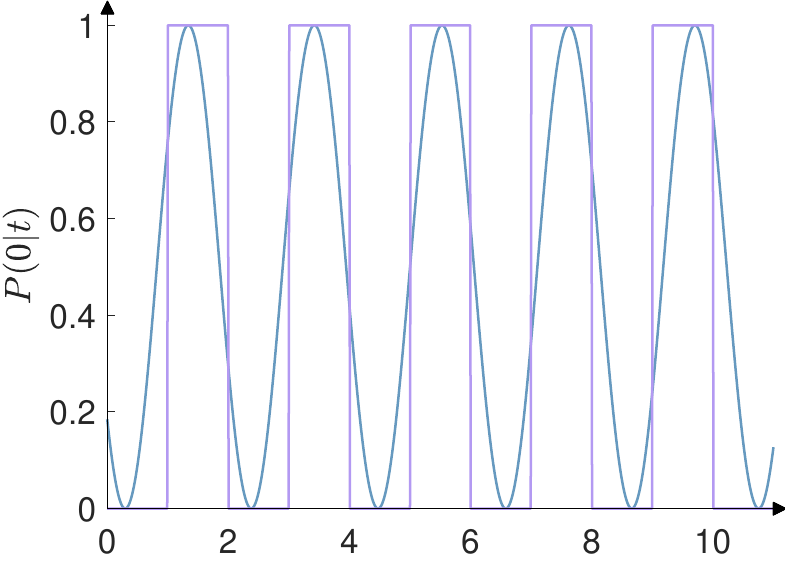}
        \label{fig:sub2}
    \end{subfigure}
    \begin{subfigure}{0.49\columnwidth}
        \centering
        \includegraphics[width=\linewidth, trim = 3.2cm 10cm 4.5cm 10.1cm, clip]{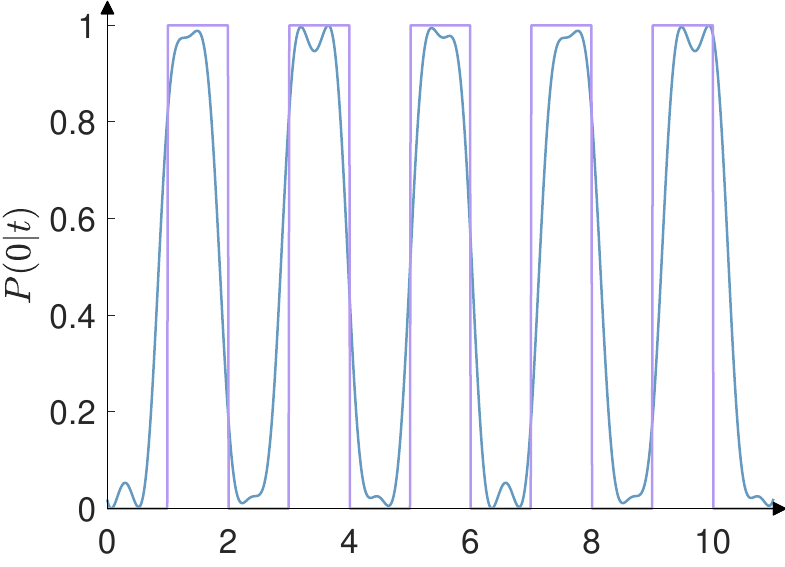}
        \label{fig:sub3}
    \end{subfigure}
    \hfill
    \begin{subfigure}{0.49\columnwidth}
        \centering
        \includegraphics[width=\linewidth, trim = 3.2cm 10cm 4.5cm 10.1cm, clip]{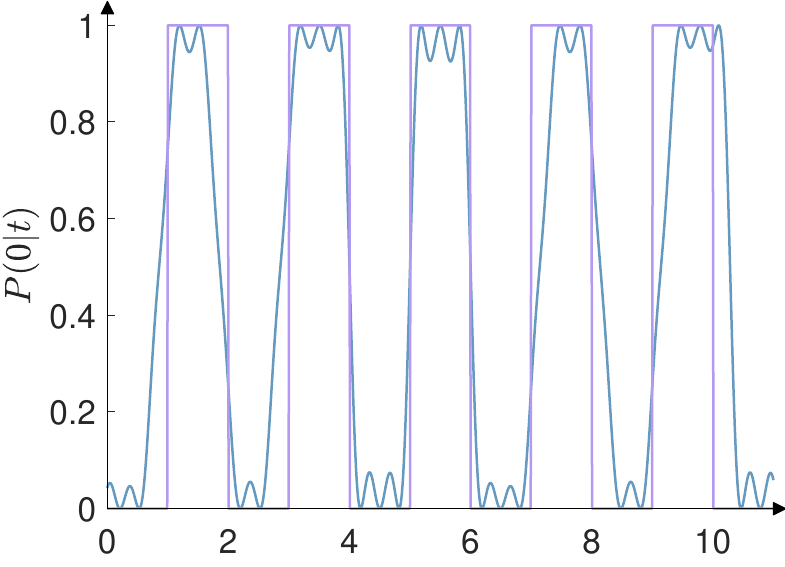}
        \label{fig:sub4}
    \end{subfigure}
    \caption{{\bf Harmonic oscillators as quantum clocks.} The timeline of the clock function with $M=5$ (purple curve) is approximated by a harmonic oscillator with truncated spectrum $(E_0, E_1, \ldots, E_{N-1})$, where $E_k= k$\footnote{This corresponds to setting $\hbar \omega =1$ in the usual harmonic oscillator spectrum, as the constant $\frac{1}{2}$ does not affect the result.}. In blue we display the optimal timeline of the truncated harmonic oscillator 
    of the first $N=2,5,10,20$ energy levels, from left to right, top to bottom. The horizontal axis displays time; the vertical axis, $P(0|t)$.}
    \label{fig:HO}
\end{figure}

\begin{figure}[htbp]
    \centering
    \begin{subfigure}{0.49\columnwidth}
        \centering
        \includegraphics[width=\linewidth, trim = 3.2cm 10cm 4.5cm 10.1cm, clip]{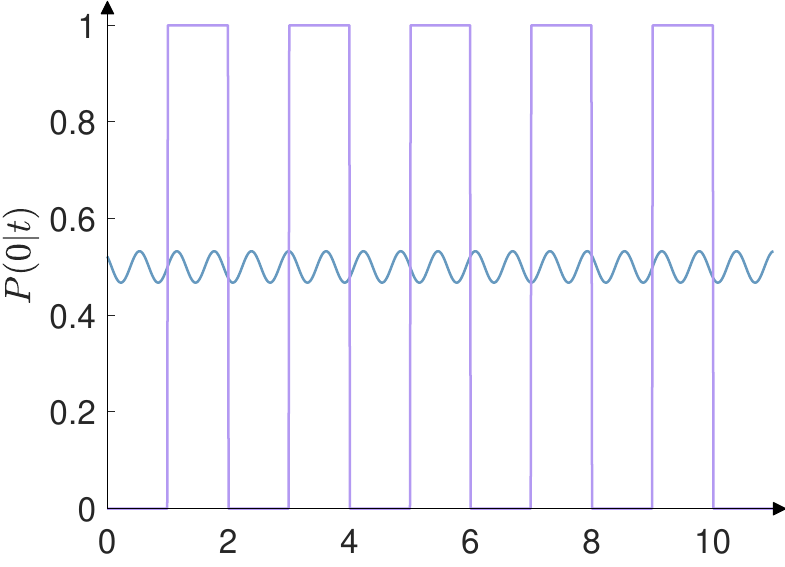}
        \label{fig:sub1}
    \end{subfigure}
    \hfill
    \begin{subfigure}{0.49\columnwidth}
        \centering
        \includegraphics[width=\linewidth, trim = 3.2cm 10cm 4.5cm 10.1cm, clip]{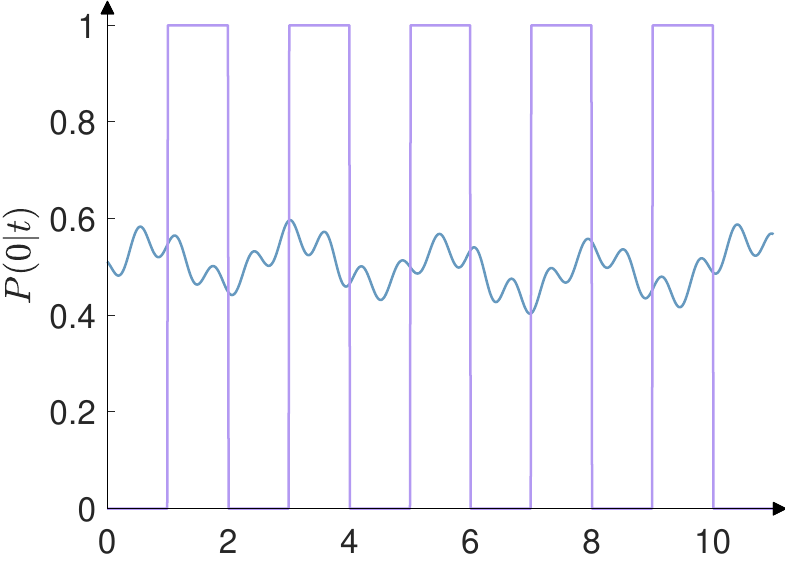}
        \label{fig:sub2}
    \end{subfigure}
    \begin{subfigure}{0.49\columnwidth}
        \centering
        \includegraphics[width=\linewidth, trim = 3.2cm 10cm 4.5cm 10.1cm, clip]{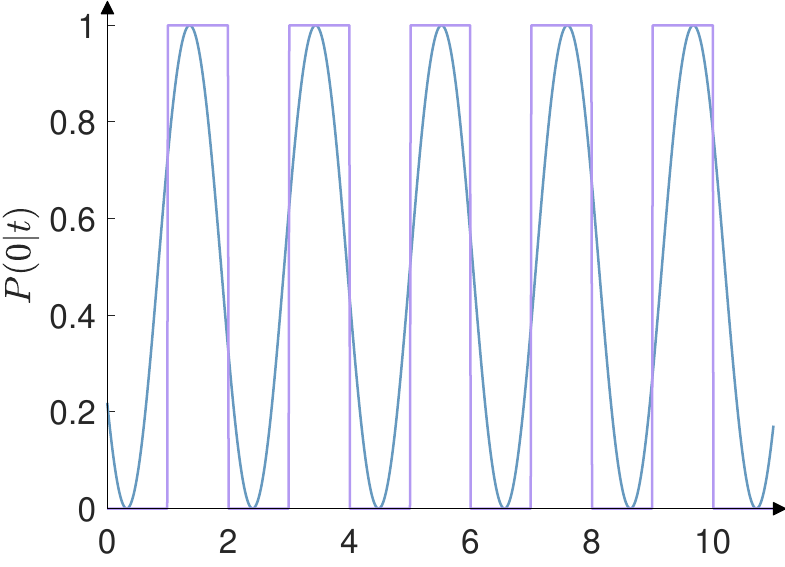}
        \label{fig:sub3}
    \end{subfigure}
    \hfill
    \begin{subfigure}{0.49\columnwidth}
        \centering
        \includegraphics[width=\linewidth, trim = 3.2cm 10cm 4.5cm 10.1cm, clip]{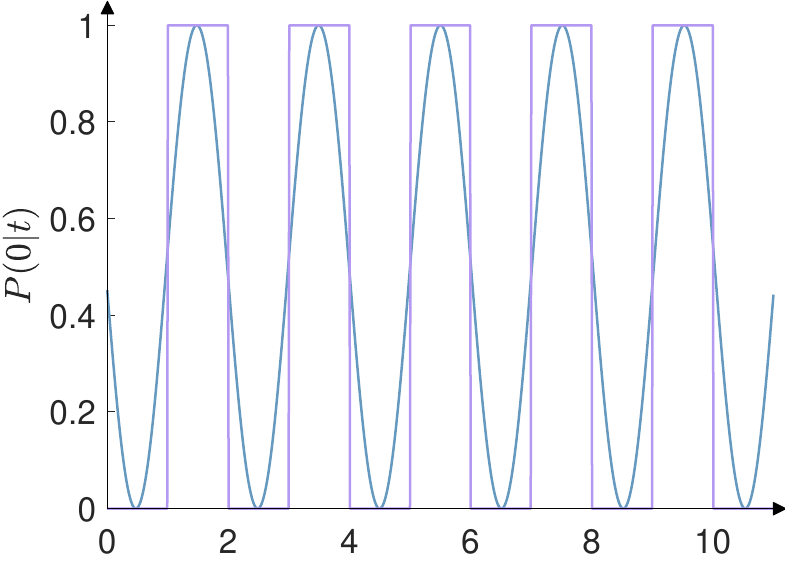}
        \label{fig:sub4}
    \end{subfigure}
    \caption{{\bf Hydrogen atoms as quantum clocks.}  The timeline of the clock function with $M=5$ (purple curve) is approximated by a hydrogen atom with truncated spectrum $(E_1, E_2, \ldots, E_{N})$, where $E_k= \frac{13.6}{k^2}$. In blue we display the optimal timeline of the truncated hydrogen atom for the first 
    $N=2,4,6,8$ energy levels, from left to right, top to bottom. The x-axis displays time, the y-axis $P(0|t)$.}
    \label{fig:HA}
\end{figure}

\subsection{Characterization of $S(E^+)$}
To find an SDP relaxation of $S(E^+)$, we set $\eta=E^+$, choose $m\in \N$ and define $\Delta_m:=\frac{E^+}{m-1}$. The discretization function is:
\begin{equation}
\varphi(E):=\Delta_m\mbox{round}\left(\frac{E}{\Delta_m}\right).    
\end{equation}
The function $\varphi$ maps the real line to $\{\tilde{E}_k:=k\Delta_m:k\in\Z\}$, with resolution $\alpha=\frac{\Delta_m}{2}$. It also sets $\tilde{E}_{m-1}=E^+$. If the initial state $\ket{\psi}$ has energy distribution (with respect to to the original Hamiltonian $H$) supported in $[0,E^+]$, then its energy distribution with respect to $\tilde{H}=\varphi(H)$ will be supported in $\{E_0,...,E_{m-1}\}\subset [0,E^+]$. The conditions on the weights $(p_j)_{j=0}^m$ are the same as in the case of $S(\E)$: namely, normalization and $p_m=0$. 

We arrive at the following SDP relaxation:
$S_m(E^+)$ is the set of timelines $P(a|x,t)$ for which
\begin{align}
&\exists (p_k)_{k=0}^{m-1}:\sum_{k=0}^{m-1}p_k=1,\label{norm_coeffs_E_plus}\\
&\exists \tilde{M}_{a|x}: \tilde{M}_{a|x}\geq 0, \forall a,x,\label{posi_E_plus}\\
&\sum_a\tilde{M}_{a|x}=\mbox{diag}(p_0,...,p_{m-1}),\forall x,\label{norm_E_plus}\\
&\|P(t)-\tilde{P}(t)\|\leq 2\sin\left(\frac{E^+t}{2(m-1)}\right),\nonumber\\
&\mbox{ for } |t|\leq \frac{(m-1)\pi}{E^+},\label{approx_Ps}\\
&\mbox{with }\tilde{P}(a|x,t):=\bra{\hat{\psi}(t)}\tilde{M}_{a|x}\ket{\hat{\psi}(t)},\label{P_func_hat_psi_E_plus}\\
&\ket{\hat{\psi}(t)}:=\sum_{j=0}^{m-1}e^{-ij\Delta_mt}\ket{j}. 
\end{align}
This is once more a set of timelines, although condition (\ref{approx_Ps}) restricts the (continuous) range of times where the SDP relaxation is sound. Note that, by construction, $\tilde{P}\in S(E^+)$. For measurement times $\vec{t}=(t_1,...,t_N)$, with $\texttt{t}:=\max_j|t_j|< \frac{(m-1)\pi}{E^+}$, we thus have that the corresponding set of datasets $S_m(E^+)$ satisfies:
\begin{align}
&S(E^+)\subset S_m(E^+),\nonumber\\
&S_m(E^+)\subset_{f_m} S(E^+),
\end{align}
with
\begin{equation}
f_m=2\sin\left(\frac{E^+\texttt{t}}{2(m-1)}\right).
\end{equation}

\subsection{Characterization of $A(\bar{E})$}
\label{sec:average}
Choose $m\in\N$, with
\begin{equation}
m\geq \max 2\bar{E}\texttt{t},
\label{minimum_m}
\end{equation}
and set
\begin{equation}
\eta=\left(\frac{\bar{E}}{4\texttt{t}^2}\right)^{1/3}m^{2/3}.
\label{eta_optim}
\end{equation}
This time, our discretization function is
\begin{equation}
\varphi_-(E):=\bar{\Delta}_m\left\lfloor\frac{E}{\bar{\Delta}_m}\right\rfloor,
\label{floor_disct}
\end{equation}
with $\bar{\Delta}_m:=\frac{\eta}{m}$. This function has the property 
\begin{equation}
0\leq E-\varphi_-(E)\leq \bar{\Delta}_m,\forall E\geq 0.
\label{decreasing_energy}
\end{equation}
In this relaxation, $\tilde{E}_m=\eta$. Also, condition (\ref{decreasing_energy}) implies that the weights $(p_k)_k$ of the state $\ket{\psi}$ will satisfy
\begin{equation}
\sum_{k=0}^mp_k\tilde{E}_k\leq \bar{E}.    
\end{equation}
This time we cannot eliminate $\gamma$ in eq. (\ref{tilde_POVMs_norm}). We approximate the set of decay matrices by demanding $\gamma$ to be positive semidefinite and have identical diagonal entries. 

The final SDP relaxation of $A(\bar{E})$ is hence $A_m(\bar{E})$, the set of all datasets $P$ (with times $t_1,...,t_N$) such that there exist a normalized distribution $(p_j)_{j=0}^m$, an $N\times N$ complex matrix $\gamma$, and $(m+N)\times (m+N)$ matrices $\{\tilde{M}_{a|x}\}_{a,x}$ satisfying:
\begin{align}
&\sum_{j=0}^mp_{j}\frac{\eta j}{m}\leq \bar{E},\label{aver_spec}\\
&\tilde{M}_{a|x}\geq 0, \forall a,x,\\
&\sum_a\tilde{M}_{a|x}=\left(\begin{array}{cc}\mbox{diag}(p_0,...,p_{m-1})&0\\0&\gamma\end{array}\right),\forall x,\\
&\gamma\geq 0,\gamma_{kk}=p_m,\forall k,\\
&\|P(t_j)-p(t_j,\tilde{M},\E)\|\leq 2\sin\left(\frac{\eta |t_j|}{m}\right),\forall j.\label{approx_aver_E}
\end{align}
This is by construction a relaxation of $A(\bar{E})$, i.e., $A(\bar{E})\subset A_m(\bar{E})$. In Appendix \ref{app:average_and_var_constraints} we further prove that
\begin{align}
&A_m(\bar{E})\subset_{g_m} A(\bar{E}),
\end{align}
with
\begin{equation}
g_m=2\left(\frac{1}{4^{1/3}}+2^{1/3}\right)\left(\frac{\bar{E}\texttt{t}}{m}\right)^{1/3}.
\label{error_aver_hier}
\end{equation}

\subsection{Characterization of $U(\Delta E)$}
Given a pure-state feasible realization $(\H, \ket{\psi},H, M)$ of $P\in U(\Delta E)$, we set the origin of energies so that $\bra{\psi}H\ket{\psi}=0$. This implies that
\begin{equation}
\bra{\psi}H^2\ket{\psi}\leq (\Delta E)^2.
\label{second_moment_bounded}
\end{equation}
Choose $\tilde{m}\in\N$, with
\begin{equation}
\tilde{m}\geq \Delta E \texttt{t},
\label{min_m_var}
\end{equation}
and set
\begin{equation}
\eta=\sqrt{\frac{\Delta E\tilde{m}}{\texttt{t}}}.
\label{def_eta_var}
\end{equation}
The discretization function is
\begin{equation}
\varphi_{><}(E)=\sign(E)\frac{\eta}{\tilde{m}}\left\lfloor\frac{|E|\tilde{m}}{\eta}\right\rfloor.
\end{equation}
The spectrum of the new Hamiltonian is thus $\frac{\eta}{\tilde{m}}\Z$. We cut off all energies $E$, with $|E|\geq \eta$, which implies that $m=2\tilde{m}$. The function $\varphi_{><}$, of resolution $\alpha=\frac{\eta}{\tilde{m}}$, has the effect of shrinking the energies towards $0$; from eq. (\ref{second_moment_bounded}) it thus follows that the weights $(p_k)_k$ of $\ket{\psi}$ under the new Hamiltonian $\varphi_{><}(H)$ satisfy 
\begin{equation}
\sum_{k=0}^mp_k\tilde{E}^2_k\leq (\Delta E)^2.
\end{equation}
The resulting SDP relaxation for $U(\Delta E)$ is $U_{\tilde{m}}(\Delta E)$, the set of all datasets $P$ (with times $t_1,...,t_N$) such that there exist a normalized distribution $(p_j)_{j=0}^{\tilde{m}}$, an $N\times N$ complex matrix $\gamma$, and $(2{\tilde{m}}+N)\times (2{\tilde{m}}+N)$ matrices $\{\tilde{M}_{a|x}\}_{a,x}$ satisfying:
\begin{align}
&\sum_{j=-\tilde{m}+1}^{\tilde{m}}p_{j}\left(\frac{\eta j}{\tilde{m}}\right)^2\leq (\Delta E)^2,\\
\label{var_spec}
&\tilde{M}_{a|x}\geq 0, \forall a,x,\\
&\sum_a\tilde{M}_{a|x}=\left(\begin{array}{cc}\mbox{diag}(p_{-\tilde{m}+1},...,p_{\tilde{m}-1})&0\\0&\gamma\end{array}\right),\forall x,\\
&\gamma\geq 0, \gamma_{kk}=p_{\tilde{m}},\forall k,\\
&\|P(t_j)-p(t_j,\tilde{M},\E)\|\leq 2\sin\left(\frac{\eta |t_j|}{\tilde{m}}\right),\forall j.\label{approx_var}
\end{align}
In Appendix \ref{app:average_and_var_constraints}, we prove that
\begin{align}
&U_{\tilde{m}}(\Delta E)\subset_{s_{\tilde{m}}} U(\Delta E),
\end{align}
with
\begin{equation}
s_{\tilde{m}}=2\sqrt{\frac{\Delta E\texttt{t}}{\tilde{m}}}.
\label{error_var_hier}
\end{equation}

\subsection{Characterization of $S(\E;\epsilon)$}
\label{sec:charact_soft}
To characterize $S(\E;\epsilon)$, with $\E=(E_0,...,E_{m-1})$, choose $\eta=E_{m-1}$. There is no need for a discretization function. Apart from normalization and positivity, the weights must satisfy the soft energy condition
\begin{equation}
p_m\leq \epsilon.
\label{p_epsilon}
\end{equation}
We arrive at the following convex relaxation:
\begin{align}
&P(t_j)=p(t_j,\tilde{M},\E),\label{begin_soft}\nonumber\\
&p_m\leq \epsilon,\\
&\tilde{M}_{a|x}\geq 0, \forall a,x,\\
&\sum_a\tilde{M}_{a|x}=\left(\begin{array}{cc}\mbox{diag}(p_0,...,p_{m-1})&0\\0&\gamma\end{array}\right),\forall x,\label{tilde_POVMs_norm}\\
&\gamma\in G(\vec{t}), p_m=\gamma_{11}.
\label{end_soft}
\end{align}
Here, $G(\vec{t})$ denotes the closure of the cone of square matrices $\gamma$ satisfying
\begin{equation}
\gamma_{jk}=\int_{-\infty}^\infty dE e^{-iE(t_k-t_j)}\rho^+(E),j,k=1,...,N,
\label{def_G}
\end{equation}
for some non-normalized measure $\rho^+(E) dE$. Intuitively, for any $\gamma\in G(\vec{t})$, each entry $\gamma_{jk}$ represents the overlap $\bra{\psi^+}e^{iH(t_j-t_k)}\ket{\psi^+}$, where $\ket{\psi^+}$ is the high-energy part of the initial wave-function $\ket{\psi}$. In this regard, the reader might be surprised to see that the region of integration of the above equation to include energy values below $E_{m-1}$. However, by Lemma \ref{lemma_indent_Gs} in Appendix \ref{app:overlaps} this set of matrices is independent of where one sets the lower limit of integration.

As proven in Appendix \ref{app:soft_constraints} (Lemma \ref{lemma_high_energy}), for any dataset $P$ satisfying eqs. (\ref{begin_soft}-\ref{end_soft}), it holds that $P\in \overline{S(\E;\epsilon;\vec{t})}$. The discontinuity of the set $S(\E;\epsilon,\vec{t})$ in $\vec{t}$ manifests on the fact that the set of decay matrices $G(\vec{t})$ is similarly discontinuous in $\vec{t}$, see Appendix \ref{app:disco}. 

It is thus crucial to characterize the set of decay matrices $G(\vec{t})$ exactly, so let us quickly review what we know about it. In the characterizations of $A(\bar{E})$ and $U(\Delta E)$, we exploited the fact that $G(\vec{t})\subset C_N$, where $C_N$ denotes the cone of correlation matrices, i.e., the set of positive semidefinite matrices $\gamma$ with identical diagonal entries. In Appendix \ref{app:overlaps}, we further show that, for some time vectors $\vec{t}$, the set $G(\vec{t})$ is SDP representable. This is, in fact, the case if, for some $\Delta>0$, it holds that $t_j=k_j\Delta$, for $j=1,...,N$, where $k_j\in\mathbb{N}$. Similarly, for $N=3$, if the real numbers $t_1-t_3,t_2- t_3$ are not congruent (namely, if no non-zero linear combination thereof with rational coefficients vanishes), then the set $G(\vec{t})$ coincides with the SDP representable set $C_3$. In these two cases, the conditions of Lemma \ref{lemma_high_energy} define an SDP to characterize $S(\vec{E};E^+,\epsilon)$.

In general, though, $G(\vec{t})$ needs to be characterized through hierarchies of SDP relaxations. Namely, there exist SDP representable sets $(G^k(\vec{t}))_{k\in\N}$ such that $G^1(\vec{t})\supset G^2(\vec{t})\supset...\supset G(\vec{t})$, with $\lim_{k\to\infty}G^k(\vec{t})=G(\vec{t})$. Each set $G^k(\vec{t})$ corresponds to the Lasserre-Parrilo SDP relaxation of order $k$ \cite{lasserre, parrilo} of the set of measures on complex variables subject to certain polynomial constraints, see Appendix \ref{app:overlaps} for details. 

If the real numbers $\{t_j-t_N\}_{j=1}^{N-1}$ are non-congruent, then the resulting set of decay matrices is known in the literature on convex optimization as $\mbox{CUT}^\infty_N$ \cite{lifting2} (note that the set of decay matrices does not depend on the exact values of $t_1,...,t_N$, as long as the non-congruency condition holds). In \cite{lifting} it is proven that $\mbox{CUT}^\infty_N\subsetneq C_N$, for all $N\geq 4$; in particular, $\mbox{CUT}^\infty_N$ satisfies a number of linear inequalities, violated by the larger set $C_N$. It is easy to see that $\mbox{CUT}^\infty_N\supset G(\vec{t})$, for any vector of measurement times $\vec{t}\in \R^N$, see Appendix \ref{app:overlaps}. It follows that the set $\tilde{S}(\E;\epsilon;\vec{t})$, defined in eq. (\ref{def_tilde_S_epsilon}), demands the requirement $\gamma\in \mbox{CUT}^\infty_N$.

Coming back to the original sets $S(\E;\epsilon)$, for their characterization there are two possibilities:
\begin{enumerate}
    \item $\vec{t}$ is such that $G(\vec{t})$ admits an SDP representation. In that case, $S(\E;\epsilon)$ can be characterized by a single SDP, given by eqs. (\ref{begin_soft}-\ref{end_soft}).
    \item $\vec{t}$ is such that $G(\vec{t})$ admits a complete hierarchy of SDP relaxations $(G^k(\vec{t}))_k$. In that case, $S(\E;\epsilon)$ can be characterized by a hierarchy of SDP relaxations $(S^k(\E;\epsilon))_k$, defined through equations eqs. (\ref{begin_soft}-\ref{end_soft}), but replacing $G(\vec{t})$ by $G^k(\vec{t})$.
\end{enumerate}

With regards to convergence, we prove that
\begin{equation}
S^k(\E;E^+,\epsilon)\subset_{h_k} S(\E;E^+,\epsilon),
\label{conv_soft_hier}
\end{equation}
with $h_k=O\left(\frac{N^2}{k^2}\right)$ in the limit $k\gg 1$; see Appendix \ref{app:soft_constraints} for a proof and the exact expression of $h_k$. 

\subsection{Characterization of $S(E^+;\epsilon)$}
\label{sec:charact_soft_continuous}
Choose $\eta=E^+$ and use the `floor' discretization function $\varphi_-$ in eq. (\ref{floor_disct}), with $\Delta_m=\frac{E^+}{m-1}$, which implies $\tilde{E}_{m-1}=E^+$. Starting from a feasible pure-state realization $(\H,\psi,H,M)$ of a dataset $P\in S(E^+;\epsilon)$, we find that the energy weights $(p_k)_k$ of $\ket{\psi}$ according to $\tilde{H}=\varphi_-(H)$ will satisfy $p_m\leq \epsilon$. If $G(\vec{t})$ is SDP-representable, we end up with the SDP $S_m(E^+;\epsilon)$, which we define as follows: a dataset $P$ belongs to $S_m(E^+;\epsilon)$ iff
\begin{align}
&p_m\leq \epsilon,\sum_kp_k=1,\\
&\tilde{M}_{a|x}\geq 0, \forall a,x,\\
&\sum_a\tilde{M}_{a|x}=\left(\begin{array}{cc}\mbox{diag}(p_0,...,p_{m-1})&0\\0&\gamma\end{array}\right),\forall x,\\
&\gamma\in G(\vec{t}),\\
&\|P(t_j)-p(t_j,\tilde{M},\E)\|\leq 2\sin\left(\frac{E^+|t_j|}{m-1}\right),\forall j.
\end{align}

If, on the contrary, $G(\vec{t})$ admits the complete hierarchy of SDP relaxations $(G^k(\vec{t}))_k$, then one defines the SDP set $S^k_m(E^+;\epsilon)$ replacing $G(\vec{t})$ by $G^k(\vec{t})$ in the above equation.

For $m\geq \frac{E^+\texttt{t}}{2\pi}$, the SDP representable sets $S_m(E^+;\epsilon,\vec{t})$, $S^k_m(E^+;\epsilon,\vec{t})$ are thus relaxations of $S(E^+;\epsilon,\vec{t})$. Moreover, they satisfy:
\begin{align}
S_m(E^+;\epsilon,\vec{t})\subset_{\tilde{h}_m} S(E^+;\epsilon,\vec{t}),\nonumber\\
S_m(E^+;\epsilon,\vec{t})\subset_{h_k+\tilde{h}_m} S^k_m(E^+;\epsilon,\vec{t}),
\end{align}
with
\begin{equation}
\tilde{h}_m:=2\sin\left(\frac{E^+\texttt{t}}{m-1}\right).
\end{equation}

\section{Self-testing datasets}
\label{sec:self_testing}
Noisy datasets contain non-trivial information about the energy distribution of the considered quantum state. The physics underlying an experimental dataset $P$ is, however, much richer: we can identify it with the realization $(\H,\rho,H,M)$ of said dataset. Could certain datasets, under suitable energy constraints, single out a specific realization? Generalizing space-like terminology, one may call such datasets \emph{self-testing}.


Self-testing is a well-studied phenomenon in quantum nonlocality \cite{self_testing_review}, where certain space-like correlations are known to certify the preparation and measurement devices generating them. In this regard, it is possible to self-test any bipartite pure quantum state \cite{coladangelo2017all}, any pure multi-partite qubit state \cite{balanzójuandó2024allpuremultipartite} and any set of projective measurements \cite{self_testing_all_meas}. By analogy with the space-like case, we say that a dataset $P$ \emph{self-tests} the realization $(\bar{\H},\bar{\rho},\bar{H},\bar{M})$ under the energy constraint $T$ if, for any other realization $(\H,\rho,H,M)$ generating $P$, compatible with the energy constraint, there exists an isometry $V:\H\to\bar{\H}\otimes \H_{\text{junk}}$ and a quantum state $\gamma_{\text{junk}}\in B(\H_{\text{junk}})$ such that
\begin{align}
&VM_ae^{-iHt}\rho e^{iHt}V^\dagger\nonumber\\
&-\left(\bar{M}_ae^{-i\bar{H}t}\proj{\bar{\psi}} e^{i\bar{H}t}\right)\otimes \gamma_{\text{junk}}=0
\label{perfect_ST}
\end{align}
holds for all $t$. The dataset $P$ \emph{robustly self-tests} the realization $(\bar{\H},\bar{\rho},\bar{H},\bar{M})$ under the energy constraint $T$ if there exists a continuous error function $s:\R^+\times\R\to\R^+$, with $\lim_{\delta\to 0^+}s(\delta, t)=0$, such that, for any $\tilde{P}\in T$ fitting the dataset $(P,\delta)$, with feasible realization $(\H,\rho,H,M)$, there exists an isometry $V$ such that the left-hand side of the equation above is bounded in trace norm by $s(\delta,t)$.

The reader might wonder how one can guarantee that $\gamma_{\text{junk}}$ in eq. (\ref{perfect_ST}) does not depend on time. How could one assert this, as the measurement device is not acting on the system $\H_{\text{junk}}$? However, in quantum mechanics time evolution demands energy; if generating $P$ already requires saturating the considered energy constraint, then the system holding the junk state cannot have a non-zero Hamiltonian $\bar{H}'$: otherwise, the joint system with Hamiltonian $\bar{H}+\bar{H}'$ would violate the energy assumption.

In the following, we show that, under the hard energy support constraint, any finite-dimensional Hilbert space $\H$ with $\mbox{dim}(\H)\geq 2$ can be self-tested. For the remainder of this section, we thus consider datasets in $S(E^+)$.

For $N\in\N, N\geq 2$, consider the dataset $\mathbf{D}_N\in S(E^+)$
\begin{equation}
\mathbf{D}_N(0|t_0)=1, \mathbf{D}_N(0|t_j)=0,j=1,...,N-1 ,
\label{clock_datasets}
\end{equation}
with $t_j:=\frac{2\pi(N-1)j}{NE^+}$. This dataset can be seen to admit the realization $(\C^N,\bar{H}^N,\proj{\bar{\psi}^N},\bar{M}^N)$, with
\begin{align}
&\bar{H}^N:=\frac{E^+}{N-1}\sum_{k=0}^{N-1}k\proj{k},\ket{\bar{\psi}^N}:=\frac{1}{\sqrt{N}}\sum_{k=0}^{N-1}\ket{k},\nonumber\\
&\bar{M}^N_1:=\proj{\bar{\psi}^N},\bar{M}^N_2:=\id_N-\proj{\bar{\psi}^N}.
\label{reference_realization}
\end{align}
In Figure~\ref{fig:selftest} we display $\mathbf{D}_N$ and the timelines of these realizations for $N=2,3,4,5$.
\begin{figure}[htbp]
    \centering
    \begin{subfigure}{0.49\columnwidth}
        \centering
        \includegraphics[width=\linewidth, trim = 3.2cm 10cm 4.5cm 10.1cm, clip]{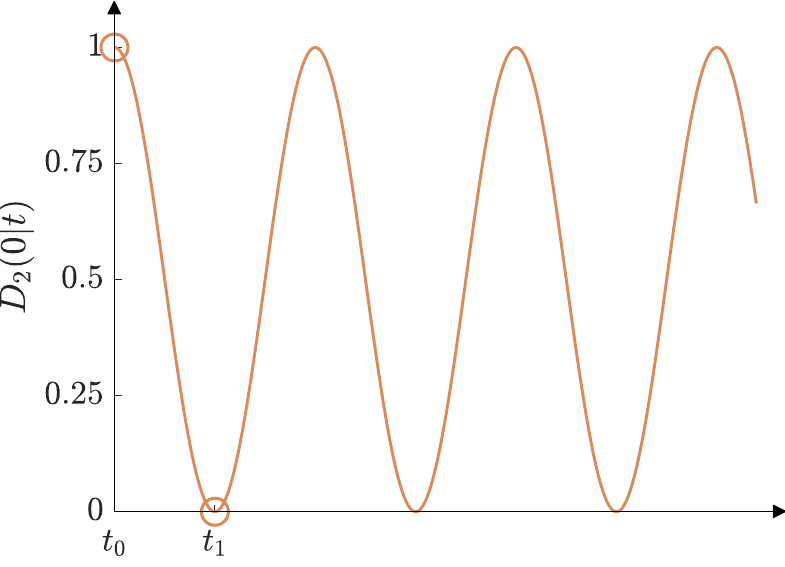}
        \label{fig:sub1}
    \end{subfigure}
    \hfill
    \begin{subfigure}{0.49\columnwidth}
        \centering
        \includegraphics[width=\linewidth, trim = 3.2cm 10cm 4.5cm 10.1cm, clip]{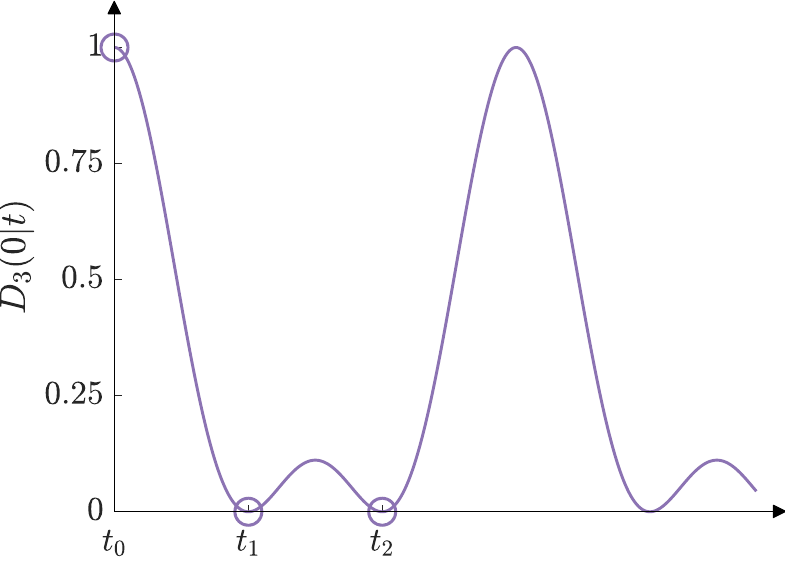}
        \label{fig:sub2}
    \end{subfigure}
    \begin{subfigure}{0.49\columnwidth}
        \centering
        \includegraphics[width=\linewidth, trim = 3.2cm 10cm 4.5cm 10.1cm, clip]{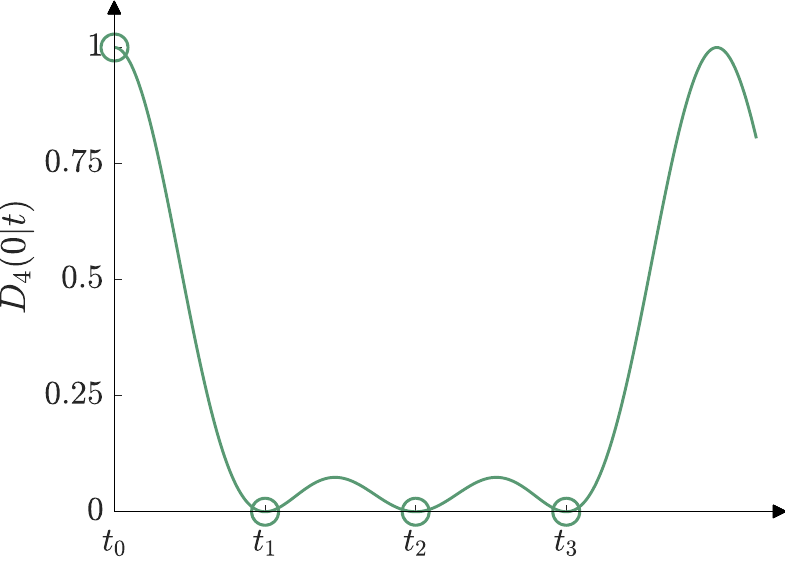}
        \label{fig:sub3}
    \end{subfigure}
    \hfill
    \begin{subfigure}{0.49\columnwidth}
        \centering
        \includegraphics[width=\linewidth, trim = 3.2cm 10cm 4.5cm 10.1cm, clip]{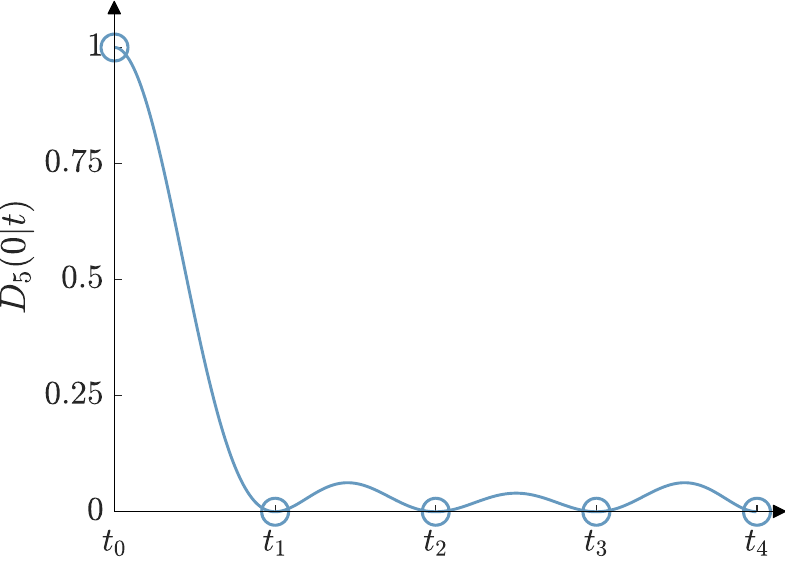}
        \label{fig:sub4}
    \end{subfigure}
    \caption{{\bf Datasets $\mathbf{D}_N$ and timelines of realizations \eqref{reference_realization}.} The x-axis displays the time, the y-axis the probability to observe outcome $0$, the units are chosen the same for all four plots, where $E^+=1$. From left to right, top to bottom we consider $N=2,3,4,5$.}
    \label{fig:selftest}
\end{figure}
In Appendix \ref{app:ST} we prove that, under the hard energy support condition, $\mathbf{D}_N$ robustly self-tests $(\C^N,\bar{H}^N,\proj{\bar{\psi}^N},\bar{M}^N)$ with error function $s(\delta, t)$ satisfying
\begin{align}
&s(\delta,t)=O\left(\delta^{1/64}\right)+O\left(\delta^{1/8}t\right).
\end{align}
This result relies on a technical lemma that will be of use later, so we include it here.
\begin{lemma}
\label{lemma_density}
Let $N\in\N$, let $\ket{\psi}$ be a normalized state and let $\sigma(E)dE$ be its energy density with respect to some Hamiltonian $H$. Suppose that the hard support constraint holds, i.e., $\mbox{supp}(\sigma(E))\subset [0,E^+]$ and that
\begin{equation}
\bra{\psi}e^{iH(t_0-t_k)}\ket{\psi}=0, k=1,...,N-1,
\end{equation}
with
\begin{equation}
t_k:=\frac{2\pi (N-1)k}{N E^+}.
\end{equation}
Then, 
\begin{equation}   \operatorname{supp}(\sigma{(E)}) \subseteq \left\{\frac{E^+k}{N-1}:k=0,...,N-1\right\}.
\end{equation}
\end{lemma}
We refer to Appendix~\ref{app:ST} for an error-tolerant formulation of the lemma and its proof.

Note that the timeline generated by $(\C^2,\bar{H}^2,\proj{\bar{\psi}^2},\bar{M}^2)$ corresponds to Rabi oscillations. In this regard, our result implies that, under the hard-energy support constraint, it suffices to generate a two-point dataset to certify high visibility.

\section{Extrapolation of quantum measurement data}
\label{sec:extrapolation}
In the previous sections we have characterized relevant sets of time-like quantum correlations through semidefinite programming relaxations and formulated self-testing statements in the timed demolition measurements framework. Such constructions can be regarded as time-like analogs of standard results in quantum nonlocality, where self-testing was first conceived \cite{self_testing_review} and the set of correlations is bounded through hierarchies of SDP relaxations \cite{npa, npa2, Berta_2016}.

The timed framework, though, suggests a line of research with no space-like counterpart: the extrapolation of past system behavior. In this section, we will: (A) introduce and motivate the problem of extrapolating quantum measurement data; (B) explain how to solve it with the tools developed so far; (C) discuss how predictable physical systems can be; and (D) prove that the hard energy support constraint leads to two unexpected extrapolation phenomena.

\subsection{The extrapolation problem}
\label{sec:extrapol_framework}
Informally, in an extrapolation problem one seeks to bound the future expectation value of a quantum observable, given the past statistical behavior of the considered quantum system.

Formally, an extrapolation problem, denoted $(\tilde{P}, \delta, f)$, is specified by an objective function $f:{\cal D}\to\R$, which we assume linear, and a noisy dataset $(\tilde{P},\delta)$. It entails solving the two optimization problems
\begin{align}
\mu^{+/-}:=&\max/\min \sum_{a,x}f_{a,x}P(a|x,\tau)\nonumber\\
\mbox{such that }&P\mbox{ fits }(\tilde{P},\delta),\nonumber\\
&P\in T,
\label{extrapol_optim}
\end{align}
where the set $T$ of datasets reflects our assumption on the system's energy. The interpretation is that any $P\in T$ that coincides with $\tilde{P}$ at times $t_1,...,t_N$ up to error $\delta$ will satisfy $\mu^-\leq f(P(\tau))\leq \mu^+$, see Figure~\ref{fig:extra} for an illustration. The solution $[\mu^-,\mu^+]$ of the extrapolation problem is \emph{trivial} if $\mu^-=\min_{P\in{\cal D}}f(P)$, $\mu^+=\max_{P\in{\cal D}}f(P)$. In that case, knowledge of the noisy dataset does not increase predictability. 
\begin{figure}
     \centering  
   \includegraphics[trim={2.6cm 8.8cm 7cm 0.5cm},clip,width=\linewidth]{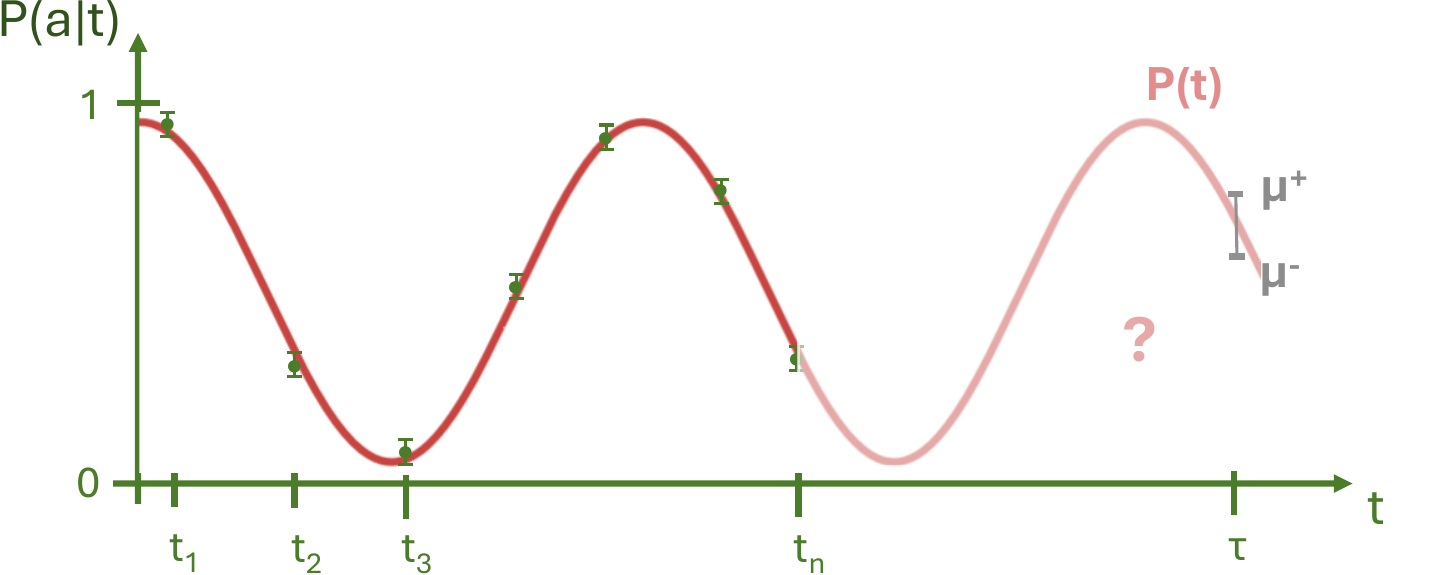}
     \caption{\textbf{Extrapolation problem.} We display the simplest case where we consider only one two-outcome measurement $M$. In this case the pairs $(\tilde{P}(a|t_i), \delta_i)$ for $i=1, \ldots, n$ (displayed by the data points and error bars in the illustration) are sufficient to represent each noisy datapoint. Our goal in this example is to bound the range of possible ${P}(a|\tau)$ (see $\mu^-,\mu^+$). The red curve $P(t)$ displays one example of a timeline that fits the dataset. The part of the illustration where extrapolation is required is coloured palely.}
     \label{fig:extra}
\end{figure}


A main motivation to study the extrapolation problem is that general extrapolation techniques can boost the range of applicability of numerical simulations of complex quantum systems. Consider, for instance, the problem of predicting the magnetization per site ${\cal M}$ at different times of a 1D spin chain subject to a local Hamiltonian evolution. The standard approach to tackle this problem consists of discretizing time and approximating the many-body quantum state at each time-step by a matrix product state (MPS) \cite{MPS} of low bond dimension \cite{MPS_evolution}. This method not only provides a guess on the magnetization-per-site at each time, but also rigorous error bars. A completely different method, recently developed by one of us \cite{differential_paper}, returns a similar output. Very often, the ever-growing error bars suddenly explode at a certain point, signalling, perhaps, that the MPS approximation breaks. From that point on, further predictions become trivial. 

In this class of problems, the energy of the state can be computed exactly, and the range of energy values of the Hamiltonian can be easily estimated. 
The magnetization-per-site operator ${\cal M}$ satisfies the inequalities $-\id\leq{\cal M}\leq \id$, so $M_a:=\frac{\id+(-1)^a{\cal M}}{2}$, with $a=1,2$, defines a POVM. Thus, there is a one-to-one correspondence between two-outcome datasets and time estimates of the magnetization-per-site (extendible to any bounded observable). The extrapolation of the corresponding dataset for some future time $\tau$ would therefore provide us with non-trivial estimates of $\langle {\cal M}\rangle$ at time $\tau$. In this context, our general extrapolation methods for systems under an average energy constraint (and perhaps extra constraints, such as Hamiltonian boundedness) can help us make non-trivial predictions for the average value of any bounded observable of interest beyond the breaking time at which standard numerical methods cease to be useful. 

We will dedicate the rest of this section to anticipate the kind of phenomena one might observe in the resolution of an extrapolation problem.

Let $\tau$ be a desired extrapolation time. If it holds that, for any $Q\in{\cal D}$, there exists a timeline $P(t)$ fitting $(\tilde{P},\delta)$ such that $P(a|x,\tau)=Q(a|x)$, then we say that the dataset $(\tilde{P},\delta)$ exhibits a \emph{Knightian uncertainty} at time $\tau$ \cite{uncertainty}. Under Knightian uncertainty, all possible extrapolation problems are trivial, i.e., the noisy dataset offers no help in making future predictions. The complete opposite of Knightian uncertainty is \emph{full certainty}: the situation that, for some $Q\in{\cal D}$, all timelines $P(t)$ fitting $(\tilde{P},\delta)$ satisfy $P(a|x,\tau)=Q(a|x)$. 

Due to self-testing, the datasets $(\mathbf{D}_N,\delta)$, with $\mathbf{D}_N$ as defined in eq. (\ref{clock_datasets}), provide, in the limit $\delta\to 0$, an example of full certainty at \emph{all} times $t$. Similarly, the empty dataset (with no datapoints) would predict Knightian uncertainty at all times. Beyond such simple examples, it is possible that certain datasets exhibit very complicated patterns of full certainty and Knightian uncertainty. In this regard, we anticipate the following (for the time being, hypothetical) extrapolation effects:
\begin{enumerate}
\item Informally, an \emph{aha! dataset} is a dataset for a single measurement that improves the predictability of another dataset referring to another, independent measurement on the same system. More concretely, consider a dataset $\mathbf{D}_1=(\tilde{P}_1,\delta_1)$ consisting of a single measurement, i.e., $\tilde{P}_1=(\tilde{P}_1(a|t_j))_{a,j}$, that exhibits a Knightian uncertainty at $\tau>\max_j t_j$. Let $\mathbf{D}_2=(\tilde{P}_2,\delta_2)$ represent measurement data from another, independent measurement. It could well be that knowledge of the full dataset $\left(\tilde{P},\delta \right)$, defined through $\tilde{P}(a|x,t_j)=\tilde{P}_x(a|t_j)$, $\delta(x,j)=\delta_x(j)$, allows us to predict the future distribution $\{P_1(a|\tau)\}_a$ with certainty. If so, then we say that $\mathbf{D}_2$ is an aha! dataset for $\mathbf{D}_1$, see Figure~\ref{fig:ahaDs} for an illustration. 

\begin{figure}
    \centering
    \includegraphics[width=\linewidth]{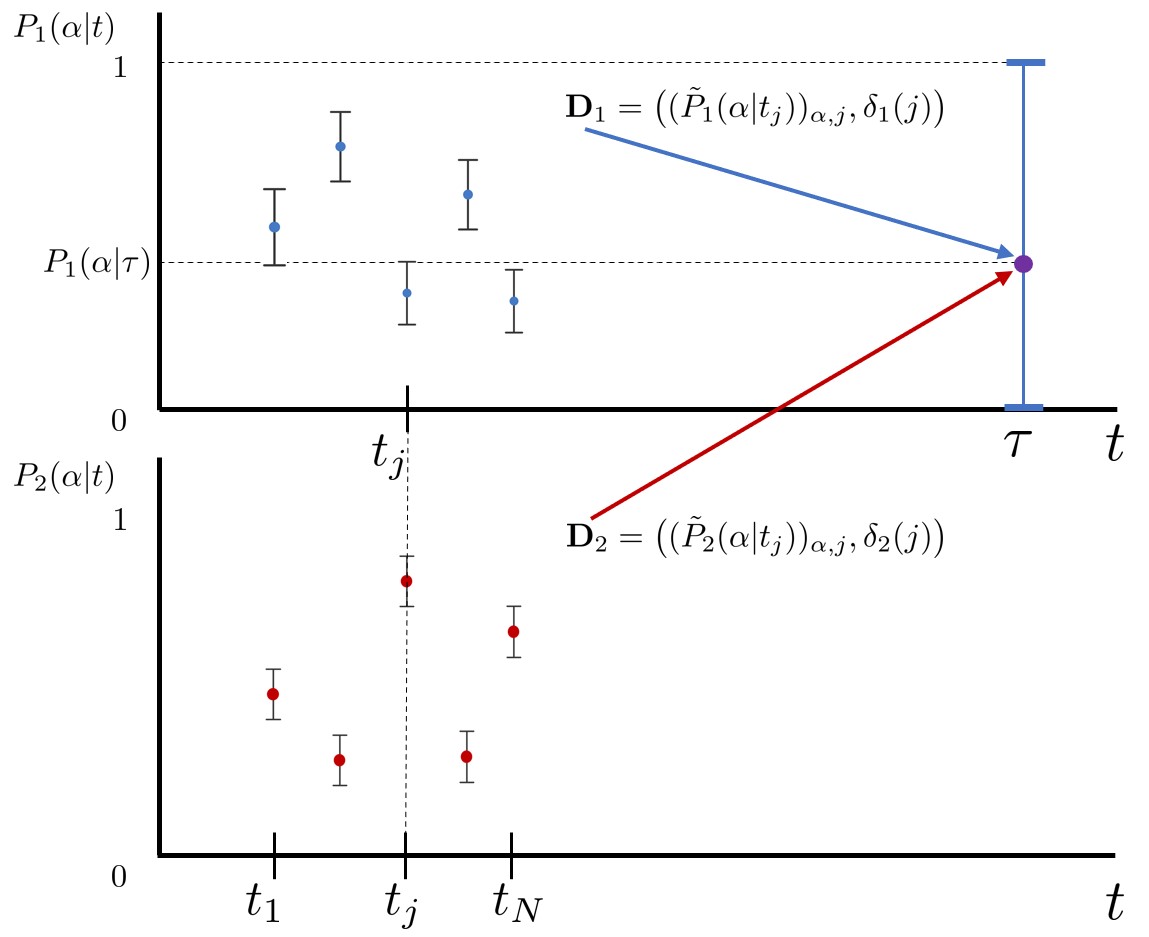}
    \caption{\textbf{An aha! dataset}. The blue dataset $\textbf{D}_1$, generated by conducting measurement $x=1$, exhibits a Knightian uncertainty at time $\tau$. Combining $\textbf{D}_1$ with the red dataset $\textbf{D}_2$, generated by conducting measurement $x=2$ on the same system, it is possible to predict the datapoint $P(a|x=1,\tau)$ with full certainty (purple dot). $\mathbf{D}_2$ is thus an aha! dataset for $\mathbf{D}_1$.}
    \label{fig:ahaDs}
\end{figure}

\item A \emph{fog bank} corresponds to a situation where we lose predictability completely and then regain it, also completely, some time later. More formally, consider a dataset $\tilde{P}$ with Knightian uncertainty at some future time $\tau=\tau_1$. Naively, one would guess that the dataset would present Knightian uncertainty at any later extrapolation time. Imagine, though, that, for some time $\tau_2>\tau_1$, we had full certainty. Then we would say that the dataset $\tilde{P}$ exhibits a fog bank at $\tau=\tau_1$, see Figure~\ref{fig:FB} for an illustration.

\begin{figure}
    \centering
    \includegraphics[width=\linewidth]{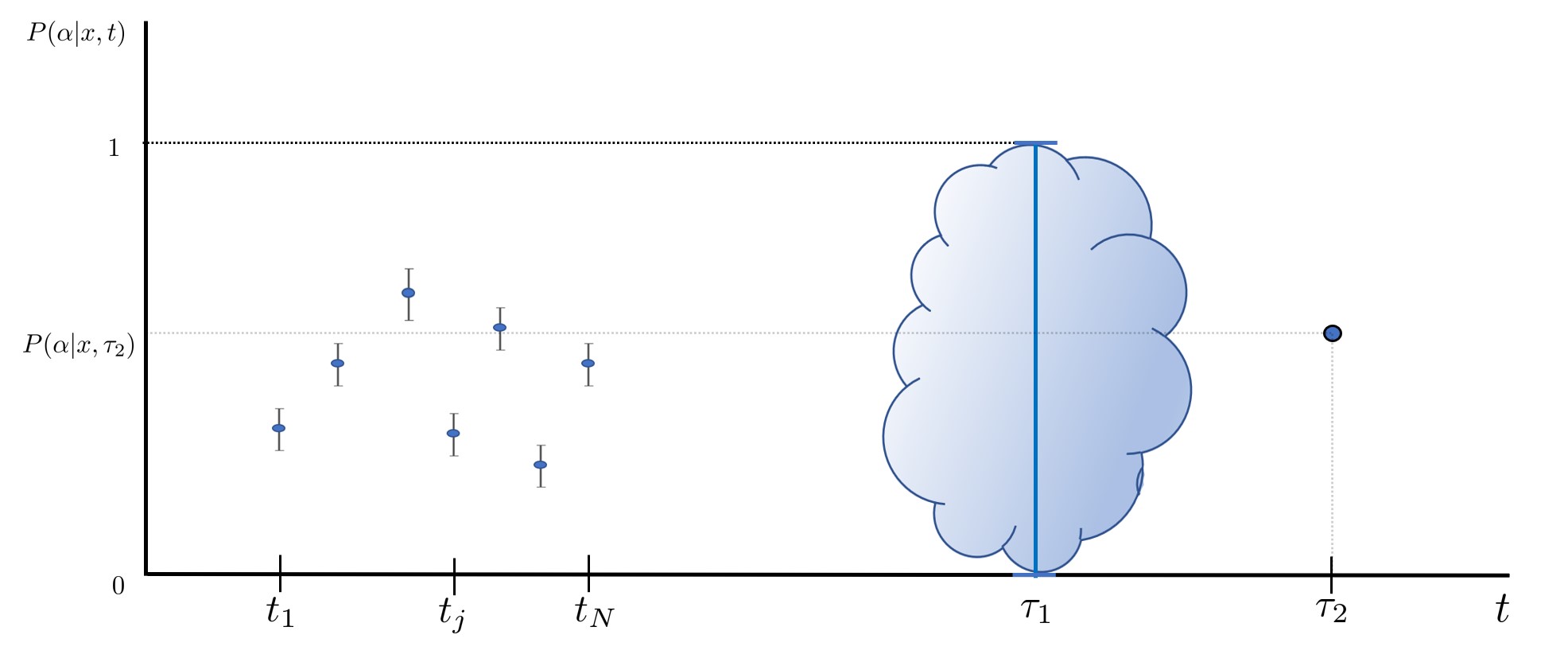}
    \caption{\textbf{A fog bank}. The dataset exhibits a Knightian uncertainty at time $\tau_1$, but full certainty at $\tau_2>\tau_1$.}
    \label{fig:FB}
\end{figure}
\end{enumerate}
In Section \ref{sec:phenomena} we will present examples of datasets, generated by probing low-dimensional quantum systems, that, under the hard energy support constraint, exhibit these two phenomena.

\subsection{Resolution of the extrapolation problem}
In Section \ref{sec:charact}, we provided an SDP formulation of $S(\E)$: replacing $T$ by $S(\E)$ in eq. (\ref{extrapol_optim}), we thus find that extrapolations under the spectral constraint can be cast as an SDP. For all the other sets $T$, Section \ref{sec:charact} provides complete hierarchies of SDP relaxations $(T_m)_m$, together with bounds on the speed of convergence, of the form $T_m\subset_{\omega_m} T$. Given sufficient computational power, such SDP sequences of relaxations will allow us to solve the extrapolation problem for all practical purposes. Indeed, call $\mu^\pm_m$ the solutions of the semidefinite program that results when one replaces $T$ by $T_m$ in (\ref{extrapol_optim}). Since $T\subset T_m$, we have that $f(P(\tau))\in [\mu^-_m,\mu^+_m]$ for any $P\in T$ compatible with the experimental data $(\tilde{P},\delta)$. In other words, each SDP relaxation $T_m$ allows one to make a testable prediction for time $\tau$. In addition, since $\lim_{m\to\infty} T_m=\overline{T}$, we have that $\lim_{m\to\infty}[\mu^-_m,\mu^+_m]$ tends to the optimal extrapolation interval. A quantitative estimation of the quality of the SDP bounds as a function of $m$ is given by the lemma below:
\begin{lemma}
\label{lemma_T_m_approx}
Let $(P,\delta,f)$ be an extrapolation problem, and let $T_m\supset T$ be  a relaxation of $T$ such that $T_m\subset_{\epsilon_m} T$. Further assume that there exists $P^0\in T$ and $r>0$ with
\begin{equation}
\delta(x,j)-\sum_a|P^0(a|x,t_j)-P(a|x,t_j)|\geq r,\forall x,j.
\end{equation}
Then it holds that
\begin{equation}
|\mu_m^{\pm}-\mu^{\pm}|\leq \epsilon_m\left(\frac{|\mu^{\pm}-f(P^0(\tau))|}{r}+\sum_{x}\max_{a}|f(a|x)|\right).
\end{equation}
\end{lemma}
For a proof of the lemma, see Appendix \ref{app:T_m_approx}. 

For the sets $S(E^+), A(\bar{E}), U(\Delta E)$, it thus follows that the extrapolation of an $N$-point dataset can be solved up to precision $\delta$ in time $\mbox{poly}\left(N,\frac{1}{\delta}\right)$. The computational complexity of the extrapolation problem in $S(\E;E^+,\epsilon), S(E^+;\epsilon)$ depends on the congruence relations between the considered measurement times.

\subsection{The predictability of physical systems}
\label{sec:non_trivial}

Knowing how to solve the extrapolation problem does not necessarily mean being able to predict the future behavior of an experimental setup: certain tuples of initial states, Hamiltonians and measurement operators might generate timelines impossible to extrapolate, unless a large number of datapoints thereof with almost vanishing error bars are provided. In this section, we explore how difficult the problem of predicting the future measurement statistics of a quantum device can get, in terms of the number of system resets required to make a useful projection. For simplicity, we will restrict our analysis to scenarios with a hard energy support constraint.

First of all, the good news: thanks to the self-testing results of Section \ref{sec:self_testing}, we know that quantum setups with realization $(\C^N, \ket{\bar{\psi}^N}, \bar{H}^N,\bar{M}^N)$ are so well-behaved that $N$ datapoints thereof with error bars of size $\delta=\mbox{poly}\left(\frac{\alpha}{\tau}\right)$ suffice to make an extrapolation of accuracy $\alpha$ at time $\tau$. Such systems will thus allow conducting long-time extrapolations under reasonable experimental demands, namely, reasonable statistical robustness and a small number of datapoints.

Self-testing datasets might not be telling us the whole story, though: could it be that most systems are, in fact, extremely difficult to extrapolate? Could there be systems, in fact, that are \emph{impossible} to extrapolate?

With regards to the second question, under the hard energy support constraint, a successful extrapolation to arbitrarily long times is always possible. Consider, for starters, the problem of extrapolating a \emph{time segment}, i.e., a map $P:[t_0,t_f]\to {\cal D}$. Under the assumption $P\in S(E^+)$, we have that the function
\begin{equation}
f(t):=\tr(e^{-iHt}\rho e^{iHt}\sum_{a,x}f_{a,x}M_{a|x})
\end{equation}
is analytic for all $t\in\C$. Given $P$ we can therefore expand $f(t)$ in a Taylor series and obtain $P(a|x,\tau)$ \emph{exactly}.

The above example is unrealistic, in that experimental data has noise and spans a finite number of datapoints. What happens in the (just demanding) regime $\infty>N\gg 1$, $1\gg\delta(x,j)> 0$? The problem of extrapolating noisy samples of an unknown analytic function $f(t)$ has been considered before in the computer science literature \cite{extr_analytic}. In \cite{extr_analytic2, extr_analytic3,extr_analytic}, it is shown that a least-squares fit by a polynomial of degree $D$ suffices to provide good bounds, in the afore-mentioned regime. The choice of $D$ solely depends on the maximum value of the function along a region in the complex plane that encompasses the times $t_1,...,t_N$: in our case, $D=D(E^+,\max_j|t_j|)$. The extrapolation method described in \cite{extr_analytic} certainly does not provide optimal extrapolations, but it shows that, under the hard energy support constraint, the measurement statistics of any quantum system can be extrapolated to arbitrary times, as long as the experimental dataset contains sufficiently many low-error datapoints.

Let us now address the first question, i.e., whether the behavior of a quantum system can always be extrapolated in realistic scenarios where both $N$ and $\delta$ take reasonable values. Consider thus the physical system with realization $(\C^2, \frac{\id_2}{2}, 0, M)$, with $M_0=\proj{0},M_1=\proj{1}$. If we probe this system at times $t_j=\frac{j}{N}T$, $j=1,...,N$, aiming for statistical robustness $\delta>0$, we will generate the noisy dataset $\mathbf{O}(N, T,\delta):=(\tilde{P},\delta)$, with 
\begin{align}
&\tilde{P}(1|t_j)=\tilde{P}(2|t_j)=\frac{1}{2}.
\label{def_dataset_O}
\end{align}
Our goal is to extrapolate the value $f(P(\tau)):=P(1|\tau)$. This defines the extrapolation problem $\mathbb{E}(N,T,\delta,\tau)$.

By symmetry, for any $N,\delta$,  the optimal prediction will be of the form $P(1|\tau)\in [\frac{1}{2}-\delta',\frac{1}{2}+\delta']$, for some $\delta'=\delta'(N, T,\delta,\tau)$. By the previous analyticity argument, we have that
\begin{equation}
\lim_{\delta\to 0}\lim_{N\to\infty}\delta'=0.
\end{equation}
In Appendix \ref{app:limits}, we estimate how $\delta'$ depends on $\delta$ in the limit $N\to \infty$. In this regard, we prove that, under the hard energy support constraint with energy threshold $E^+=\frac{2}{T}$, it holds that
\begin{equation}
\delta'\gtrsim \frac{1}{2}\delta\left(\frac{\tau}{T}\right)^{\frac{\log\left(\frac{1}{\delta}\right)}{\log\log\left(\frac{1}{\delta}\right)}}
\end{equation}
in the limit $\delta\ll \frac{T}{\tau}$. This implies that
\begin{equation}
\lim_{\delta\to 0}\lim_{N\to\infty}\frac{\delta'}{\delta}=\infty,
\end{equation}
or, in other words, that high precision in the extrapolation demands \emph{extremely} high precision on the measurement statistics.

This is bad news, at least if one expected to achieve very accurate predictions on this system. Alternatively, one might be content with \emph{any} future prediction, as long as it is non-trivial. Such a reasonable hope is squandered by the following result, also proven in Appendix \ref{app:limits}: under the hard energy support constraint with $E^+=\frac{1}{T}$, avoiding a Knightian uncertainty at $\tau>T$ necessarily demands setting $\delta\leq \frac{1}{g(\tau)}$, for some super-exponential function $g$, independent of $N$. This result precludes long-time extrapolations for the dataset $\mathbf{O}(N, T,\delta)$.

To make matters worse, due to the inclusion relations $S(E^+)\subset S(E^+,\epsilon),A(E^+)$, the above two no-go results also hold under the other two types of energy constraints.

In conclusion: under the hard energy support constraint it is possible to extrapolate the behavior of any quantum system for arbitrarily long times by increasing the number of datapoints and their statistical robustness. However, under reasonable experimental demands, the possibility to make accurate or non-trivial predictions for the future strongly depends on the physical system. Some quantum systems will allow making long-time extrapolations under reasonable accuracy of the measurement statistics and very few datapoints. Some others, even in the limit of infinitely many datapoints, require extremely high precision in the measurement data just to avoid a Knightian uncertainty. To make matters worse, the best criterion we have so far to discriminate one type of systems from the other is directly tackling the extrapolation problem with the computational tools developed in Section \ref{sec:charact}.

\subsection{New extrapolation phenomena}
\label{sec:phenomena}
In Section \ref{sec:extrapol_framework} we postulated the existence of aha! datasets and fog banks. In this section, we show that these two phenomena, far from being hypothetical, can be demonstrated in experiments with low-dimensional quantum systems. For the explanation, we will consider noise-free datasets $(\tilde{P},\delta)$, with $\delta=0$. We note, however, that the two coming examples are \emph{noise-robust}.
Both aha! datasets and fog banks are therefore experimentally observable. Whether fog banks and aha! datasets with full certainty exist in scenarios with noisy measurement data is an interesting question for future research.

\subsubsection{Two instances of aha! datasets}
\label{sec:aha}
In Appendix \ref{app:aha_2} we show that the self-testing dataset $\mathbf{D}_2$ is an aha! dataset for some other, two-point dataset. In this section, we use three-level quantum systems to engineer a much more interesting example: a pair of datasets $\mathbf{A}_1$, $\mathbf{A}_2$ such that each is an aha! dataset for the other.

Define the measurement times $t_k:=\frac{4\pi k}{3E^+}$, for $k=0,1,2$, and consider the following datasets:
\begin{align}
    \mathbf{A}_1&:=\{\tilde{p}(1|t_0)=1,\tilde{p}\left(1|t_1\right)=\frac{1}{3},\tilde{p}\left(1|t_2\right)=0\},\\
    \mathbf{A}_2&:=\{\tilde{p}(1|t_0)=1,\tilde{p}\left(1|t_1\right)=0,\tilde{p}\left(1|t_2\right)=1\}.
    \label{aha_partial}
\end{align}

They represent the time series of two different measurements on the same physical system. As such, they can be integrated into a joint dataset $\mathbf{A}$:
\begin{align}
&\mathbf{A}(\bullet|x=1,\bullet):=\mathbf{A}_1(\bullet|\bullet),\nonumber\\
&\mathbf{A}(\bullet|x=2,\bullet):=\mathbf{A}_2(\bullet|\bullet).
\label{aha_full}
\end{align}
Our choice for extrapolation time is $\tau=\frac{4\pi}{E^+}$.

In Appendix \ref{app:aha_3}, we provide a timeline $P^+\in S(E^+)$, which one can generate by conducting measurements on a three-level quantum system, that fits $\mathbf{A}$. It follows that 
$\mathbf{A}\in S(E^+)$. In Appendix \ref{app:aha_3} we similarly define the four-dimensional timelines $P^-_{1}, P^+_{2}$, which respectively fit the datasets $\mathbf{A}_1, \mathbf{A}_2$. Timelines $P^+,P^-_1, P_2^-$ are plotted in Figure \ref{fig:aha_example}.
\begin{figure}
    \centering
    \includegraphics[width=\linewidth]{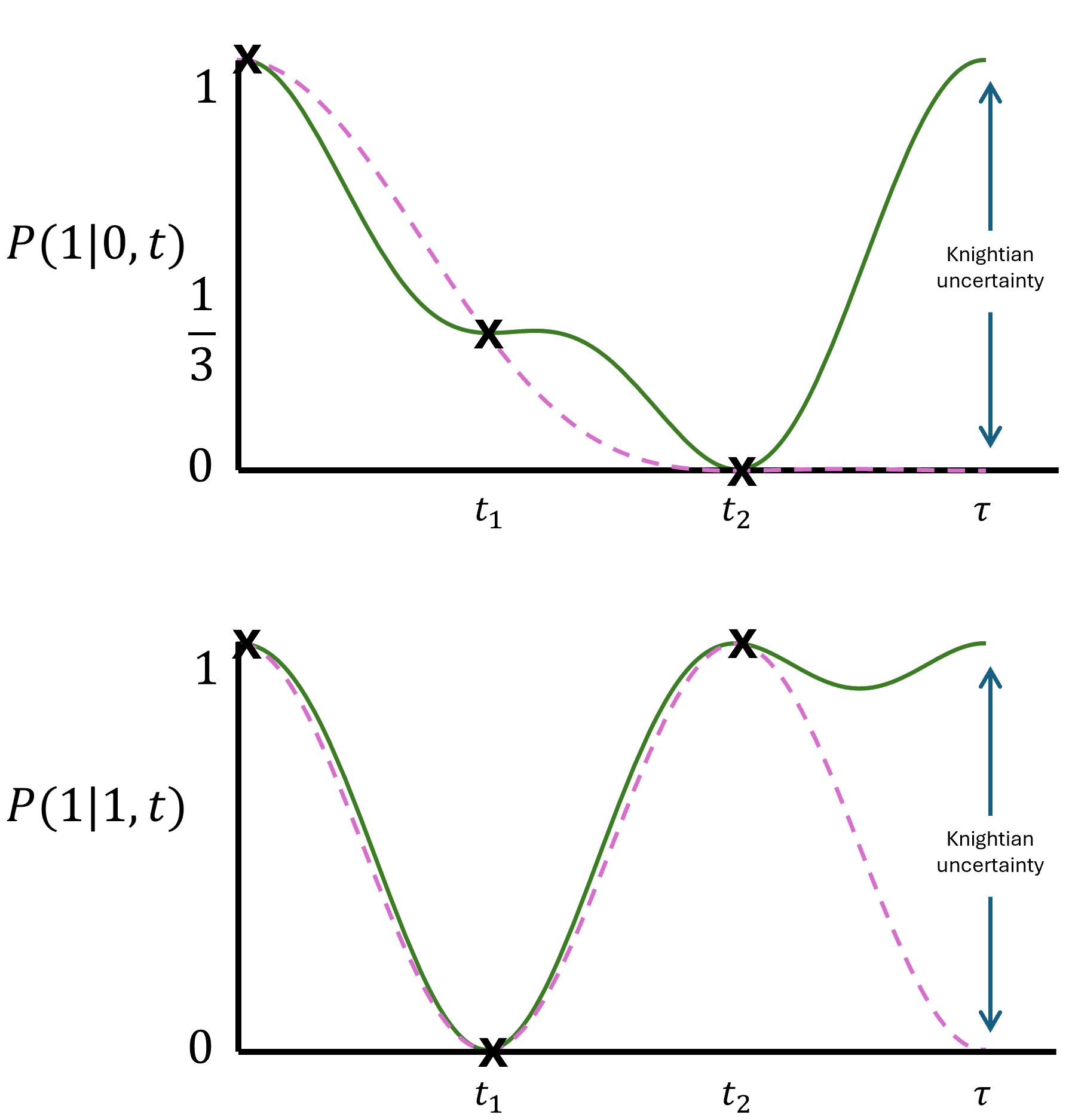}
    \caption{\textbf{A pair of aha! datasets.} The black crosses represent the datapoints of $\mathbf{A}_1, \mathbf{A}_2$. The timeline $P^+$ fitting $\mathbf{A}$ is depicted in dark green, while timelines $P^-_1,P^-_2$, respectively fitting $\mathbf{A}_1, \mathbf{A}_2$, are drawn as magenta dashed lines.}
    \label{fig:aha_example}
\end{figure}

On one hand, the timeline $P^+$ predicts $P^+(1|1,\tau)=P^+(1|2,\tau)=1$. On the other hand, the timelines $P_1^-, P_2^-$ predict $P_1^-(1|\tau)=P_2^-(1|\tau)=0$. For each single-measurement dataset $\mathbf{A}_x$, we therefore have two fitting timelines: one that predicts $P(1|\tau)=1$ and another one predicting $P(1|\tau)=0$. Since any convex combination of the two defines a fitting dataset, we conclude that, on their own, both datasets $\mathbf{A}_1$, $\mathbf{A}_2$ exhibit a Knightian uncertainty at time $\tau$.

Now, let us try to predict the datapoint $P(1|x,\tau)$, given the joint dataset $\mathbf{A}$. Let $(\H,\ket{\psi},H,M)$ be the pure-state realization of any timeline $P$ fitting $\mathbf{A}$. From $P(1|1,t_0)=1$, $P(1|1,t_2)=0$, we infer that $\braket{\psi(t_0)}{\psi(t_2)}=0$. Similarly, from $P(1|2,t_0)=1$, $P(1|2,t_1)=0$ we have that $\braket{\psi(t_0)}{\psi(t_1)}=0$. Invoking Lemma \ref{lemma_density}, we deduce that the energy density of $\ket{\psi}$ has support in $\left\{0, \frac{E^+}{2},E^+\right\}$. Thus, there exist vectors $\{\ket{\psi_k}\}_{k=0}^2$ such that
\begin{equation}
\ket{\psi(t)}=\ket{\psi_0}+e^{-i\frac{E^+t}{2}}\ket{\psi_1}+e^{-iE^+t}\ket{\psi_2}.
\end{equation}
Note that $\ket{\psi(t)}=\ket{\psi(t+\frac{4\pi}{E^+})}$. It follows that $\ket{\psi(t_0)}=\ket{\psi(\tau)}$, which implies that
\begin{equation}
P(1|1,\tau)=P(1|2,\tau)=1.
\end{equation}
The joint dataset $\mathbf{A}$ exhibits full certainty at $\tau$, despite the fact that, separately, each dataset $\mathbf{A}_x$ showed Knightian uncertainty. Hence, each dataset $\mathbf{A}_x$ is an aha! dataset for the other.

Notice that this example is robust to noise: for $\delta>0$ sufficiently small,  the noisy dataset $(\tilde{P}_1,\delta)$ would still exhibit a Knightian uncertainty at $\tau$, while the composite dataset $(\tilde{P},\delta)$ would allow predicting $P_1(a|\tau)$ with \emph{almost} full certainty.

\subsubsection{An instance of a fog bank}
\label{sec:fog}
Consider an experimental scenario with a single measurement setting and three outcomes $a=1,2,3$. For times $t_0,t_2,t_3$ (note that $t_1$ is missing), with $t_k=\frac{3k\pi}{2E^+}$, we define the dataset
\begin{equation}
\mathbf{F}:=\{\tilde{P}(1|t_0)=\tilde{P}(2|t_2)=\tilde{P}(3|t_3)=1\}.
\label{fog_dataset}
\end{equation}
We define the extrapolation times $\tau_1:=t_1+\frac{6\pi}{E^+}$, $\tau_2:=t_2+\frac{6\pi}{E^+}$. Note that $\tau_2>\tau_1$.

In Appendix \ref{app:fog}, we provide three timelines $P^0, P^1, P^2\in S(E^+)$, generated by conducting measurements on a four-level quantum system, that fit $\mathbf{F}$, see Figure \ref{fig:fog_example}. 
\begin{figure}
    \centering
    \includegraphics[width=\linewidth]{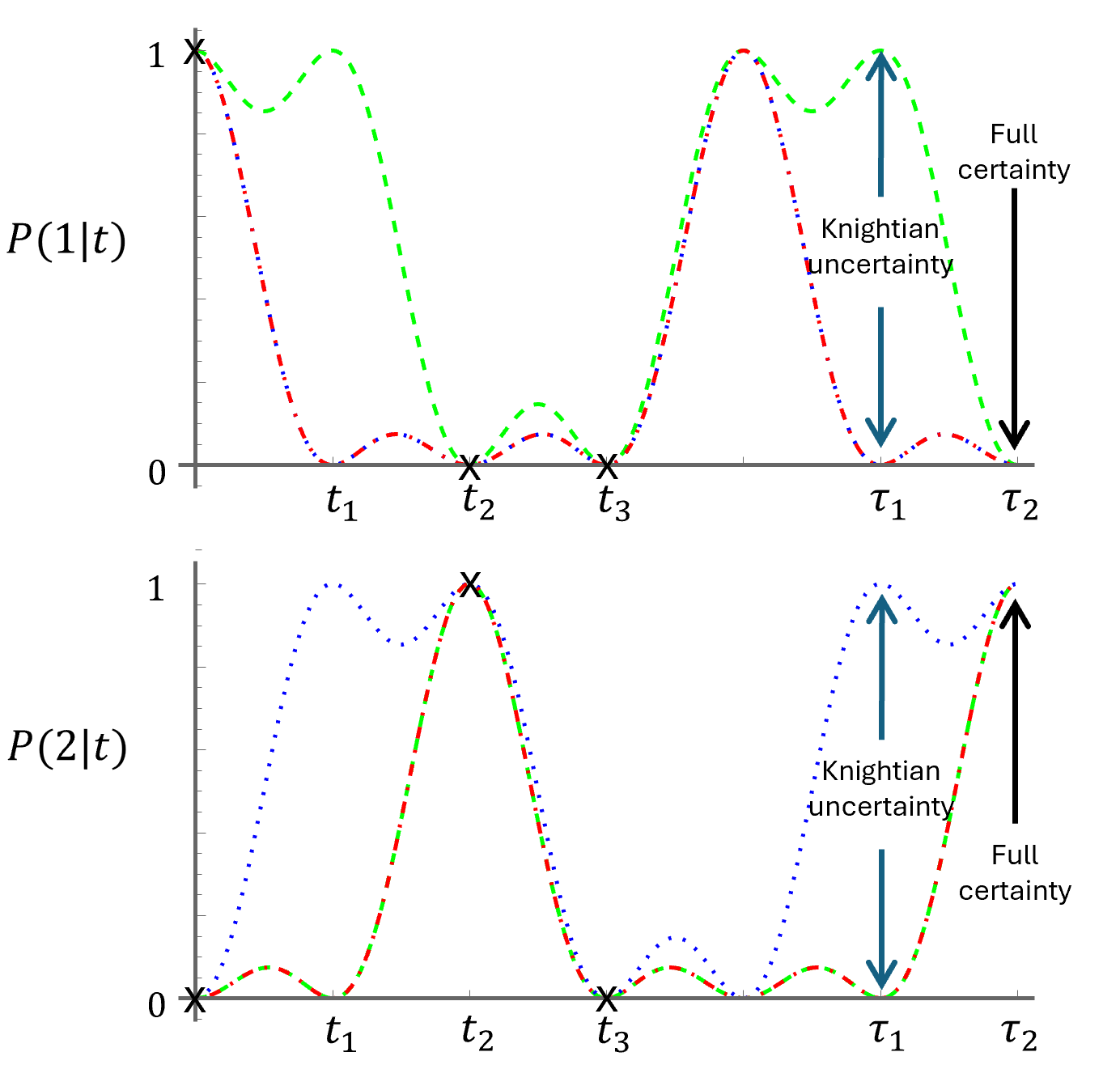}
    \caption{\textbf{An instance of a fog bank.} The datapoints of $\mathbf{F}$ are depicted with black crosses. Timelines $P^0, P^1, P^2$ are respectively represented with green dashed, blue dotted and red dot-dashed lines. }
    \label{fig:fog_example}
\end{figure}
At $t=\tau_1$, $P^0, P^1, P^2$ respectfully predict the outcome distributions $(1,0,0)$, $(0,1,0)$, $(0,0,1)$. Any convex combination $q_0P^0+q_1 P^1+q_2P^2$ of these timelines will fit $\mathbf{F}$ and predict the distribution $(q_0, q_1, q_2)$ at $\tau_1$. There is, therefore, a Knightian uncertainty at $\tau_1$. 

The three timelines, however, converge at $t=\tau_2$, predicting outcome $a=2$ with certainty. In the following, we prove that this is not a coincidence: any feasible timeline fitting $\mathbf{F}$ will make the same prediction.

Let $(\H,\psi,H,M)$ be the pure state realization of any timeline $P(t)$ fitting $\mathbf{F}$. Then $\tilde{P}(1|t_0)=1$, $\tilde{P}(1|t_2)=0$ implies that the state $\ket{\psi(t_2)}$ is orthogonal to $\ket{\psi(t_0)}$. Similarly, $\tilde{P}(1|t_0)=1$, $\tilde{P}(1|t_3)=0$ implies that the state $\ket{\psi(t_3)}$ is orthogonal to $\ket{\psi(t_0)}$. 

The state $\ket{\psi(t_1)}$ is similarly orthogonal to $\ket{\psi(t_0)}$, since
\begin{align}
&\braket{\psi(t_0)}{\psi(t_1)}=\bra{\psi}e^{-iHt_1}\ket{\psi}\nonumber\\
&=\bra{\psi}e^{-iH(t_3-t_2)}\ket{\psi}=\braket{\psi(t_2)}{\psi(t_3)}=0,
\end{align}
where the last relation follows from $\tilde{P}(2|t_2)=1,\tilde{P}(2|t_3)=0$.

We can then apply Lemma \ref{lemma_density} and conclude that the energy density of $\ket{\psi}$ has support in $\left\{\frac{kE^+}{3}:k=0,1,2,3\right\}$. Consequently, $\ket{\psi(t)}=\ket{\psi\left(t+\frac{6\pi}{E^+}\right)}$, and so, for $\tau_2=t_2+\frac{6\pi}{E^+}$, we have that
\begin{equation}
P(2|\tau_2)= P(2|t_2)=1.
\end{equation}
Hence, there is full certainty at $\tau_2>\tau_1$, which implies that we have a fog bank at $\tau=\tau_1$.

Notice that  replacing $(\tilde{P},0)$ with $(\tilde{P},\delta)$, there would still be a Knightian uncertainty at $\tau_1$ in this example and \emph{almost} full predictability at $\tau_2$. Thus the example is robust to noise.

\section{Conclusion}
\label{sec:conclusions}
In this paper, we have introduced the framework of timed demolition measurements to model all quantum experiments where an untrusted, closed quantum system is measured at an arbitrary time. Different assumptions on the energy of the system lead to different sets of time-like correlations. We showed that, like their space-like counterparts, time-like correlations can be self-testing, in the sense that their approximate realization allows one to approximately pinpoint their underlying Hilbert space, initial quantum state, time-independent Hamiltonian and measurement operators. Contrarily to the quantum case, the closure of all such sets can be characterized efficiently: this allows devising optimal clocks, extending previous semi-device independent randomness certification protocols and bounding energy invariants device independently. 

The characterization of the sets allows solving the problem of extrapolating the past measurement data of arbitrary quantum systems. In this regard, we showed that, while some systems allow long-time extrapolations with few datapoints and reasonable noise in the measurement data, other systems demand almost noise-free measurement statistics to make short-time, non-trivial predictions. Finally, we identified two unexpected extrapolation phenomena: aha! datasets and fog banks, which can be already observed in few-level quantum systems.

The potential of timed measurements for QKD or how much the error tolerance of timed measurements-based protocols for randomness certification \cite{jones2024, jones2025certifiedrandomnessquantumspeed} improves as we increase the number of measurement times are natural topics for future research. With regards to the extrapolation problem, it would be interesting to study it in scenarios with memory, such as the temporal correlations scenario \cite{temporal1, temporal2, temporal3}, where the constraint lies, not in the system's energy, but on its dimensionality. Another research line worth investigating,  is whether the characterization of feasible quantum datasets could shed some light on the debate on the nature of gravity \cite{BMM+17, MV17}. If gravity is a quantum field \cite{Rovelli_2004, Becker_Becker_Schwarz_2006}, then any experimental dataset generated during the gravitational interaction of two or more quantum particles is subject to the constraints that we derive in this paper. On the contrary, if gravity is classical \cite{TD16, oppenheim}, one such experiment could generate datasets impossible to account for, under both the quantum hypothesis and some assumption on the energy of the interaction. Could this train of thought lead to an experimental disproof of quantum gravity?

\section*{Acknowledgements}

This project was funded within the QuantERA II Programme that has received funding from the European Union's Horizon 2020 research and innovation programme under Grant Agreement No 101017733 and by the Swiss National Science Foundation via Ambizione PZ00P2\_208779. This work was funded in whole or in part by the Austrian Science Fund (FWF) 10.55776/COE1 and the European Union – NextGenerationEU.

\begin{center}
\includegraphics[width=2cm]{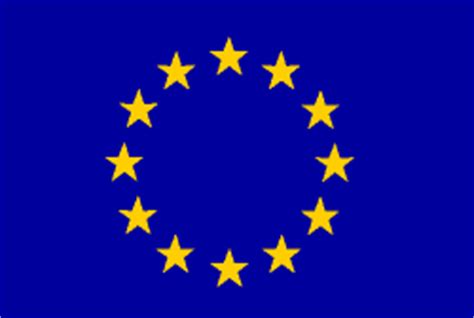}    
\end{center}

\bibliography{extrapol}

\begin{appendix}

\section{Independence of $\overline{S_\eta(\E;\epsilon)}$ on $\eta$}
\label{app:independence_of_eta}
Clearly,  $S_\eta(\E;\epsilon)\supset S_{\tilde{\eta}}(\E;\epsilon)$, for $\eta\leq \tilde{\eta}$, so We next prove that $\overline{S_\eta(\E;\epsilon)}\subset \overline{S_{\tilde{\eta}}(\E;\epsilon)}$. Let thus $P\in S_\eta(\E;\epsilon)$ be a dataset, with measurement times $\vec{t}=(t_1,...,t_N)$ and realization $(\H,\rho,H,M)$. For any $\delta>0$, there exists $r>\tilde{\eta}-\eta$ such that 
\begin{equation}
\|r(t_1,...,t_N)-2\pi(a_1,...,a_N)\|_\infty\leq \delta,
\label{resonance}
\end{equation}
for some $a_1,...,a_n\in\Z$ \cite{chávez2020}.
Let $H$ decompose as $H=\sum_{k=1}^nE_k\Pi_k+ H_U$, where $\{\Pi_k\}_k$ are orthogonal projectors and $H_U\geq \eta (\id-\sum_k\Pi_k)$, with $H_U\Pi_k=0$, for all $k$, represents the high-energy part of $H$. Define the Hamiltonian
\begin{equation}
\tilde{H}= \sum_{k=1}^nE_k\Pi_k+ \tilde{H}_U,
\end{equation}
with $\tilde{H}_U:=H_U + r(\id-\sum_k\Pi_k)$; that is, we have moved the high energy part so that $\tilde{H}_U\geq \tilde{\eta}(\id-\sum_k\Pi_k)$. Call $\tilde{\sigma}(E)dE$ the energy density of $\ket{\psi}$ with respect to $\tilde{H}$. Then it holds that
\begin{align}
&\mbox{spec}(H)\cap [0, \tilde{\eta}]=\{E_k\}_k\nonumber\\
&\int_{\tilde{\eta}}^\infty \tilde{\sigma}(E)dE\leq \epsilon.
\end{align}
Hence, the dataset $\tilde{P}$ with realization $(\H, \rho,\tilde{H},M)$ belongs to $S_{\tilde{\eta}}(\E;\epsilon)$. By eq. (\ref{resonance}), it follows that $\|e^{-iHt_j}-e^{-i\tilde{H}t_j}\|\leq O(\delta)$, for all $j$. Hence, 
\begin{equation}
\|e^{-iHt_j}\rho e^{-iHt_j}-e^{-i\tilde{H}t_j}\rho e^{i\tilde{H}t_j}\|_1\leq O(\delta),
\end{equation}
which implies that $\|P(t_j)-\tilde{P}(t_j)\|\leq O(\delta)$, for $j=1,...,N$. Since we can take $\delta$ as small as we wish, we conclude that $P\in \overline{S_{\tilde{\eta}}(\E;\epsilon)}$.

\section{Approximate relations between the sets of correlations}
\label{app:discretization}
For any function $\varphi:\R\to \R$, define its \emph{resolution} $\alpha$ as
\begin{equation}
\alpha:=\sup_{E\in\R}|E-\varphi(E)|.
\label{def:resolution}
\end{equation}
We next prove a general result regarding timelines and resolutions.
\begin{prop}
\label{prop:discretization}
Let $\varphi:\R\to \R$ have resolution $\alpha$, and let $P$ be a timeline with pure-state realization $(\H,\ket{\psi},H,M)$. Define the timeline $\tilde{P}$ with realization $(\H,\ket{\psi}, \varphi(H),M)$. Then, 
\begin{equation}
\|P(t)-\tilde{P}(t)\|\leq 2 \sin(\alpha |t|),
\end{equation}
for $|t|\leq \frac{\pi}{2\alpha}$.
\end{prop}
\begin{proof}
Define $\tilde{H}:=\varphi(H)$, $\ket{\psi(t)}:=e^{-iHt}\ket{\psi}$, $\ket{\tilde{\psi}(t)}:=e^{-i\tilde{H}t}\ket{\psi}$. Call $\sigma(E)dE$ the energy distribution of $\ket{\psi}$ with respect to the Hamiltonian $H$. Then we have that
\begin{align}
&|\braket{\psi(t)}{\tilde{\psi}(t)}|=\left|\int dE\sigma(E)e^{i(E-\varphi(E))t}\right|\nonumber\\
&\geq\mbox{Re}\left(\int dE\sigma(E)e^{i(E-\varphi(E))t}\right)\nonumber\\
&= \int_{0} \sigma(E)\cos\left((E-\varphi(E))t\right).
\end{align}
In turn, for $|t|\leq \frac{\pi}{2\alpha}$, the integrand is lower bounded by $\cos\left(\alpha t\right)$. This is a consequence of the inequality
\begin{equation}
|E-\varphi(E)||t|\leq\alpha|t|\leq \frac{\pi}{2}    
\end{equation}
and the fact that the function $\cos(|y|)$ is positive and decreasing for $|y|\leq\frac{\pi}{2}$.
Consequently, 
\begin{align}
&|\braket{\psi(t)}{\tilde{\psi}(t)}|\geq \cos\left(\alpha |t|\right).
\end{align}
For any $x$, the above relation implies that 
\begin{align}
\sum_a \left|P(a|x,t)- \tilde{P}(a|x,t)\right| &\leq \|\proj{\psi(t)}-\proj{\tilde{\psi}(t)}\|_1\nonumber\\
&=2\sqrt{1-|\braket{\psi(t)}{\tilde{\psi}(t)}|^2}\nonumber\\
&\leq 2\sin\left(\alpha |t|\right).
\label{diff_Ps}
\end{align}    
\end{proof}

Proposition \ref{prop:discretization} implies that eq. (\ref{approx_E_plus_epsilon_by_discrete_epsilon}) in the main text holds. Indeed, given $\E=(E_1,...,E_n)$, with $E_1=0$, $E_n= E^+$, we define the function $\varphi:\R\to\R$ through:
\begin{align}
\varphi(E):=&E,\mbox{ if }  E>E^+,\nonumber\\
&\arg\min_{E'\in\{E_j\}_j\cup\{E^+\}}|E-E'|, \mbox{ otherwise}.
\end{align}
Clearly,
\begin{equation}
|E-\varphi(E)|\leq \frac{g(\E)}{2},\forall E,
\end{equation}
with $g(vec{E})=\max_j E_{j}-E_{j-1}$. 

Now, let $P\in S(E^+;\epsilon)$ admit the feasible realization $(\H,\rho,H,M)$. Call $\sigma(E)dE$, $\tilde{\sigma}(E)dE$ the energy distributions of $\rho$ with respect to the Hamiltonians $H\geq 0$ and $\tilde{H}:=\varphi(H)\geq 0$. It is immediate to see that if $\sigma(E)dE$ satisfies the soft condition
\begin{equation}
\lim_{\delta\to 0}\int_{E+\delta}^\infty \sigma(E)dE\leq \epsilon,
\end{equation}
then so will $\tilde{\sigma}(E)dE$. In addition, $\tilde{\sigma}(E)dE$ satisfies
\begin{equation}
\mbox{Supp}\left(\tilde{\sigma}(E)\right)\cap [0,E^+]=\{E_j\}_{j=1}^n.
\end{equation}
It follows that the timeline $\tilde{P}$ with realization $(\H, \rho, \tilde{H},M)$ satisfies $\tilde{P}\in S(\E;\epsilon)$. Moreover, by Proposition \ref{prop:discretization} we have that
\begin{equation}
\|P(t)-\tilde{P}(t)\|\leq 2 \sin\left(\frac{g(\E) |t|}{2}\right).
\end{equation}
This proves eq. (\ref{approx_E_plus_epsilon_by_discrete_epsilon}).

\section{Continuity in time of the sets of correlations}
\label{app:continuity_datasets}
In this appendix we explore how the closures of $S(\E), S(E^+), A(\bar{E}), S(\bar{E};\epsilon)$ vary with the measurement times.

To prove that the closure $\overline{T(\vec{t})}$ of some set of correlations $T(\vec{t})$ is continuous in time, it is enough to show that there exists a function $K:\R\to\R$, with $\lim_{y\to 0^+}K(y)=0$, such that, for any pure-state realization $(\H,\ket{\psi},H,M)$ of a timeline $P\in T$, it holds that
\begin{equation}
\|\proj{\psi (t)}-\proj{\psi (\tilde{t})}\|_1\leq K(|t-\tilde{t}|).
\label{existence_K}
\end{equation}
Indeed, in that case 
\begin{equation}
\max_x\sum_a|P(a|x,\tilde{t})-P(a|x,t)|\leq K(|t-\tilde{t}|)    
\end{equation}
follows from the contractive property of the trace distance. Given a converging sequence of pairs of datasets and measurement times $(P^j, \vec{t}^j)_j$, with $\vec{t}^j\to \vec{t}$ and $P^j\in T(\vec{t}^j)$, eq. (\ref{existence_K}) implies that the sequence of datasets $(\hat{P}^j)_j$, with $\hat{P}^j(a|x,t_k):=P^j(a|x,t_k)$, satisfies 
\begin{equation}
\lim_{j\to\infty}\hat{P}^j(a|x,t_k)=\lim_{j\to\infty}P^j(a|x,t^j_k)=:\hat{P}(a|x,t_k).    
\end{equation}
Since $\hat{P}^j\in T(\vec{t})$, for all $j$ and $(\hat{P}^j)_j$ is convergent, it holds that $\hat{P}\in \overline{T}$.  

In turn, that eq. (\ref{existence_K}) holds for some appropriate function $K$ is equivalent to the existence of $s:\R\to\R$, with $\lim_{y\to 0^+}s(y)=1$, such that
\begin{equation}
|\braket{\psi (t)}{\psi (\tilde{t})}|\geq s(|t-\tilde{t}|).
\label{suff_cont}
\end{equation}
This is due to the general relation 
\begin{equation}
\|\proj{\varphi}-\proj{\phi}\|_1=2\sqrt{1-|\braket{\varphi}{\phi}|^2}.
\end{equation}
We will next use this sufficient criterion to prove the continuity of the closures of $S(\E), S(E^+),A(\bar{E}), U(\Delta E)$.

That eq. (\ref{suff_cont}) holds for timelines $P\in S(\E)$ is straightforward. We next see that this is also the case for timelines in $S(E^+)$. Let $P\in S(E^+)$, with pure-state realization $(\H,\psi, H, M)$, and define the time-evolved state vector $\ket{\psi(t)}:=e^{-iHt}\ket{\psi}$. If $E^+|t-\tilde{t}|\leq \frac{\pi}{2}$, then we have that
\begin{align}
&|\braket{\psi(t)}{\psi(\tilde{t})}|=\left|\int_0^{E^+}\sigma(E)dE e^{iE(t-\tilde{t})} \right|\nonumber\\
&\geq \int_0^{E^+}\sigma(E)dE \cos(E(t-\tilde{t}))\geq \cos(E^+(t-\tilde{t})).
\end{align}
The right-hand side fits eq. (\ref{suff_cont}), so the continuity of $\overline{S(E^+;\vec{t})}$ in $\vec{t}$ is proven.

Consider now $P\in A(\bar{E})$, with pure realization $(\H,\ket{\psi}, H, M)$. We can decompose $\ket{\psi}=\alpha\ket{\psi^-}+\beta\ket{\psi^+}$, with $|\alpha|^2+|\beta|^2=1$, such that the energy distribution $\sigma^-(E)dE$ of $\ket{\psi^-}$ has support in $[0,E^+]$ and that of $\ket{\psi^+}$, in $(E^+,\infty)$. Then the condition $\bra{\psi}H\ket{\psi}\leq\bar{E}$ implies that $|\beta|^2\leq \frac{\bar{E}}{E^+}$. Doing a similar calculation as above, we find that 
\begin{align}
&|\braket{\psi(t)}{\psi(\tilde{t})}|\geq |\alpha|^2\cos(E^+(t-\tilde{t}))-|\beta|^2\nonumber\\
&\geq \left(1-\frac{\bar{E}}{E^+}\right)\cos(E^+(t-\tilde{t}))-\frac{\bar{E}}{E^+}.
\end{align}
Choosing $E^+\propto |t-\tilde{t}|^{-1/2}$, the right-hand side of the equation above becomes a function of $|t-\tilde{t}|$ that tends to $1$ as $\tilde{t}$ tends to $t$. The closure of $A(\bar{E};\vec{t})$ is thus continuous in $\vec{t}$.

Finally, let $P\in U(\Delta E)$, with realization $(\H,\ket{\psi}, H, M)$. We choose the origin of energies so that
\begin{equation}
\bra{\psi}H\ket{\psi}=0,
\end{equation}
which implies that
\begin{equation}
\bra{\psi}H^2\ket{\psi}\leq (\Delta E)^2.
\end{equation}
Next, we choose a cut-off $E^+>\Delta E$, and decompose $\ket{\psi}=\alpha\ket{\psi^-}+\beta\ket{\psi^+}$, with $|\alpha|^2+|\beta|^2=1$, such that the energy distribution $\sigma^-(E)dE$ of $\ket{\psi^-}$ has support in $[-E^+,E^+]$ and that of $\ket{\psi^+}$, in $(-\infty, -E^+)\cup (E^+,\infty)$. By the equation above, we have that $|\beta|^2(E^+)^2\leq (\Delta E)^2$. It thus follows that
\begin{align}
&|\braket{\psi(t)}{\psi(\tilde{t})}|\geq |\alpha|^2\cos(E^+(t-\tilde{t}))-|\beta|^2\nonumber\\
&\geq \left(1-\left(\frac{\Delta E}{E^+}\right)^2\right)\cos(E^+(t-\tilde{t}))-\left(\frac{\Delta E}{E^+}\right)^2.
\end{align}
Again, taking $E^+\propto |t-\tilde{t}|^{-1/2}$, the right-hand side becomes a function of $|t-\tilde{t}|$ only, which tends to $1$ as as $\tilde{t}$ tends to $t$, hence proving the continuity of $\overline{\U(\Delta E;\vec{t})}$ in the measurement times.

We next investigate the continuity in $\vec{t}$ of the set of datasets $S(E^+;\epsilon; \vec{t})$, with $0<\epsilon< \frac{1}{2}$. Let $\Delta> 0$, and consider the following dataset $P^j$, defined in the $X=1,A=2$ timed measurement scenario:
\begin{align}
&P^j(1|t^j_1)=\frac{1}{2}\left(1+\sqrt{1-(1-2\epsilon)^2}\right),\nonumber\\
&P^j(1|t^j_2)=P^j(1|t^j_3)=\frac{1}{2}\left(1-\sqrt{1-(1-2\epsilon)^2}\right),
\end{align}
with $t^j_1=0,t^j_2=\Delta$, $t_3^j=2\Delta+\frac{\Delta}{2m+1}$. 
For any vector of energies $\vec{E}\in \R^n$, a feasible realization for this dataset in $S(0;\epsilon;\vec{t}^j)$ is $(\C^2,\proj{\psi}, H_j, M)$, with
\begin{align}
&\ket{\psi}:=(\sqrt{1-\epsilon}\ket{0}+\sqrt{\epsilon}\ket{1}),\nonumber\\
&H_j:=E_n\proj{0}+\left(E_n+\frac{(2j+1)\pi}{\Delta}\right)\proj{1},\nonumber\\
&M_1=\proj{\varphi^+},M_2=\proj{\varphi^-},
\end{align}
where $\ket{\varphi^\pm}$ are the eigenvectors of the matrix
\begin{equation}
\proj{\psi(t^j_1)}-\proj{\psi(t^j_2)},
\end{equation}
with positive and negative eigenvalues. 

Define $\vec{t}:=\lim_{j\to\infty}\vec{t}^j=(0,\Delta,2\Delta)$. If, for some $E^+>0$, $S(E^+;\epsilon;\vec{t})$ were continuous in $\vec{t}$, then $P$ would belong to the closure of $S(E^+;\epsilon;\vec{t})$.

However, this is not the case. We next prove the following result:
\begin{lemma}
\label{lemma:discontinuity_soft}
Let $\Delta \leq \frac{\pi}{2E^+}$, $0<\epsilon< \frac{1}{2}$, and let $\tilde{P}\in S\left(E^+;\frac{1}{2}\right)$ fit the first two datapoints of $P$ with error $\delta$. Then, 
\begin{equation}
|\tilde{P}(1|t_1)-\tilde{P}(1|t_3)|\leq \sqrt{O(\delta)+O((E^+\Delta)^2)}.
\label{inequality_soft}
\end{equation}
\end{lemma}
If $\tilde{P}$ fits the three points of the noisy dataset $(P,\delta)$, the left-hand side of the above equation will equal $\sqrt{1-(1-2\epsilon)^2}+O(\delta)$. For $\delta, \Delta$ small enough, inequality (\ref{inequality_soft}) will thus be violated, proving that $\tilde{P}\not\in S(E^+;\epsilon;\vec{t})$. It follows that no converging sequence of datasets in $S(E^+;\epsilon;\vec{t})$ will tend to $P$, and so $P\not\in\overline{S(E^+;\epsilon;\vec{t})}$. 

\begin{proof}
Suppose that $\tilde{P}\in S(E^+;\epsilon;\vec{t})$, with pure-state realization $(\tilde{\H},\ket{\tilde{\psi}}, \tilde{H},\tilde{M})$, fits the first two points of the dataset $P$, with error $\delta$. By definition,
\begin{equation}
\ket{\tilde{\psi}(t_1)}=\ket{\tilde{\psi}}=\sqrt{1-\epsilon'}\ket{\tilde{\psi}^-}+\sqrt{\epsilon'}\ket{\tilde{\psi}^+},
\end{equation}
where $0\leq\epsilon'\leq \epsilon$, and the energy densities of $\ket{\tilde{\psi}^-}$ and $\ket{\tilde{\psi}^+}$ have support in  $[0,E^+]$ and $(E^+,\infty)$, respectively.

From the first two datapoints of $P$, we have that:
\begin{align}
&2\sqrt{1-(1-2\epsilon)^2}+O(\delta)\nonumber\\
&=\sum_{a}|P(a|t_1)-P(a|t_2)|+O(\delta)\nonumber\\
&=\sum_{a}|\tilde{P}(a|t_1)-\tilde{P}(a|t_2)|\nonumber\\
&\leq \|\proj{\tilde{\psi}(t_1)}-\proj{\tilde{\psi}(t_2)}\|_1\nonumber\\
&=2\sqrt{1-|\braket{\tilde{\psi}(t_1)}{\tilde{\psi}(t_2)}|^2}.
\end{align}
Thus,
\begin{align}
&1-2\epsilon+O(\delta)\geq |\braket{\tilde{\psi}(t_1)}{\tilde{\psi}(t_2)}|\nonumber\\
&\geq \mbox{Re}\left(\braket{\tilde{\psi}(t_1)}{\tilde{\psi}(t_2)}\right)\nonumber\\
&\geq (1-\epsilon')\cos(E^+\Delta)+\epsilon'\mbox{Re}\left(\braket{\tilde{\psi}^+(t_1)}{\tilde{\psi}^+(t_2)}\right).
\label{bound_over}
\end{align}
We next prove that $\ket{\tilde{\psi}(t_2)}$ can be approximated by the vector
\begin{equation}
\ket{\tilde{\psi}'}:=\sqrt{1-\epsilon'}\ket{\tilde{\psi}^-}-\sqrt{\epsilon'}\ket{\tilde{\psi}^+}.
\end{equation}
Indeed, 
\begin{align}
&\|\ket{\tilde{\psi}(t_2)}-\ket{\tilde{\psi}'}\|^2\nonumber\\
&\leq 2\left(1-(1-\epsilon')\cos(E^+ \Delta)+\epsilon'\mbox{Re}(\braket{\tilde{\psi}^+(t_2)}{\tilde{\psi}^+})\right)
\end{align}
Invoking eq. (\ref{bound_over}), we have that
\begin{align}
&\|\ket{\tilde{\psi}(t_2)}-\ket{\tilde{\psi}'}\|^2\nonumber\\
&\leq 4\left(1-(1-\epsilon')\cos(E^+ \Delta)-\epsilon+O(\delta)\right)\nonumber\\
&\leq 4\left(1-(1-\epsilon')\cos(E^+ \Delta)-\epsilon'+O(\delta)\right),\nonumber\\
&=O((E^+ \Delta)^2)+O(\delta),
\end{align}
where we made use of $\epsilon'\leq\epsilon$ in the third line.

Since $\ket{\tilde{\psi}^-}, \ket{\tilde{\psi}^-(t_2)}$ are orthogonal to $\ket{\tilde{\psi}^+}, \ket{\tilde{\psi}^+(t_2)}$ this implies, in particular, that
\begin{equation}
\sqrt{\epsilon'}\|\ket{\tilde{\psi}^+}+ \ket{\tilde{\psi}^+(t_2)}\|\leq \sqrt{O((E^+ \Delta)^2)+O(\delta)}.
\end{equation}
In turn, it follows that $\ket{\tilde{\psi}(t_3)}\approx \ket{\tilde{\psi}(t_1)}$. Let's see why. To begin, we bound the difference between the high energy parts of the two wave functions
\begin{align}
&\|\ket{\tilde{\psi}^+(t_3)}-\ket{\tilde{\psi}^+(t_1)}\|\nonumber\\
&\leq \|\ket{\tilde{\psi}^+(t_3)}+\ket{\tilde{\psi}^+(t_2)}\| +\|\ket{\tilde{\psi}^+(t_2)}+\ket{\tilde{\psi}^+(t_1)}\|\nonumber\\
&\leq \|e^{-iH\Delta}(\ket{\tilde{\psi}^+(t_2)}+\ket{\tilde{\psi}^+(t_1)})\|+\|\ket{\tilde{\psi}^+(t_2)}+\ket{\tilde{\psi}^+(t_1)}\|\nonumber\\
&=2\|\ket{\tilde{\psi}^+(t_2)}+\ket{\tilde{\psi}^+(t_1)}\|
\end{align}
Thus,
\begin{align}
&\sqrt{\epsilon'}\|\ket{\tilde{\psi}^+(t_3)}-\ket{\tilde{\psi}^+(t_1)}\|\nonumber\\
&\leq \sqrt{O((E^+ \Delta)^2)+O(\delta)}.
\end{align}
Next, we consider the whole wave function. The result is
\begin{align}
&\|\ket{\tilde{\psi}(t_3)}-\ket{\tilde{\psi}(t_1)}\|^2\nonumber\\
&=(1-\epsilon')\|\ket{\tilde{\psi}^-(t_3)}-\ket{\tilde{\psi}^-(t_1)}\|^2+\epsilon'\|\ket{\tilde{\psi}^+(t_3)}-\ket{\tilde{\psi}^+(t_1)}\|^2\nonumber\\
&\leq 2(1-\epsilon')(1-\cos(2E^+ \Delta))+O((E^+ \Delta)^2)+O(\delta)\nonumber\\
&=O((E^+ \Delta)^2)+O(\delta).
\end{align}
Finally, 
\begin{align}
&|\tilde{P}(1|0)-\tilde{P}(1|2\Delta)|\nonumber\\
&=|\bra{\tilde{\psi}(t_1)}\tilde{M}_1\ket{\tilde{\psi}(t_1)}-\bra{\tilde{\psi}(t_3)}\tilde{M}_1\ket{\tilde{\psi}(t_3)}|\nonumber\\
&\leq|\bra{\tilde{\psi}(t_1)}\tilde{M}_1(\ket{\tilde{\psi}(t_1)}-\ket{\tilde{\psi}(t_1)})|\nonumber\\
&+|(\bra{\tilde{\psi}(t_3)}-\ket{\tilde{\psi}(t_1)})\tilde{M}_1\ket{\tilde{\psi}(t_3)}|\nonumber\\
&\sqrt{O(\delta)+O((E^+ \Delta)^2)}.
\end{align}

\end{proof}

\section{The basics of semidefinite programming}
\label{app:SDP}
A semidefinite program is an optimization problem of the form \cite{sdp}:
\begin{align}
p^\star:=&\min_{\vec{x}\in\R^n} \vec{c}\cdot \vec{x}\nonumber\\
\mbox{such that }&F_0+\sum_{i=1}^n x_i F_i\geq 0,
\label{primal_problem}
\end{align}
where $F_0, F_1,...,F_n$ are symmetric square matrices. This problem is called the \emph{primal problem}. The \emph{dual problem}, which can also be cast as an SDP, is:
\begin{align}
d^\star:=&\max_{Z} -\tr(F_0Z)\nonumber\\
\mbox{such that }&Z\geq 0,\nonumber\\
&\tr(F_i Z)=c_i, i=1,...,n.
\end{align}
It is easy to see that $p^\star\geq d^\star$. Moreover, if either problem has \emph{strictly feasible points}, i.e., if there exists $\vec{x}$ such that the matrix $F_0+\sum_{i=1}^n x_i F_i$ is positive definite, or a positive definite matrix $Z$ such that $\tr(F_i Z)=c_i$ holds for all $i$, then $p^\star=d^\star$ \cite{sdp}. This condition is called \emph{strong duality}

To numerically solve a strongly dual problem, SDP solvers try to find feasible points of both primal and dual. Given any such pair of points $(\vec{x}, Z)$, it holds that $-\tr(ZF_0)\leq p^\star\leq \vec{c}\cdot \vec{x}$. SDP solvers stop the optimization when the difference between the found primal and dual values is small, of order $10^{-7}$. At that stage, the SDP can be considered solved for all practical purposes.

Sometimes our goal is simply to find out whether there exist $\vec{x}$ satisfying the constraints of (\ref{primal_problem}). This is a \emph{feasibility problem}. The easiest way to tackle it is to define the SDP:
\begin{align}
p^\star:=&\min_{\vec{x}\in\R^n,\lambda\in\R} \lambda\nonumber\\
\mbox{such that }&\lambda\id+F_0+\sum_{i=1}^n x_i F_i\geq 0.
\label{feasibility_problem}
\end{align}
There exist $\vec{x}$ satisfying $F_0+\sum_i x_iF_i\geq 0$ iff $p^\star\leq 0$. Moreover, the problem has strictly feasible points, and so it is strongly dual. Thus, if the original problem is infeasible, then the solver will find solutions of the dual problem with value greater than zero. Each such solution can be regarded as a \emph{certificate of infeasibility}.

Frequent convex constraints can be expressed through linear matrix inequalities, and so they can be part of SDPs. For instance, suppose that the problem variables are $\vec{x}$ and consider the constraint
\begin{equation}
K\geq \sum_j |f_j(\vec{x})|,
\end{equation}
where each $f_j$ is a linear affine function. This constraint is equivalent to:
\begin{align}
&\mu_j\geq \pm f_j(\vec{x}),\nonumber\\
&K\geq \sum_j\mu_j,
\end{align}
where $(\mu_j)_j$ are new optimization variables (\emph{slack variables}, in usual parlour).

Another useful observation is that a constraint of the form:
\begin{equation}
K\geq \bra{f(\vec{x})}A\ket{f(\vec{x})},
\label{quad_ineq}
\end{equation}
where $A\geq 0$ and $\ket{f(\vec{x})}$ is a vector that depends affine linearly of $\vec{x}$, can be expressed through linear matrix inequalities. Indeed, let $A=\sum_{k=1}^ma_k\proj{\phi_k}$, with $a_k>0$ for all $k$. Then, condition (\ref{quad_ineq}) is equivalent to:
\begin{align}
&\left(\begin{array}{cc}K&\vec{g}(\vec{x})^T\\\vec{g}(\vec{x})&\mbox{diag}(a_1^{-1},...,a_m^{-1})\end{array}\right)\geq 0,
\label{quad_ineq_SDP}
\end{align}
with
\begin{equation}
g_j(\vec{x})=\braket{\phi_j}{f(\vec{x})}.
\end{equation}
The logic behind eq. (\ref{quad_ineq_SDP}) is that a matrix $\Gamma$ of the form
\begin{equation}
\Gamma=\left(\begin{array}{cc}A&C\\C^T&B\end{array}\right),
\end{equation}
with $B>0$, is positive semidefinite iff
\begin{equation}
A-CB^{-1}C^T\geq 0,    
\end{equation}
see \cite{horn13}. Substituting this relation in eq. (\ref{quad_ineq_SDP}), we find
\begin{align}
K-\bra{g(\vec{x})}\sum_{k=1}^ma_k\proj{k}\ket{g(\vec{x})}\geq 0.
\end{align}
From the definition of $\vec{g}(\vec{x})$, the left-hand side equals
\begin{equation}
K-\bra{f(\vec{x})}A\ket{f(\vec{x})}.
\end{equation}

Following standard literature in convex optimization, we have been assuming so far that the considered SDP is defined on real variables and has real constraints. The SDPs in this work have, however, both complex variables and complex-valued constraints. Nonetheless, all results for real SDPs reviewed here also hold for complex SDPs, once one replaces every occurrence of ``symmetric" by ``self-adjoint"; and $\bullet^T$, by $\bullet^\dagger$. 

Notice further that the matrix $A$ may have a large range of eigenvalues including zero as well as close to zero ones. In such cases, the numerical tolerance has to be imposed carefully in order to avoid a badly conditioned~\ref{quad_ineq_SDP} that may lead to imprecise results.

\section{Convergence of the SDP hierarchies for $A(\bar{E})$ and $U(\Delta E)$}
\label{app:average_and_var_constraints}
The derivation of the convergence bounds for the SDP hierarchies $(A_m(\bar{E}))_m$, $(U_m(\Delta E))_m$ is very similar, and based on this simple observation:
\begin{prop}
\label{prop:cut_off}
Let $\E=(\tilde{E}_0,...,\tilde{E}_m)$ be a finite energy spectrum, and let ${\cal F}:=(F_0,...,F_m)$ be a vector of non-negative numbers, with $F_m=\max_jF_j>0$, and let $\{P(a|x,t_j)\}_j$ be a dataset (with times $t_1,...,t_N$) such that there exist a normalized distribution $(p_j)_{j=0}^m$, a complex, positive semidefinite $N\times N$ matrix $\gamma$ and $(m+N)\times (m+N)$ matrices $\{\tilde{M}_{a|x}\}_{a,x}$ satisfying:
\begin{align}
&\sum_{j=0}^mp_{j}F_j\leq \mathbb{F},\label{aver_spec_app}\\
&\tilde{M}_{a|x}\geq 0, \forall a,x,\\
&\sum_a\tilde{M}_{a|x}=\left(\begin{array}{cc}\mbox{diag}(p_0,...,p_{m-1})&0\\0&\gamma\end{array}\right),\forall x,\\
&\gamma_{kk}=p_m,\forall k,\\
&P(a|x,t_j)=p(a|x,t_j,\tilde{M}, \E),\forall a,x,j,
\end{align}
for some $\mathbb{F}\in\R^+,\mathbb{F}<F_m$. Then, there exists a dataset $\check{P}$ with realization $(\C^m, \ket{\check{\psi}}, \check{H}, \check{M})$, with $\check{H}=\sum_{j=0}^{m-1}\tilde{E}_j\proj{j}$, such that
\begin{enumerate}
    \item $\|P(t_j)-\check{P}(t_j)\|\leq 2\sqrt{\frac{\mathbb{F}}{F_m}}$.
    \item The energy weights $(\check{p}_k)_k$ of $\ket{\check{\psi}}$ satisfy $\sum_j\check{p}_jF_j\leq \mathbb{F}$.
\end{enumerate}
    
\end{prop}

\begin{proof}
Note that, for condition (\ref{aver_spec_app}) to hold, it must be the case that
\begin{equation}
p_m\leq \frac{\mathbb{F}}{F_m}.
\label{bound_occu}
\end{equation}

Define the matrix
\begin{equation}
\Lambda:=\left(\begin{array}{cc}\mbox{diag}(p_1^{-1/2},...,p_{m-1}^{-1/2})&0\\0&\gamma^{-1/2}\end{array}\right),
\end{equation}
where, for simplicity, we have assumed that $p_j>0$, for all $j$ and that $\gamma>0$ (the general case can be derived, e.g., by perturbing both objects to make them positive or positive definite, respectively). Then we have that
\begin{equation}
M_{a|x}:=\Lambda \tilde{M}_{a|x} \Lambda^\dagger
\end{equation}
define POVMs. Next, define the vectors
\begin{equation}
\ket{\psi^+_j}:=\frac{1}{\sqrt{p_m}}\gamma^{1/2}\ket{j}.
\end{equation}
Then we have that
\begin{equation}
\braket{\psi^+_j}{\psi^+_k}=\frac{1}{p_m}\gamma_{jk};
\end{equation}
in particular, $\{\ket{\psi^+_j}\}_j$ are normalized.

It is immediate to verify that
\begin{equation}
P(a|x,t_j)=\bra{\psi_j}M_{a|x}\ket{\psi_j},
\end{equation}
with
\begin{equation}
\ket{\psi_j}:=\sum_{k=0}^{m-1}\sqrt{p_k}e^{-iE_kt_j}\ket{k}\oplus\sqrt{p_{m}}\ket{\psi^+_j}.
\end{equation}
We next define the normalized vector
\begin{equation}
\ket{\check{\psi}}=\sum_{j=0}^{m-1}\sqrt{\frac{p_j}{1-p_m}}\ket{j}.
\end{equation}
Notice that
\begin{align}
&\sum_{k=0}^{m-1}\frac{p_k}{1-p_m}F_k=\sum_{k=0}^{m-1}p_kF_k+p_m\sum_{k=0}^{m-1}\frac{p_k}{1-p_k}F_k\nonumber\\
&\leq \sum_{k=0}^{m-1}p_kF_k+p_mF_m\leq \mathbb{F}.
\end{align}
For $\check{H}:=\sum_{k=0}^{m-1}E_k\proj{k}$, the dataset
\begin{equation}
\check{P}(a|x,t_j):=\bra{\check{\psi}}e^{i\check{H}t_j}M_{a|x}e^{-i \check{H}t_j}\ket{\check{\psi}}
\end{equation}
thus satisfies the second condition of the Proposition. Moreover, it can be verified that
\begin{equation}
\bra{\psi_j}e^{-iHt_j}\ket{\check{\psi}}= \sqrt{1-p_m}.
\label{diffe_vectors}
\end{equation}
Hence,
\begin{align}
&\sum_a|P(a|x,t_j)-\check{P}(a|x,t_j)|\nonumber\\
&\leq \|\proj{\psi_j}-e^{-i \check{H}t_j}\proj{\psi}e^{i \check{H}t_j}\|_1=2\sqrt{p_m}\nonumber\\
&\leq 2\sqrt{\frac{\mathbb{F}}{F_m}},
\end{align}    
\end{proof}

Now, consider $P\in A_m(\bar{E})$. Condition (\ref{aver_spec}) implies that eq. (\ref{aver_spec_app}) holds with $F_k=\tilde{E}_k\geq 0$. Moreover, by eq. (\ref{minimum_m}), it holds that $E_m\geq \bar{E}$. Thus, by Proposition \ref{prop:cut_off} there exists $\check{P}\in A(\bar{E})$ such that
\begin{equation}
\|\check{P}(t_j)-p(t_j,\tilde{M},\E) \|\leq 2\sqrt{\frac{\bar{E}}{\eta}}.
\end{equation}
Combining this with eq. (\ref{approx_aver_E}), we have that
\begin{align}
\|\check{P}(t_j)-P(t_j)\|&\leq 2\sin\left(\frac{\eta \texttt{t}}{m}\right)+2\sqrt{\frac{\bar{E}}{\eta}}\nonumber\\
&\leq 2\left(\frac{\eta \texttt{t}}{m}+\sqrt{\frac{\bar{E}}{\eta}}\right).
\end{align}
The minimum of the right-hand side of the equation above over all possible values of $\eta$ is the right-hand side of (\ref{error_aver_hier}), and the minimizer is given by eq. (\ref{eta_optim}). Note that, for eq. (\ref{approx_aver_E}) to hold, it is necessary that 
\begin{equation}
m\geq \frac{2\eta\max_j|t_j|}{\pi}.
\label{rel_m_and_E_plus}
\end{equation}
Substituting the optimal $\eta$, we find that this condition is equivalent to 
\begin{equation}
m\geq \frac{2\bar{E}\texttt{t}}{\pi^3},
\end{equation}
which is already implied by eq. (\ref{minimum_m}).

Now, for $P\in U_{\tilde{m}}(\Delta E)$, we use the correspondence $F_{k}=\tilde{E}^2_{k-\tilde{m}+1}$. From eqs. (\ref{min_m_var}) and (\ref{def_eta_var}), it follows that $\eta^2=\tilde{F}_m\geq \mathbb{F}=:(\Delta E)^2$. By Proposition \ref{prop:cut_off}, there thus exists $\check{P}$ with 
\begin{equation}
\|\check{P}(t_j)-p(t_j,\tilde{M},\E) \|\leq 2\frac{\Delta E}{\eta}.
\end{equation}
satisfying 
\begin{equation}
\bra{\check{\psi}}\check{H}^2\ket{\check{\psi}}\leq (\Delta E)^2,
\end{equation}
It thus follows that
\begin{equation}
\bra{\check{\psi}}\check{H}^2\ket{\check{\psi}}-\bra{\check{\psi}}\check{H}\ket{\check{\psi}}^2\leq (\Delta E)^2,
\end{equation}
and so $\check{P}\in U(\Delta E)$. Combining this with eq. (\ref{approx_var}), we find that $\check{P}$ satisfies:
\begin{align}
\|\check{P}(t_j)-P(t_j)\|&\leq 2\sin\left(\frac{\eta \texttt{t}}{\tilde{m}}\right)+2\frac{\Delta E}{\eta}\nonumber\\
&2\left(\frac{\eta \texttt{t}}{\tilde{m}}+\frac{\Delta E}{\eta}\right).
\end{align}
Minimizing with respect to $\eta$, we obtain the right-hand side of eq. (\ref{error_var_hier}), with minimizer (\ref{def_eta_var}).

\section{Decay matrices}
\label{app:overlaps}
The goal of this section is to arrive at a practical characterization of the following cone of complex matrices.
\begin{defin}
The set of \emph{decay matrices} $G(\vec{t})$ is the closure of the cone of square matrices $\gamma$ such that
\begin{equation}
\gamma_{jk}=\int_{-\infty}^\infty dE e^{-iE(t_k-t_j)}\rho^+(E),j,k=1,...,N,
\label{def_G}
\end{equation}
for some non-normalized measure $\rho^+(E) dE$.
\end{defin}
To motivate it, consider this family of cones of matrices.
\begin{defin}
Let $t_1,...,t_N, E^+\in\R$. The cone $G(E^+,\vec{t})$ is the closure of the set of matrices $\gamma$ of the form:
\begin{equation}
\gamma_{jk}=\int_{E^+}^\infty dE \rho^+(E)e^{-iE(t_j-t_k)},j,k=1,...,N,
\end{equation}
where $\rho^+(E)dE$ is a (possibly non-normalized) measure.
\end{defin}
By definition, for each $\gamma\in G(E^+,\vec{t})$ and each $\delta>0$, there exists a (non-normalized) state $\ket{\psi^+}$ in some Hilbert space $\H^+$ and a Hamiltonian $H^+\geq E^+$ such that 
\begin{equation}
|\gamma_{jk}-\bra{\tilde{\psi}^+}e^{-iH^+(t_k-t_j)}\ket{\tilde{\psi}^+}| \leq \delta,\forall j,k.
\label{hamil2gamma}
\end{equation}
The set $G(E^+,\vec{t})$ thus plays a role in the characterization of the soft sets $S(\E;\epsilon), S(E^+;\epsilon)$, see Sections \ref{sec:charact_soft}, \ref{sec:charact_soft_continuous}.

We next show that $G(E^+,\vec{t})$ is, in fact, independent of $E^+$. From now on, we will thus consider 
$G(\vec{t})$.
\begin{lemma}
\label{lemma_indent_Gs}
$G(E^+,\vec{t})=G(-\infty,\vec{t})=G(\vec{t})$.
\end{lemma}
\begin{proof}
Clearly $G(E^+,\vec{t})\subset G(-\infty,\vec{t})$, so we next prove the opposite inclusion relation. 

Let $\gamma\in G(-\infty,\vec{t})$. Then, for any $\delta>0$, there exists a measure $\rho(E)dE$ over the reals such that the matrix
\begin{equation}
\gamma'_{jk}:=\int_{-\infty}^\infty dE \rho(E)e^{-iE(t_j-t_k)}
\label{measu_inf}
\end{equation}
satisfies
\begin{equation}
|\gamma_{jk}-\gamma'_{jk}|\leq \delta,\forall j,k.
\end{equation}
Since $(\gamma'_{jk})_{jk}$ is a conic combination of matrices of the form
\begin{equation}
(e^{-iE(t_k-t_j)})_{jk},
\end{equation}
for $E\in \R$, by Caratheodory's theorem we can choose $\rho(E)dE$ to be of the form
\begin{equation}
\rho(E)=\sum_{j=1}^L\lambda_j\delta(E-E_j),
\end{equation}
with $L\leq\frac{N(N-1)}{2}+1$. Now, let $E_1=\min_j E_j$. If $E_1\geq E^+$, then $\gamma'\in G(E^+,\vec{t})$ and we are done. Otherwise, choose $\Delta\geq E^+-E_1$ such that
\begin{equation}
|e^{-i\Delta t_j}-1|<\delta, \forall j.
\end{equation}
The existence of such an energy difference $\Delta$ is guaranteed from the fact that the trajectory
\begin{equation}
(e^{-i\Delta t_1},...,e^{-i\Delta t_N}):\Delta\in\R
\end{equation}
is almost periodic in $[0,2\pi)^N$ \cite{chávez2020}. Next, define the distribution
\begin{equation}
\rho^+(E):=\rho(E-\Delta).
\end{equation}
By definition, $\rho^+(E)$ has support in $[E^+,\infty)$. Moreover,
\begin{align}
\gamma^+_{jk}&:=\sum_l\lambda_l e^{-i(E_l-\Delta)(t_j-t_k)}\nonumber\\
&=\sum_l\lambda_l e^{-iE_l(t_j-t_k)}+r^\delta_{jk}\nonumber\\
&=\gamma'_{jk}+r^\delta_{jk},
\end{align}
with
\begin{equation}
|r^\delta_{jk}|\leq \delta\sum_l\lambda_l\leq (\gamma_{11}+\delta)\delta,
\end{equation}
where we invoked the relations $|\gamma_{11}-\gamma'_{11}|\leq \delta$ and $\gamma'_{11}=\sum_l\lambda_l$.

Thus, for any $\delta>0$, there exists $\gamma^+\in G(E^+,\vec{t})$ such that
\begin{equation}
|\gamma_{jk}-\gamma^+_{jk}|\leq \delta+(\gamma_{11}+\delta)\delta.
\end{equation}
Since $G(E^+,\vec{t})$ is closed, it follows that $\gamma\in G(E^+,\vec{t})$.
    
\end{proof}

What kind of constraints is $G(\vec{t})$ subject to? Clearly, for any $\vec{t}=(t_1,...,t_N)$, we have that $G(\vec{t})\subset C_N$, where $C_N$ denotes the cone of correlation matrices (namely positive semidefinite matrices with identical diagonal terms). However, as we will next see, for general $\vec{t}$ it is not the case that $G(\vec{t})= C_N$.

\subsection{The general case}
Given measurement times $t_1,...,t_N$, let $n$ denote the smallest natural number such that
\begin{equation}
t_k=\sum_{j=1}^n a_{kj}s_j+t_0, \mbox{ for } k=1,...,N,
\label{rational_decomp0}
\end{equation}
where $\{a_{kj}\}_{kj}\subset \Z$, $t_0\in\R$, $\{s_j\}_j\subset \R$. Note that it must hold that the generators $\{s_j\}_j$ are \emph{non-congruent} numbers, i.e., there does not exist a non-zero vector of rational numbers $(q_j)_j$ such that
\begin{equation}
\sum_{j=1}^nq_js_j=0.
\label{congruency}
\end{equation}
Otherwise, one could express one of the numbers $\{s_j\}_j$ as a rational linear combination of the others. Dividing the rest by the minimum common denominator, we would then have another decomposition like (\ref{rational_decomp0}) with one generator less.

\begin{lemma}
\label{lemma_all_Gs}
Let $t_1,...,t_N$ admit a decomposition like (\ref{rational_decomp0}), with non-congruent $\{s_j\}_j$. Then, $G(\vec{t})$ equals the cone $Z(a)$ of matrices generated by
\begin{equation}
\gamma_{jk}(z)=\prod_{l=1}^nz_l^{a_{kl}-a_{jl}},
\label{gamma_convex}
\end{equation}
with $z\in \mathbb{T}^n:=\{z\in\C^n:|z_l|=1, \forall l\}$.
\end{lemma}
\begin{proof}
Let $\gamma\in G(\vec{t})$. Then, for any $\delta>0$, there exists a (non-normalized) measure $\rho(E)dE$ such that the matrix $\gamma'$, defined by
\begin{align}
\gamma'_{jk}:=\int \rho(E)dE e^{-iE(t_k-t_j)},
\end{align}
satisfies $|\gamma_{jk}-\gamma'_{jk}|\leq \delta$, for all $j,k$. Expressing $\{t_j\}_j$ in terms of $t_0$ and the generators, we have that
\begin{align}
\gamma'_{jk}=\int \rho(E)dE e^{-iE(\sum_l(a_{jl}-a_{kl})s_l)},
\end{align}
Define $z_l(E):=e^{-iEs_l}$. Then we have that
\begin{align}
\gamma'_{jk}=\int \rho(E)dE \prod_lz_l(E)^{a_{kl}-a_{jl}}.
\end{align}
Thus, $\gamma'\in Z(a)$. Since $\delta>0$ is arbitrary, this proves that $G(\vec{t})\subset Z(a)$.

Conversely, suppose that $\gamma \in Z(a)$. Then there exists a measure $\rho(z)dz$ with support in $\mathbb{T}^n$ such that 
\begin{equation}
\gamma_{jk}=\int \rho(z)dz\gamma_{jk}(z).
\end{equation}

Since $\{s_j\}_j$ are not congruent, the trajectory 
\begin{equation}
(e^{-iEs_1},...,e^{-iEs_n}):E\in\R
\end{equation}
is dense in $\mathbb{T}^n$. This implies that, for any $\delta$, there exists $E(z)\in \R$ such that 
\begin{equation}
|\gamma_{jk}(z)-e^{-iE(z)(t_j-t_k)}|\leq \delta,
\end{equation}
for all $j,k$. Thus, the matrix $\gamma'\in G(\vec{t})$, defined through
\begin{equation}
\gamma'_{jk}=\int \rho(z)dze^{-iE(z)(t_j-t_k)},
\end{equation}
satisfies
\begin{equation}
|\gamma'_{jk}-\gamma_{jk}|\leq \delta\int \rho(z)dz=\delta \gamma_{11}.
\end{equation}
Since $\delta$ can be made arbitrarily small, we then have that $Z(a)\subset G(\vec{t})$.

\end{proof}

\begin{remark}
\label{remark_func2pol}
There is a one-to-one correspondence between the set $F$ of linear functionals $h:Z(a)\to \R$ and the set of symmetric trigonometric polynomials of the form
\begin{equation}
h(z,z*)=\sum_{\vec{j}\in {\cal Z}(a)}h(\vec{j})z^{\vec{j}},
\label{polyZ}
\end{equation}
where $z^{\vec{j}}=\prod_{k=1}^nz^{j_k}$ and ${\cal Z}(a)\subset \mathbb{Z}^n$ is defined by
\begin{equation}
{\cal Z}(a)=\{(a_{j,l}-a_{k,l})_l:j,k=1,...,n\}.
\end{equation}
Indeed, define $\vec{a}^j:=(a_{j,l})_l$ and let $\gamma\in Z(a)$, with measure $\sigma(z)dz$. For any matrix $H$, it holds that
\begin{align}
&\sum_{jk}H_{jk}\gamma_{jk}\nonumber\\
&=\int dz\sigma(z)\sum_{jk}H_{jk}z^{\vec{a}^j-\vec{a}^k}\nonumber\\
&=\int dz\sigma(z)h(z,z^*),
\end{align}
where $h(z,z^*)$ is a polynomial of the form (\ref{polyZ}). If $\sum_{jk}H_{jk}\gamma_{jk}\in \R$ for all $\gamma\in Z(a)$, then $h$ must be symmetric, i.e., $h(z,z^*)\in \R$, for all $z\in \mathbb{T}^n$. In turn, any symmetric polynomial $h$ defines a linear functional on $Z(a)$ by 
\begin{equation}
\gamma(h):=\sum_{\vec{j}\in {\cal Z}(a)}h_{\vec{j}}\gamma_{k(\vec{j}),l(\vec{j})},
\end{equation}
where $k(\vec{j}),l(\vec{j})$ are any pair of functions such that $\vec{a}^{l(\vec{j})}-\vec{a}^{k(\vec{j})}=\vec{j}$, for $\vec{j}\in {\cal Z}(a)$.

\end{remark}

\begin{remark}
\label{remark_id_G}
For all $a$, $\id\in Z(a)$. In effect, consider the distribution $\sigma(z)dz$, with
\begin{equation}
\sigma(z)=\frac{1}{R^n}\sum_{j_1,....,j_n=0}^{R-1}\prod_{l=1}^n\delta\left(z_l-e^{-i\frac{2\pi j_l}{R}}\right),
\end{equation}
where $R=2\max_{j,k,l}|a_{kl}-a_{jl}|$. It can be verified that 
\begin{equation}
\int dz\sigma(z)\gamma_{jk}(z)=\delta_{jk}.    
\end{equation}

\end{remark}

The cone $Z(a)$ can be characterized through the Lasserre-Parrilo hierarchy of SDP relaxations \cite{lasserre} \cite{parrilo}. In a nutshell, one would consider the set $\R\langle (z_1,...,z_n)\rangle$ of polynomials in the complex variables $(z_1,...,z_n)$ of module $1$ and their complex conjugates $(\bar{z}_1,....,\bar{z}_N)=(z^{-1}_1,....,z^{-1}_N)$. Next, given some distribution $\rho(z)dz$ of weight $p_{n}$, one thinks of the properties of \emph{momenta} $y$ of the form
\begin{equation}
y(k_1,...,k_n)=\langle\prod_{j=1}^nz^{k_j}\rangle_\rho,
\end{equation}
where $k_1,...,k_n\in\Z$. For any $m\in\N$, it turns out that the \emph{moment matrix}
\begin{equation}
(W_m(y))_{\vec{j},\vec{k}}=y(k_1-j_1,...,k_n-j_n),
\end{equation}
labeled by the vectors $\vec{j},\vec{k}\in\{-m,...,m\}^n$, is positive semidefinite, see \cite{lasserre}. We relate the momenta $y$ with the matrix $\gamma$ through
\begin{equation}
y(a_{k,1}-a_{j,1},...,a_{k,n}-a_{j,n})=\gamma_{jk},\forall j,k.
\end{equation}
We are ready to provide an SDP relaxation for $G(\vec{t})$.
\begin{defin}
Let $\vec{t}$ admit a minimal decomposition  (\ref{rational_decomp0}), and let $m\geq \max_{i,j,k,l}|a_{ij}-a_{kl}|$. $G^m(\vec{t})$ is the set of all $N\times N$ matrices $\gamma$ such that there exists $(y(\vec{k}))_{\vec{k}\in\{-m,...,m\}^n}\in \C^{(2m+1)^n}$, with
\begin{align}
&W_m(y)\geq 0,\nonumber\\
&y(a_{k,1}-a_{j,1},...,a_{k,n}-a_{j,n})=\gamma_{jk},\forall j,k.
\end{align}
\end{defin}
From the above, it follows that $G(\vec{t})\subset G^m(\vec{t})$. Moreover, $\lim_{m\to\infty}G^m(\vec{t})=G(\vec{t})$ \cite{lasserre, parrilo}. 

\begin{remark}
\label{remark_func2pol_m}
The argument of remark \ref{remark_func2pol} generalizes to $G^m(\vec{t})$: the set of linear functionals on $G^m(\vec{t})$ is also in one-to-one correspondence with the set of symmetric polynomials $h(z,z^*)$. 
\end{remark}

The work of \cite{bounds_lasserre} provides bounds on the convergence of the Lasserre-Parrilo hierarchy when applied to solve polynomial optimization problems within $\mathbb{T}^n$. Such bounds allow us to prove the following convenient result.
\begin{prop}
\label{prop_G_lasserre}
Let $\gamma\in G^m(\vec{t})$, and define
\begin{equation}
d:=\left\lceil\frac{1}{2}\max_{j,k}|a_{jl}-a_{kl}|\right\rceil.
\label{def_d_K0}
\end{equation}
Then, for $m\geq 3d$ and $\varepsilon\leq 1$, with
\begin{align}
&\varepsilon:=N(N-1)\left[\left(1-\frac{6d^2}{m^2}\right)^{-n}-1\right],\nonumber\\
&\approx \frac{6d^2n N(N-1)}{m^2},
\label{def_varepsilon_d}
\end{align}
it holds that the matrix
\begin{equation}
\tilde{\gamma}=\frac{1}{1+\varepsilon}\gamma +\frac{\varepsilon}{1+\varepsilon}\gamma_{11}\id_N
\label{def_tilde_gamma}
\end{equation}
satisfies $\tilde{\gamma}\in G(\vec{t})$.

\end{prop}
In other words: for large $m$, it is enough to mix $\gamma\in G^m(\vec{t})$ a little bit with the $\gamma_{11}$-scaled identity matrix to arrive at a feasible decay matrix $\tilde{\gamma}$.

To prove the proposition, we first need to prove a minor lemma.
\begin{lemma}
\label{lemma_min2norm}
Let $h(z,z^*)$ be a symmetric, trigonometric polynomial with no constant term, i.e.,
\begin{equation}
h(z,z^*)=\sum_{\vec{j}\in Z}h_{\vec{j}}z^{\vec{j}}+h^*_{\vec{j}}z^{\vec{j}}z^{-\vec{j}},
\end{equation}
where $Z\subset \Z^n\setminus\{\vec{0}\}$ is such that, if $\vec{j}\in Z$, then $-\vec{j}\not\in Z$. Then,
\begin{equation}
-\min_{z\in \mathbb{T}^n}h(z,z^*)\geq \max_{\vec{j}\in Z}|h_{\vec{j}}|.
\label{upper_bound_min}
\end{equation}
Thus,
\begin{equation}
\|h\|_F:=\sum_{\vec{j}}|h_{\vec{j}}|\leq -|Z|\min_{z\in \mathbb{T}^n}h(z,z^*).
\label{min2norm}
\end{equation}
\end{lemma}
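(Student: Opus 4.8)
The plan is to work with harmonic analysis on the torus $\mathbb{T}^n$, equipped with its normalized Haar measure $d\mu(z)=\prod_{l=1}^n \frac{d\theta_l}{2\pi}$ (writing $z_l=e^{i\theta_l}$). Since $h$ is symmetric it is a real-valued function on $\mathbb{T}^n$, and since it has no constant term its Haar average vanishes:
\begin{equation}
\int_{\mathbb{T}^n} h(z,z^*)\,d\mu(z)=0.
\end{equation}
The characters $\{z^{\vec{k}}\}_{\vec{k}\in\Z^n}$ form an orthonormal system, $\int z^{\vec{k}}z^{-\vec{k}'}\,d\mu=\delta_{\vec{k},\vec{k}'}$, which I will use to isolate the individual Fourier coefficients of $h$.

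First I would fix $\vec{j}\in Z$ and recover its coefficient by integrating $h$ against $z^{-\vec{j}}$. Because $\vec{j}\in Z$ implies $-\vec{j}\notin Z$, the conjugate ($z^{-\vec j}$) modes carry no index coinciding with $\vec{j}$, and orthonormality gives
\begin{equation}
h_{\vec{j}}=\int_{\mathbb{T}^n} h(z,z^*)\,z^{-\vec{j}}\,d\mu(z).
\end{equation}
The key step is then to subtract off the minimum. Set $m_0:=\min_{z\in\mathbb{T}^n}h(z,z^*)$, which is attained by compactness and satisfies $m_0\le 0$ by the zero-mean property. Since $\vec{j}\neq\vec{0}$ we have $\int z^{-\vec{j}}\,d\mu=0$, so
\begin{equation}
h_{\vec{j}}=\int_{\mathbb{T}^n}\bigl(h(z,z^*)-m_0\bigr)\,z^{-\vec{j}}\,d\mu(z).
\end{equation}
Now the integrand is a nonnegative function $h-m_0\ge 0$ multiplied by a phase of modulus one, so taking absolute values and using $\int(h-m_0)\,d\mu=-m_0$ yields
\begin{equation}
|h_{\vec{j}}|\le\int_{\mathbb{T}^n}\bigl(h(z,z^*)-m_0\bigr)\,d\mu(z)=-m_0=-\min_{z\in\mathbb{T}^n}h(z,z^*).
\end{equation}

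Maximizing over $\vec{j}\in Z$ gives eq.~(\ref{upper_bound_min}) directly, and eq.~(\ref{min2norm}) then follows from the crude bound $\|h\|_F=\sum_{\vec{j}\in Z}|h_{\vec{j}}|\le |Z|\max_{\vec{j}\in Z}|h_{\vec{j}}|$. I do not expect any genuine obstacle here: the whole argument rests on the single observation that subtracting the (nonpositive) minimum turns $h$ into a nonnegative function whose integral equals $-m_0$, after which multiplying by the unit-modulus phase $z^{-\vec{j}}$ can only decrease the modulus of the integral. The only points requiring a line of care are fixing the normalization of Haar measure (so the characters are genuinely orthonormal and the constant mode integrates to $1$) and the remark that the antipodal condition on $Z$ prevents the conjugate coefficients from polluting the extraction of $h_{\vec{j}}$.
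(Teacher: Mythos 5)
Your proof is correct, and it takes a genuinely different route from the paper's. The paper proves eq.~\eqref{upper_bound_min} by exhibiting an explicit \emph{discrete} probability measure on a finite subgrid of $\mathbb{T}^n$ --- a shifted Fej\'er-type weight $\frac{1}{(2R+1)^n}\sum_{\vec k}\bigl(1-\cos(\tfrac{2\pi\vec k\cdot\vec s}{2R+1}-\theta)\bigr)$ concentrated on roots of unity --- chosen so that its pairing with $h$ equals exactly $-|h_{\vec s}|$ for the index $\vec s$ of maximal modulus; the bound on $\min_z h$ then follows because the minimum cannot exceed any average. You instead use the continuous Haar measure and the elementary Fourier estimate $|\hat f(\vec j)|\le \|f\|_{L^1}$ applied to the nonnegative, zero-mean-shifted function $h-m_0$: since $\int z^{-\vec j}d\mu=0$ for $\vec j\neq\vec 0$, subtracting the minimum does not change the coefficient, and the $L^1$ norm of $h-m_0$ is exactly $-m_0$. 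Your argument is shorter, bounds \emph{all} coefficients uniformly in one stroke rather than only the maximal one, and avoids the grid construction and its normalization bookkeeping (which in the paper's version contains minor inconsistencies between the $(2R+1)^n$ prefactor, the summation range, and the points $e^{-i2\pi k_l/R}$). The one hypothesis you both lean on equally is the antipodal condition on $Z$, which you correctly invoke to ensure the conjugate modes $h^*_{\vec k}z^{-\vec k}$ do not contaminate the extraction of $h_{\vec j}$; and the passage from \eqref{upper_bound_min} to \eqref{min2norm} is the same crude counting bound in both proofs.
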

\begin{proof}
Let $\vec{s}\in Z$ be such that $|h_{\vec{s}}|\geq |h_{\vec{k}}|$, $\forall \vec{k}\in Z$, let $\theta\in\R$ be the complex phase of $h_{\vec{s}}$ (i.e., $h_{\vec{s}}=|h_{\vec{s}}|e^{i\theta}$) and let $R\in \N$ be such that $R\geq 2|j_k|$, for all $\vec{j}\in Z$. Consider the probability distribution
\begin{align}
&\sigma(z)dz\nonumber\\
&=\frac{1}{(2R+1)^n}\sum_{\vec{k}\in \{-R,...,R\}^n\}}\left(1-\cos\left(\frac{2\pi\vec{k}\cdot \vec{s}}{2R+1}-\theta\right)\right)\nonumber\\
&\cdot\prod_{l=1}^n\delta\left(z_l-e^{-i\frac{2\pi k_l}{R}}\right)dz.
\end{align}
One can verify that $\sigma(z)dz$ is indeed non-negative and normalized. Moreover, 
\begin{equation}
\int_{\mathbb{T}^n} \sigma(z)h(z,z^*)dz=-|h_{\vec{s}}|.
\end{equation}
This implies eq. (\ref{upper_bound_min}).

Eq. (\ref{min2norm}) follows from eq. (\ref{upper_bound_min}) and the obvious inequality 
\begin{align}
\sum_{\vec{j}\in Z}|h_{\vec{j}}|\leq |Z|\max_{\vec{j}}|h_{\vec{j}}|.
\end{align}
    
\end{proof}

\begin{proof}[Proof of Proposition \ref{prop_G_lasserre}]
Let $\gamma\in G^m(\vec{t})$ and let $\tilde{\gamma}$ be as defined by eq. (\ref{def_tilde_gamma}). Assume that $\tilde{\gamma}\not\in G(\vec{t})$. Since $G(\vec{t})$ is convex and closed, by the Hahn-Banach theorem there exists a functional $h$, which by Remark \ref{remark_func2pol} we can identify with a symmetric trigonometric polynomial with support in ${\cal Z}(a)$, such that
\begin{equation}
\hat{\gamma}(h)\geq 0, \forall \hat{\gamma}\in G(\vec{t}),
\label{separation}
\end{equation}
and $\tilde{\gamma}(h)<0$.

Let us decompose $h$ as $h(z,z^*)=\lambda+h'(z,z^*)$, with $\lambda\in\R$ and $h'(z,z^*)$ having no constant term. By eq. (\ref{separation}) and the identity $Z(a)=G(\vec{t})$, the polynomial $h$ must satisfy $h(z,z^*)\geq 0$, for all $z\in \mathbb{T]}_n$. 

By Lemma \ref{lemma_min2norm}, $\min_{z}h'(z,z)<0$ if $h'(z,z)\not=0$, which implies that $\lambda>0$. Re-scaling $h$, we can choose $\lambda=1$. In that case, also by Lemma \ref{lemma_min2norm}, we have that
\begin{equation}
\|h'\|_F\leq -N(N-1)\min_{z\in \mathbb{T}^n} h'(z,z^*) \leq N(N-1).
\label{bound_F_norm}
\end{equation}
Recall that $\gamma\in G^m(\vec{t})$. Adapting \cite[Theorem 1]{bounds_lasserre} to a scenario with unnormalized moment matrices, we have that
\begin{equation}
\gamma(h)\geq -\epsilon\gamma_{11},
\end{equation}
for
\begin{equation}
\epsilon\leq \|h'\|_F\left[\left(1-\frac{6r^2}{m^2}\right)^{-n}-1\right],
\end{equation}
where $r:=\left\lceil\frac{\mbox{deg}(h)}{2}\right\rceil$. Note that $r=d$, as defined in eq. (\ref{def_d_K0}).
By eq. (\ref{bound_F_norm}) we thus have that $\epsilon\leq\varepsilon$. It follows that
\begin{align}
&\tilde{\gamma}(h)=\frac{\varepsilon}{1+\varepsilon}\gamma_{11}+\frac{\gamma(h)}{1+\varepsilon}\nonumber\\
&\geq \frac{1}{1+\varepsilon}\gamma_{11}(\varepsilon-\epsilon)\geq 0.
\end{align}
Thus, $\tilde{\gamma}(h)\geq 0$ and we reach a contradiction. This implies that $\tilde{\gamma}\in G(\vec{t})$.

\end{proof}

\subsection{Exact SDP representability}
In the previous section, we provided a simple recipe to generate a complete hierarchy of SDP relaxations for $G(\vec{t})$. In this section, we examine under which circumstances the set $G(\vec{t})$ admits an exact SDP representation, i.e., no hierarchies are needed.

\subsubsection{Equidistant times}
\label{app:equidistant}
Suppose that the decomposition (\ref{rational_decomp0}) is such that $n=1$. That is, 
\begin{equation}
t_k=a_k s+t_0,
\label{chipped_eq}
\end{equation}
with $a_1,...,a_N\in\N$. We next show that the corresponding set $G(\vec{t})$ admits an SDP representation. 

In fact, for ease of notation, we will provide an SDP representation of $G(\vec{t})$ when $\vec{t}$ is a vector of equidistant times, i.e., there exist $\Delta,t_0'$ such that
\begin{equation}
t_k=k \Delta+t'_0.
\label{full_eq}
\end{equation}
Note that the times (\ref{full_eq}) are a superset of (\ref{chipped_eq}), for $s=\Delta$ and $t'_0=t_0+\Delta(\min_k a_k+1)$. Thus, decay matrices for the measurement times (\ref{chipped_eq}) are submatrices of decay matrices for the measurement times (\ref{full_eq}).

Let then $t_k=k\Delta+t'_0$, for $k=1,...,N$. If $\gamma\in G(\vec{t})$, then there exists some unnormalized measure $\rho(E)dE$ such that 
\begin{equation}
\gamma_{jk}=\int_{-\infty}^{\infty}dE \rho(E)e^{-iE\Delta(k-j)},
\end{equation}
which implies that
\begin{equation}
\gamma_{jk}=g(k-j),
\end{equation}
for some function $g$.

As it turns out, that and positive semidefiniteness are the only features characterizing $G(\vec{t})$ in this scenario.
\begin{lemma}
For $\vec{t}=(\Delta,2\Delta,...,N\Delta)+t'_0$, define the SDP-representable set of matrices:
\begin{equation}
\tilde{G}(\vec{t})=\{\gamma:\gamma\geq 0, \exists y,\gamma_{jk}=y(k-j),\forall j,k\}.
\end{equation}  
Then, $G(\vec{t})=\tilde{G}(\vec{t})$.
\end{lemma}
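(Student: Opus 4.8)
The plan is to prove the two inclusions separately. The inclusion $G(\vec{t})\subseteq \tilde{G}(\vec{t})$ is immediate, whereas the reverse inclusion is the substantial direction and rests on the classical solution of the truncated trigonometric moment problem. Throughout I use that for equidistant times $t_k=k\Delta+t'_0$ one has $t_k-t_j=(k-j)\Delta$, and I invoke Lemma \ref{lemma_indent_Gs} so that the defining integral of a decay matrix may be taken over all of $\R$.

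For the easy direction I would take any $\gamma\in G(\vec{t})$. By definition there is a non-negative (unnormalized) measure $\rho(E)\,dE$ with $\gamma_{jk}=\int_{-\infty}^{\infty}\rho(E)\,e^{-iE\Delta(k-j)}\,dE$. This manifestly depends only on $k-j$, so $\gamma_{jk}=y(k-j)$ with $y(m):=\int\rho(E)\,e^{-iE\Delta m}\,dE$; moreover, for any vector $v$ one has $v^\dagger\gamma v=\int\rho(E)\,\big|\sum_k v_k e^{-iE\Delta k}\big|^2\,dE\geq 0$, so $\gamma\geq 0$. Hence $\gamma\in\tilde{G}(\vec{t})$.

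For the reverse direction I would take $\gamma\in\tilde{G}(\vec{t})$, i.e.\ a Hermitian positive semidefinite Toeplitz matrix with entries $\gamma_{jk}=y(k-j)$ (positive semidefiniteness forces $y(-m)=\overline{y(m)}$ and $y(0)\in\R$). The key step is to invoke the Carath\'eodory--Toeplitz theorem, equivalently the Vandermonde (Carath\'eodory--Fej\'er) decomposition of a positive semidefinite Toeplitz matrix: there exist finitely many points $\theta_1,\dots,\theta_L\in[0,2\pi)$ and weights $p_1,\dots,p_L\geq 0$ such that $y(m)=\sum_{l=1}^{L}p_l\,e^{-im\theta_l}$ for all $|m|\leq N-1$. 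In other words, $y(m)$ is the $m$-th Fourier coefficient of the non-negative atomic measure $\mu=\sum_l p_l\,\delta_{\theta_l}$ on the circle $\mathbb{T}$.

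It then remains only to transport this measure from the circle to the energy line. Setting $E_l:=\theta_l/\Delta\in[0,2\pi/\Delta)$ and defining the non-negative measure $\rho(E)\,dE:=\sum_l p_l\,\delta(E-E_l)\,dE$, one gets $\int\rho(E)\,e^{-iE\Delta m}\,dE=\sum_l p_l\,e^{-i\theta_l m}=y(m)$, and therefore $\gamma_{jk}=\int\rho(E)\,e^{-iE\Delta(k-j)}\,dE$. Since $t_k-t_j=(k-j)\Delta$, this exhibits $\gamma$ as an \emph{exact} decay matrix, so that $\gamma\in G(\vec{t})$ (the representation being exact, the closure in the definition of $G$ is not needed). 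The only genuine content is the existence of the representing measure, and I expect this --- the truncated trigonometric moment problem --- to be the crux of the argument, although it is entirely standard and can be cited rather than reproved.
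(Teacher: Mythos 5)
Your proof is correct, but it takes the dual route to the one in the paper. The paper argues by contradiction: assuming $\gamma\in\tilde{G}(\vec{t})\setminus G(\vec{t})$, it invokes the Hahn--Banach theorem to produce a separating functional, identifies it with a trigonometric polynomial of degree at most $N-1$ that is non-negative on the unit circle, and then uses the Fej\'er--Riesz factorization $p(z,\bar z)=q(z)^*q(z)$ to show that $\gamma(p)=\bra{q}\gamma\ket{q}\geq 0$, a contradiction. You instead work on the primal side, invoking the Carath\'eodory--Toeplitz (Vandermonde) decomposition of a positive semidefinite Toeplitz matrix to exhibit an explicit finitely-atomic representing measure, which you then transport from the circle to the energy axis. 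The two arguments are dual faces of the same truncated trigonometric moment problem, so neither is more elementary, but yours yields slightly more: it shows that every matrix in $\tilde{G}(\vec{t})$ is \emph{exactly} a decay matrix with a finite atomic measure, so the closure in the definition of $G(\vec{t})$ plays no role in this case, whereas the separation argument only identifies the closed cones. The paper's choice has the advantage of reusing the Hahn--Banach-plus-polynomial-positivity machinery that also drives Proposition \ref{prop_G_lasserre} and the non-congruent-times case. Your handling of the easy inclusion and the reduction to measures on all of $\R$ via Lemma \ref{lemma_indent_Gs} is sound.
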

\begin{proof}
Clearly, $\tilde{G}(\vec{t})\supset G(\vec{t})$; to prove the lemma, it suffices to show that $\tilde{G}(\vec{t})\subset G(\vec{t})$. 

We will proceed by contradiction: suppose that $\gamma\in \tilde{G}(\vec{t})$, $\gamma\not\in \tilde{G}(\vec{t})$. By the Hahn-Banach theorem and Remarks \ref{remark_func2pol}, \ref{remark_func2pol_m}, there exists a symmetric trigonometric polynomial $p(z,z^*)$ of degree smaller than or equal to $N-1$ such that
\begin{enumerate}
    \item $p(z,z^*)\geq0$, for all $z\in\mathbb{T}$, or, equivalently, $\check{\gamma}(p)\geq 0$, for all $\check{\gamma}\in G(\vec{t})$.
    \item $\gamma(p)<0$.    
\end{enumerate}



By \cite[Lemma 2.4]{helton2006positivepolynomialsscalarmatrix}, any such non-negative polynomial $p$ must satisfy
\begin{equation}
p(z,\bar{z})=q(z)^*q(z),
\end{equation}
where $q(z)=\sum_{k=0}^{N-1}q_kz^k$ for some complex numbers $(q_k)_k$. We thus have that
\begin{equation}
\gamma(p)=\sum_{jk}q_j^*q_k\gamma_{k-j}=\sum_{jk}q_j^*q_k\gamma_{jk}=\bra{q}\gamma\ket{q}\geq 0.
\end{equation}
We reach a contradiction, which means that $\gamma\in G(\vec{t})$.

\end{proof}

\subsubsection{Non-congruent time differences}
\label{app:non-congruent}
Suppose that the $N$ measurement times $t_1,..., t_N$ are such that $\{t_k-t_N\}_{k=1}^{N-1}$ are non-congruent numbers. Then we can express the $N$ times as
\begin{align}
t_k=&s_k+t_N, \mbox{ for } k=1,...,N-1,\nonumber\\
&t_N,\mbox{ otherwise}.
\end{align}
By Lemma \ref{lemma_all_Gs}, we thus have that the set of normalized decay matrices $\bar{G}(\vec{t}):=G(\vec{t})\cap\{\gamma:\gamma_{11}=1\}$ equals the set of convex combinations of matrices $\gamma_{jk}(z)$ of the form
\begin{equation}
\gamma_{jk}(z)=\prod_{j=l}^{N-1}z_l^{\delta_{k,l}-\delta_{j,l}},
\end{equation}
with $z\in\mathbb{T}^{N-1}$.

For arbitrary $z\in\mathbb{T}^{N-1}$, $\tilde{z}_N\in \mathbb{T}$, define $\tilde{z}:=\left(\frac{z_1}{\tilde{z}_N},\frac{z_2}{\tilde{z}_N},...,\tilde{z}_N\right)\in\mathbb{T}^N$. Then, $\gamma(z)=\tilde{\gamma}(\tilde{z})$, with
\begin{equation}
\tilde{\gamma}_{jk}(\tilde{z}):=\prod_{j=l}^{N}\tilde{z}_l^{\delta_{k,l}-\delta_{j,l}}.
\end{equation}

Recapping, we have that $\bar{G}(\vec{t})$ equals the following set:
\begin{defin}
$\mbox{CUT}^N_\infty$ is the convex hull of the set of matrices $\{\tilde{\gamma}(z):z\in\mathbb{T}^N\}$.
\end{defin}

This set has appeared earlier in the literature \cite{lifting,lifting2}. Call $\bar{C}_N$ the normalization of the set $C_N$ of correlation matrices, i.e., $\bar{C}_N=\{\gamma\geq 0:\gamma_{jj}=1,\forall j\}$. In \cite{lifting} it is shown that
\begin{equation}
\mbox{CUT}^3_\infty=\bar{C}_3.
\end{equation}
Therefore, for $t_1-t_3,t_2-t_3$ non-congruent, $G(\vec{t})=C_3$, and thus it is is SDP representable. In \cite{lifting} it is observed that $\bar{C}_4$ is strictly greater than $\mbox{CUT}^4_\infty$: in fact, the set  $\mbox{CUT}^4_\infty$ is shown in \cite{lifting,lifting2} to satisfy some extra inequalities. Nonetheless, $\mbox{CUT}^4_\infty$ is conjectured to be SDP representable \cite{lifting2}. More precisely, the conjecture in \cite{lifting2} implies that $\mbox{CUT}^4_\infty=G_2(\vec{t})$.

\subsection{$G(\vec{t})$ is discontinuous}
\label{app:disco}
Consider an scenario with three equidistant times $\vec{t}=(0,\Delta,2\Delta)$. By Appendix \ref{app:equidistant}, we have that
\begin{equation}
\bar{G}(\vec{t})=\{\gamma\geq 0:\gamma_{jk}=g(j-k), g(0)=1\}.
\end{equation}

Now, for any $\delta>0$, there exists $\vec{t}'$, with $\|\vec{t}'-\vec{t}\|_{\infty}<\delta$ such that the real numbers $\vec{t}'_1-\vec{t}'_3$, $\vec{t}'_2-\vec{t}'_3$ are non-congruent. This implies, by Section \ref{app:non-congruent}, that
\begin{equation}
\bar{G}(\vec{t}')=\{\gamma\geq 0:\gamma_{jj}=1,\forall j\}.
\end{equation}
Note that this set is independent of $\delta$. In particular, it includes the matrix
\begin{equation}
\gamma=\left(\begin{array}{ccc}1&1&0\\1&1&0\\0&0&1\end{array}\right).
\end{equation}
$\gamma_{12}\not=\gamma_{23}$, and so $\gamma\not\in \bar{G}(\vec{t})$. Thus, $\lim_{\vec{t}'\to\vec{t}}\not=\bar{G}(\vec{t})$.

\section{Convex characterization of $S(\E;\epsilon)$ and convergence bounds for $S^k(\E;\epsilon)$ and $S^k(E^+;\epsilon)$}
\label{app:soft_constraints}
In Appendix \ref{app:overlaps}, it is proven that any decay matrix $\gamma\in G(\vec{t})$ can be arbitrarily well approximated by a matrix of the form (\ref{hamil2gamma}). This is the basis for the following characterization of the closure of $S(\vec{E};\epsilon)$:
\begin{lemma}
\label{lemma_high_energy}
$P\in \overline{S(\vec{E};\epsilon;\vec{t})}$ iff there exist a distribution $\{p_j\}_j$, matrices $\{\tilde{M}_{a|x}\}_{a,x}$ and $\gamma\in G(\vec{t})$ such that
\begin{align}
&p_m\leq \epsilon,\label{cond_soft1}\\
&\tilde{M}_{a|x}\geq 0,\forall a,x,\label{cond_soft2}\\
&\sum_a\tilde{M}_{a|x}=\left(\begin{array}{cc}\mbox{diag}(p_1,...,p_{m-1})&0\\0&\gamma\end{array}\right),\label{cond_soft3}\\
&P(a|x,t_j)=p(a|x,t_j,\tilde{M}),\forall a,x,j.\label{cond_soft4}
\end{align}
\end{lemma}
\begin{proof}

By Section \ref{sec:charact_soft}, conditions (\ref{cond_soft1}-\ref{cond_soft4}) define a relaxation of $S(\E; \epsilon)$. To prove the opposite implication, namely, that any $P$ satisfying the conditions must belong to $\overline{S(\E; \epsilon)}$, we assume that $p_1,...,p_{n}>0$ and $\gamma>0$; the case where some $p_k$'s are zero or $\gamma$ has zero eigenvalues can be dealt with by perturbing wither $\vec{p}$ or $\gamma$. In that last regard, we refer the reader to Remark \ref{remark_id_G} in Appendix \ref{app:overlaps}, which implies that, for any $\gamma\in G(\vec{t})$ and $\lambda\in [0,1]$, it holds that $\lambda\gamma+(1-\lambda)\gamma_{11}\id\in G(\vec{t})$.

Since $\gamma\in G(\vec{t})$, there exist \footnote{Technically speaking, $G(\vec{t})$ is the \emph{closure} of all matrices $\gamma$ admitting such a decomposition. Again for the sake of simplicity, we assume that such a decomposition exists for $\gamma$: otherwise, one could perturb $\gamma$ infinitesimally to be on the safe side.} a Hilbert space $\H^+$, $H^+\in B(\H^+)$, with $H^+> E^+=E_n$, and an unnormalized state $\ket{\psi^+}\in\H^+$ such that eq. (\ref{hamil2gamma}) holds, see Appendix \ref{app:overlaps}.

Define $\Lambda:=\sum_{j=1}^Ne^{-iH^+t_j}\ket{\psi^+}\bra{j}$, and note that $\gamma=\Lambda^\dagger\Lambda$. Since $\gamma>0$, the rank of $\Lambda$ is $N$. Let thus $\Lambda^{-1}$ denote its left inverse, i.e., 
\begin{equation}
\Lambda^{-1}\Lambda\ket{j}=\ket{j}, \mbox{ for } j=1,...,N.
\end{equation}
The matrix $\Lambda\Lambda^{-1}:={\cal P}$ is a projector onto the space spanned by $\H^+_E:=\{e^{-iH^+t_j}\ket{\psi^+}\}_j$.

Then, the matrices
\begin{align}
&\hat{M}_{a|x}:=\nonumber\\
&\left(\begin{array}{cc}\mbox{diag}(\vec{p}^{-1/2})&0\\0&\Lambda^{-1}\end{array}\right)^\dagger \tilde{M}_{a|x}\left(\begin{array}{cc}\mbox{diag}(\vec{p}^{-1/2})&0\\0&\Lambda^{-1}\end{array}\right)
\end{align}
are positive semidefinite. Moreover, by eq. (\ref{cond_soft3}), they also satisfy 
\begin{equation}
\sum_a \hat{M}_{a|x}=\left(\begin{array}{cc}\id&0\\0&(\Lambda^{-1})^\dagger\gamma\Lambda^{-1}\end{array}\right)=\left(\begin{array}{cc}\id&0\\0&{\cal P}\end{array}\right).
\end{equation}
The matrices 
\begin{equation}
M_{a|x}:=\hat{M}_{a|x},a\not=1,M_{1|x}=\hat{M}_{1|x}+\left(\begin{array}{cc}0&0\\0&\id-{\cal P}\end{array}\right)
\end{equation}
are therefore positive semidefinite and satisfy $\sum_{a}M_{a|x}=\id$ -they are POVMs.

Next, define the Hamiltonian $H:=\sum_{k=0}^{m-1} E_k\proj{k}\oplus H^+$ and the vector
\begin{equation}
\ket{\psi}:=\sum_{k=0}^{m-1}\sqrt{p_k}\ket{k}+\sqrt{p_n}\ket{\psi^+}.
\end{equation}
Then it is immediate that
\begin{equation}
e^{-iHt_j}\ket{\psi}=\sum_{k=0}^{n-1}\sqrt{p_k}e^{-iE_kt_j}\ket{k}\oplus\sqrt{p_n}\Lambda\ket{j}.
\end{equation}
Thus,
\begin{align}
&\bra{\psi}e^{iHt_j}M_{a|x}e^{-iHt_j}\ket{\psi}\nonumber\\
&=\bra{\tilde{\psi}_j}\tilde{M}_{a|x}\ket{\tilde{\psi}_j}\nonumber\\
&=p(a|x,t_j,\tilde{M})=P(a|x,t_j).
\end{align}
\end{proof}

Lemma \ref{lemma_high_energy} represents an SDP characterization of $\overline{S(\E;\epsilon;\vec{t})}$ iff $G(\vec{t})$ turns out to be SDP-representable. As explained in Appendix \ref{app:overlaps}, whether this is the case strongly depends on the exact values of the dataset's measurement times $\vec{t}$. In fact, except for a few cases, $G(\vec{t})$ must be characterized through hierarchies of SDPs $(G^k(\vec{t}))_k$. Replacing $G(\vec{t})$ by the corresponding SDP relaxation $G^k(\vec{t})$, we arrive at the SDP relaxation $S^k(\E;\epsilon;\vec{t})$ of $\overline{S(\E;\epsilon;\vec{t})}$, see Section \ref{sec:charact_soft}. Throughout the rest of this appendix we will take for granted that $G(\vec{t})$ is not SDP-representable. Consequently, we will try to estimate how fast $S^k(\E;\epsilon)$ converges to the closure of $S(\E;\epsilon)$.

Given a vector of measurement times $\vec{t}$, let $n$ denote the smallest natural number such that
\begin{equation}
t_k=\sum_{j=1}^n a_{kj}s_j+t_0, \mbox{ for } k=1,...,N,
\label{rational_decomp}
\end{equation}
for some $\{s_j\}_j\subset \R$, $\{a_{kj}\}_{j,k}\subset \Z$. Define
\begin{equation}
d:=\left\lceil\frac{1}{2}\max_{j,k,l}|a_{jl}-a_{kl}|\right\rceil.
\label{def_d_K}
\end{equation}
and
\begin{equation}
\varepsilon_k:=N(N-1)\left[\left(1-\frac{6d^2}{k^2}\right)^{-n}-1\right].
\label{def_varepsilon_k}
\end{equation}
The next lemma estimates how fast $S^k(\E;\epsilon)$ converges to the closure of $S(\E;\epsilon)$.

\begin{lemma}
\label{lemma_S_sup_m2S}
Let $P\in S^k(\vec{E};\epsilon;\vec{t})$, $n\geq 3d$ and $\varepsilon_k\leq 1$. Then, there exists $\hat{P}\in S(\vec{E};\epsilon;\vec{t})$ such that
\begin{equation}
\|P-\hat{P}\|\leq \frac{2\varepsilon_k}{1+\varepsilon_k},
\end{equation}
with $\varepsilon_k$ as given in eq. (\ref{def_varepsilon_k}).
\end{lemma}
Taking the limit $k\gg 1$, we arrive at eq. (\ref{conv_soft_hier}).
\begin{proof}
Let $\{\tilde{M}_{a|x}\}_{a,x}$ be positive semidefinite matrices such that $P(a|x,t_j)=p(a|x,t_j,\tilde{M})$ and
\begin{equation}
\sum_a \tilde{M}_{a|x}=\left(\begin{array}{cc}\mbox{diag}(p_0,...,p_{n-1})&0\\
0&\gamma\end{array}\right),
\end{equation}
with $\gamma\in G^k(\vec{t})$. By Proposition \ref{prop_G_lasserre}, we have that
\begin{equation}
\hat{\gamma}:=\frac{1}{1+\varepsilon_k}(\gamma +\varepsilon_k\gamma_{11}\id)\in G(\vec{t}).
\end{equation}
Now, define the matrices
\begin{align}
&\hat{M}_{a|x}:=\frac{1}{1+\varepsilon_k}\tilde{M}_{a|x}\nonumber\\
&+\frac{\varepsilon_k}{|A|(1+\varepsilon_k)}\left(\begin{array}{cc}\mbox{diag}(1-\gamma_{11},0,...0)&0\\0&\gamma_{11}\id\end{array}\right).
\end{align}
These matrices are positive semidefinite. In addition, they satisfy
\begin{equation}
\sum_{a}\hat{M}_{a|x}=\left(\begin{array}{cc}D&0\\0&\hat{\gamma}\end{array}\right),
\end{equation}
for some diagonal matrix $D$. Thus, $\hat{P}(a|x,t_j):=p(a|x,t_j,\hat{M})$ satisfies $\hat{P}\in S(\vec{E};\epsilon;\vec{t})$ by Lemma \ref{lemma_high_energy}.

Finally, note that
\begin{equation}
P(a|x,t_j)-\hat{P}(a|x,t_j)=\frac{\varepsilon_k}{1+\varepsilon_k}\left(P(a|x,t_j)-\frac{1}{|A|}\right).
\end{equation}
It follows that
\begin{equation}
\sum_a|P(a|x,t_j)-\hat{P}(a|x,t_j)|\leq \frac{2\varepsilon_k}{1+\varepsilon_k},
\end{equation}
which proves the statement of the lemma.

\end{proof}

We next exploit the previously derived lemma to estimate how close $S_m^k(E^+;\epsilon)$ is to $S(E^+;\epsilon)$.
\begin{prop}
\label{prop_not_asymptotic}
Let $P\in S^k_m(E^+;\epsilon,\vec{t})$, $n\geq 3d$ and $\varepsilon_k\leq 1$. Then, there exists $\hat{P}\in  S(E^+;\epsilon,\vec{t})$ such that
\begin{equation}
\|\check{P}(t_j)-P(t_j)\|\leq \frac{2\varepsilon_k}{1+\varepsilon_k}+2\sin\left(\frac{E^+\texttt{t}}{m}\right),\forall j.
\end{equation}
\end{prop}

\begin{proof}
Let $P\in S^k_m(E^+;\epsilon,\vec{t})$. By definition, for $0\leq E_0\leq...\leq E_{n-1}$, there exists $\tilde{P}\in S^k(\vec{E};E^+\epsilon,\vec{t})$ such that
\begin{equation}
\|\tilde{P}(t_j)-P(t_j)\|\leq 2\sin\left(\frac{E^+\texttt{t}}{m}\right),\forall j.
\end{equation}
By Lemma \ref{lemma_S_sup_m2S}, we have that there exists $\hat{P}\in S(\vec{E};E^+\epsilon,\vec{t})\subset S(E^+;\epsilon,\vec{t})$ such that 
\begin{equation}
\|\check{P}(t_j)-\tilde{P}(t_j)\|\leq \frac{2\varepsilon_k}{1+\varepsilon_k},\forall j.
\end{equation}
By the triangular inequality, it hence follows that
\begin{equation}
\|\check{P}(t_j)-P(t_j)\|\leq \frac{2\varepsilon_k}{1+\varepsilon_k}+2\sin\left(\frac{E^+\texttt{t}}{m}\right).
\end{equation}

\end{proof}

\section{Self-testing datasets: the proofs}
\label{app:ST}
The main goal of this appendix is to prove the following result:
\begin{prop}
\label{prop_self_test}
Let $P\in S(E^+)$, with realization $(\H,\rho,H,M)$, satisfy
\begin{equation} \label{eq:approx_data}
\sum_a|P(a|t_j)-\mathbf{D}_N(a|t_j)|\leq \delta,
\end{equation}
for $j=0,...,N-1$. Then, there exists an isometry $V$ such that
\begin{align}
&VM_ae^{-iHt}\rho e^{iHt}V^\dagger\nonumber\\
&=\bar{M}^N_ae^{-i\bar{H}^Nt}\proj{\bar{\psi}^N} e^{i\bar{H}^Nt}\otimes \gamma_{\text{junk}}\nonumber\\
&+O\left(\delta^{1/64}\right)+O\left(\delta^{1/8}t\right),
\label{eq_prop_self_test}
\end{align}
for $a=1,2$ and some normalized quantum state $\gamma_{\text{junk}}$.
\end{prop}

Its proof is based on the insight that certain (noisy) statistics essentially fix the energy distribution of the state generating them, as captured by the following lemma.
\begin{lemma}
\label{lemma_density_noise}
Let $N\in\N$, let $\ket{\psi}$ be a normalized state with respect to some Hamiltonian $H$. Suppose that the hard support constraint holds, i.e., $\mbox{supp}(\sigma(E))\in[0,E^+]$, where $\sigma(E)dE$ is the energy density of $\ket{\psi}$, and that
\begin{equation} \label{eq:overlap}
|\bra{\psi}e^{iH(t_0-t_k)}\ket{\psi}|\leq \epsilon, k=1,...,N-1,
\end{equation}
with
\begin{equation}
t_k:=\frac{2\pi (N-1)k}{N E^+}.
\end{equation}
Then,  
\begin{equation} \label{eq:energy_density}
\sum_{k=0}^{N-1} \int_{\frac{E^+(k-\epsilon^{1/4})}{N-1}}^{\frac{E^+(k+\epsilon^{1/4})}{N-1}}\sigma(E) dE \gtrapprox 1 -O(\sqrt{\epsilon}).
\end{equation}
\end{lemma}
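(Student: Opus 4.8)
The plan is to reduce the statement to a single concentration estimate for the energy measure on a torus, and then localise with Chebyshev's inequality. First I would pass to the rescaled variable $u:=\frac{(N-1)E}{E^+}\in[0,N-1]$ and let $\mu$ be the pushforward of $\sigma(E)\,dE$ along $E\mapsto u$; it is a probability measure supported in $[0,N-1]$. Since $Et_k=\frac{2\pi k}{N}u$, the hypothesis \eqref{eq:overlap} becomes a bound on the Fourier coefficients
\begin{equation}
c_k:=\int_0^{N-1}e^{-2\pi i k u/N}\,d\mu(u),\qquad |c_k|\le\epsilon\ \ (k=1,\dots,N-1),\quad c_0=1,
\end{equation}
and the target energies $\frac{E^+k}{N-1}$ become the integers $k=0,\dots,N-1$. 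In these terms \eqref{eq:energy_density} asserts exactly that $\mu$ puts all but an $O(\sqrt\epsilon)$ fraction of its mass within $\epsilon^{1/4}$ of the integer lattice.

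The quantity I would control is $I:=\int_0^{N-1}\sin^2(\pi u)\,d\mu(u)$, which vanishes precisely on the lattice and satisfies $\sin^2(\pi u)\ge 4\,\mathrm{dist}(u,\Z)^2$. Chebyshev's inequality then gives
\begin{equation}
\mu\big(\{u:\mathrm{dist}(u,\Z)\ge\eta\}\big)\le\frac{I}{4\eta^2},
\end{equation}
so that taking $\eta=\epsilon^{1/4}$ turns a bound $I=O(\epsilon)$ into the claimed $O(\sqrt\epsilon)$ on the mass lying outside the windows; this is precisely where the two exponents originate. Writing $\sin^2(\pi u)=\tfrac12(1-\cos 2\pi u)$ yields the clean identity $I=\tfrac12\big(1-\mathrm{Re}\,c_N\big)$, where $c_N=\int e^{-2\pi i u}\,d\mu$ is the single Fourier coefficient \emph{not} constrained by the hypothesis. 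Everything therefore reduces to proving $1-\mathrm{Re}\,c_N=O(\epsilon)$.

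This last step is the crux, and it must use the energy bound essentially: a quick root count shows that no non-negative trigonometric polynomial of degree $\le N-1$ can vanish to second order at all $N$ lattice points of a period (that would need $2N>2(N-1)$ zeros), so $I$ cannot be certified from $c_1,\dots,c_{N-1}$ alone — the support gap is indispensable, $\mu$ missing the cell $(N-1,N)$ of the period-$N$ torus. The cleanest mechanism I see is operatorial: the snapshots $\ket{\psi_k}:=e^{-iHt_k}\ket{\psi}$ have Gram matrix $G_{jk}=\overline{c_{j-k}}$, which is $\id+O(N\epsilon)$, so they are nearly orthonormal; the shift $W:=e^{-iHt_1}$ sends $\ket{\psi_k}\mapsto\ket{\psi_{k+1}}$, and its compression to $\mathcal{K}:=\mathrm{span}\{\ket{\psi_k}\}_{k=0}^{N-1}$ is, up to $O(N\epsilon)$, the companion matrix of $\lambda^N-c_N$. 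Since the compression of a unitary has its eigenvalues in the numerical range of $W$, i.e.\ in $\mathrm{conv}$ of the spectral arc of $W$ — the unit circle with an arc of angular width exactly $2\pi/N$ removed at the gap — the $N$ eigenvalues $c_N^{1/N}e^{2\pi i m/N}$, equispaced by $2\pi/N$, can only fit if they align with the lattice of roots of unity, forcing $c_N\to1$. At $\epsilon=0$ this reproduces Lemma~\ref{lemma_density}: $c_1=\dots=c_{N-1}=0$ force $c_N=1$, hence $\mu$ is the uniform atomic measure on the lattice.

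The main obstacle is to make this rigidity quantitative with the correct power, i.e.\ to convert the geometric alignment into $1-\mathrm{Re}\,c_N=O(\epsilon)$ while honestly tracking both the $O(N\epsilon)$ perturbation of the Gram matrix and the $O(\sqrt{1-|c_N|^2})$ leakage of $\ket{\psi_N}$ out of $\mathcal{K}$. An equivalent and perhaps more tractable route is analytic: $f(t)=\bra{\psi}e^{-iHt}\ket{\psi}$ is a positive-definite function band-limited to $[0,E^+]$ whose near-zeros at the critically spaced samples $t_1,\dots,t_{N-1}$ stably pin its value $f(t_N)=c_N$, the gap being exactly what makes the sampling critical rather than sub-critical; the needed stability estimate can be packaged as a degree-$\le N-1$ trigonometric majorant of $\sin^2(\pi u)$ on $[0,N-1]$ with small mean, obtained by letting it dip on the empty cell. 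Once $1-\mathrm{Re}\,c_N=O(\epsilon)$ is secured, the identity for $I$ and the Chebyshev step finish the proof immediately.
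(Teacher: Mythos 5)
Your reduction is sound and in fact mirrors the skeleton of the paper's argument: passing to $u=(N-1)E/E^+$, noting that the hypothesis controls the Fourier coefficients $c_1,\dots,c_{N-1}$ at frequencies $2\pi k/N$, bounding $I=\int\sin^2(\pi u)\,d\mu=\tfrac12(1-\mathrm{Re}\,c_N)$, and then converting $I=O(\epsilon)$ into the stated $O(\sqrt{\epsilon})$ mass bound via Chebyshev with window half-width $\epsilon^{1/4}$ is exactly how the paper finishes (its final step uses $\sin^2(\epsilon^{1/4})\approx\sqrt{\epsilon}$ in place of your quadratic lower bound, which is the same thing). Your structural observations are also correct: the $N$-th coefficient is the one not constrained by the data, a nonnegative trigonometric polynomial of degree $\le N-1$ cannot vanish to second order at all $N$ lattice points of a full period, and the support gap $u\in[0,N-1]\subsetneq[0,N)$ is what saves the day.

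The genuine gap is that you never prove the crux, $1-\mathrm{Re}\,c_N=O(\epsilon)$, and this is where essentially all of the work in the paper's proof lives. You correctly name the object that would close the argument --- a trigonometric polynomial with frequencies only among $\{\pm 2\pi k/N\}_{k=1}^{N-1}$ (in particular with \emph{zero} constant term, not merely ``small mean''), nonnegative on $[0,N-1]$ and dominating a multiple of $\sin^2(\pi u)$ there, ``dipping'' only on the empty cell $(N-1,N)$ --- but you do not exhibit it, and neither of your two sketched mechanisms is carried out. The operator-theoretic route (compression of $e^{-iHt_1}$ as an approximate companion matrix, eigenvalues confined to the numerical range) is far from quantitative as stated and it is unclear it yields the power $O(\epsilon)$ rather than something weaker in $\epsilon$ or worse in $N$. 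The paper constructs the majorant explicitly:
\begin{equation*}
V(E)=\sin\!\left(\tfrac{\pi(N-1)E}{E^+}\right)\prod_{j=1}^{N-2}\sin\!\left(\tfrac{j\pi}{N}-\tfrac{\pi(N-1)E}{NE^+}\right)
=\frac{\sin^2\!\left(\tfrac{\pi(N-1)E}{E^+}\right)}{2^{N-1}\sin\!\left(\tfrac{\pi(N-1)E}{NE^+}\right)\sin\!\left(\tfrac{\pi}{N}+\tfrac{\pi(N-1)E}{NE^+}\right)},
\end{equation*}
verifies nonnegativity on $[0,E^+]$ by a sign-cancellation argument, checks that its Fourier support avoids the constant, and bounds the sum of the absolute values of its coefficients by $1$, giving $\int V\,d\mu\le\epsilon$ and hence $I\le 2^{N-1}\epsilon$ (this $2^{N-1}$ is the constant hidden in the lemma's $O(\sqrt{\epsilon})$, which your write-up would also need to track). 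Without producing such a polynomial, or an equally quantitative substitute, your proof does not go through.
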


Lemma~\ref{lemma_density} ion the main text follows as $\epsilon \rightarrow 0$.

\begin{proof}
Condition~\eqref{eq:overlap} implies that
\begin{equation}
\left| \int_0^{E^+} e^{i \frac{2 \pi (N-1) k E}{NE^+}}\sigma(E) dE \right| \leq \epsilon, \ k=1, \ldots, N-1.
\end{equation}
Now let us consider the auxiliary function
\begin{align}
V(E) :&= \sin \left( \frac{\pi (N-1) E}{E^+}\right)  \prod_{j=1}^{N-2} \sin \left( \frac{j \pi}{N} -  \frac{\pi (N-1) E}{N E^+} \right) 
\label{def_V}
\end{align}
Notice that $V(E)=0$ iff $E=\frac{E^+ k}{N-1}$. Moreover, $V(E) \geq 0$ for $E\in [0,E^+]$. This can be seen by noticing that, for very small $E>0$, all factors in the product (\ref{def_V}) are positive. The first factor changes sign at $E\in\{\frac{E^+k}{N-1}:k=1,...,N-2\}$, while factors of the form 
\begin{equation}
V_j(E):=\sin \left( \frac{j \pi}{N} -  \frac{\pi (N-1) E}{N E^+} \right)
\end{equation}
only change sign at $E=\frac{E^+j}{N-1}$. Each sign change of the first factor is therefore canceled by the sign change of $V_j(E)$, and so $V(E)$ remains non-negative in $[0,E^+]$. 

In addition, if one expands the sine functions in eq. (\ref{def_V}) as sums of imaginary exponentials, one sees that $V(E)$ is a linear combination of the functions 
\begin{equation}
\{e^{i\frac{2\pi (N-1) kEj}{NE^+}}:j=1,...,N-1\}    
\end{equation}
and their complex conjugates. More specifically,
\begin{equation}
    V(E)=\frac{1}{2^{N-1}}\sum_{j=1}^{N-1} \mu_{j}^+ e^{i \frac{2 \pi (N-1) j E}{NE^+}} + \mu_{j}^- e^{-i \frac{2 \pi (N-1) j E}{NE^+}},
\end{equation}
where each $\mu_j^{\pm}=\frac{1}{i^{N-1}}\sum_k c_k  e^{i \frac{k \pi}{N}}$, where $c_k$ is an integer that counts the number of terms with the respective exponential, taking the sign into account; notice that $k$ may be negative.

The above implies that
\begin{align}
\int_0^{E^+} &V(E) \sigma(E) dE \nonumber \\ 
&= \frac{1}{2^{N-1}} \sum_{j=1}^{N-1} \mu_j^+ \int_0^{E^+}  e^{i \frac{2 \pi (N-1) j E}{NE^+}} \sigma(E) dE \nonumber \\
& \qquad \qquad \ \ + \mu_{j}^- \int_0^{E^+} e^{-i \frac{2 \pi (N-1) j E}{NE^+}} \sigma(E) dE \nonumber\\ 
&\leq \frac{1}{2^{N-1}} \sum_{j=1}^{N-1} |\mu_j^+ |\epsilon  + |\mu_{j}^-|\epsilon \nonumber\\
&\leq \epsilon, \label{eq:ve}
\end{align}
where in the second to last inequality we take the absolute value and use the triangle inequality; the last inequality follows as there are ${2^{N-1}}$ terms in the decomposition of $V(E)$ in total and each has an absolute value $\frac{1}{{2^{N-1}}}$.

Let us further define the energy range $E_{\delta} = \bigcup_{k=1}^{N-1} [\frac{E^+(k-1)}{N-1}+\delta, \frac{E^+k}{N-1} -\delta]$. Since $V(E)$ is positive, we have that
\begin{equation}
    \epsilon \geq \int_{E_{\delta}} V(E) \sigma(E) dE
\end{equation}
It is now helpful to rewrite 
\begin{equation}
V(E)= \frac{\sin^2 \left( \frac{\pi (N-1) E}{E^+}\right)}{2^{N-1} \sin \left(\frac{\pi (N-1) E}{N E^+} \right) \sin \left( \frac{\pi}{N} + \frac{\pi (N-1) E}{NE^+}\right)}.
\end{equation}
For $E\in [0, E^+]$, one has that 
\begin{equation}
0 \leq \sin \left(\frac{\pi (N-1) E}{N E^+} \right) ,\sin \left( \frac{\pi}{N} + \frac{\pi (N-1) E}{NE^+}\right) \leq 1.
\end{equation}
This implies that 
\begin{align}
    \epsilon &\geq \int_{E_{\delta}} V(E) \sigma(E) dE \\
    &\geq \int_{E_{\delta}} \frac{1}{2^{N-1}} \sin^2 \left( \frac{\pi (N-1) E}{E^+} \right) \sigma(E) dE.
\end{align}
Let $\delta = \frac{\epsilon^{1/4} E^+}{\pi (N-1)}$, then on $E_\delta$ we have 
\begin{align}
    \int_{E_{\delta}} \frac{1}{2^{N-1}} \sin^2 &\left( \frac{\pi (N-1) E}{E^+} \right) \sigma(E) dE \nonumber \\
    &\geq \frac{1}{2^{N-1}} \sin^2 \left( \epsilon^{1/4} \right) \int_{E_{\delta}}  \sigma(E) dE 
\end{align}
and, for small $\epsilon$,
\begin{equation}
 \int_{E_{\delta}}  \sigma(E) dE  \lesssim O(\sqrt{\epsilon} \cdot 2^{N-1}).
\end{equation}

\end{proof}

Towards the proof of the self-testing result in Proposition~\ref{prop_self_test}, let us first prove the following two lemmas.

\begin{lemma}
\label{lemma_POVM_annihilation}
Let $\tr(\rho M)\leq \epsilon$, for $0\leq M\leq \id$. Then,
\begin{equation}
\|M\rho\|_1\leq \sqrt{\epsilon}.
\end{equation}
\end{lemma}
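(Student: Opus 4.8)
The final statement is Lemma~\ref{lemma_POVM_annihilation}: given $\tr(\rho M)\leq \epsilon$ with $0\leq M\leq \id$, show $\|M\rho\|_1\leq \sqrt{\epsilon}$.

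\medskip

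The plan is to estimate the trace norm of $M\rho$ directly, exploiting positivity of both $\rho$ and $M$ together with the operator bound $M\leq \id$. First I would write $\rho=\sqrt{\rho}\sqrt{\rho}$ and use submultiplicativity of the trace norm together with its interplay with the operator (Schatten-$\infty$) norm: specifically, the H\"older-type inequality $\|AB\|_1\leq \|A\|_2\|B\|_2$. Setting $A=M\sqrt{\rho}$ and $B=\sqrt{\rho}$, this gives
\begin{equation}
\|M\rho\|_1\leq \|M\sqrt{\rho}\|_2\,\|\sqrt{\rho}\|_2.
\end{equation}
The second factor is $\|\sqrt{\rho}\|_2=\sqrt{\tr(\rho)}=1$, since $\rho$ is a normalized state.

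\medskip

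For the first factor, I would compute the Hilbert--Schmidt norm explicitly:
\begin{equation}
\|M\sqrt{\rho}\|_2^2=\tr\!\big(\sqrt{\rho}\,M^2\sqrt{\rho}\big)=\tr\!\big(M^2\rho\big).
\end{equation}
Now the key step is to invoke the operator inequality $0\leq M\leq \id$, which implies $M^2\leq M$ (since $M$ and $\id-M$ are both positive semidefinite, so $M(\id-M)\geq 0$ after conjugating by $\sqrt{M}$, giving $M-M^2\geq 0$). Because $\rho\geq 0$, monotonicity of the trace under conjugation gives $\tr(M^2\rho)\leq \tr(M\rho)\leq \epsilon$. Combining, $\|M\sqrt{\rho}\|_2\leq \sqrt{\epsilon}$, and hence $\|M\rho\|_1\leq \sqrt{\epsilon}\cdot 1=\sqrt{\epsilon}$, which is the claim.

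\medskip

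The only genuinely delicate point is justifying $\tr(M^2\rho)\leq \tr(M\rho)$ from $M^2\leq M$; this follows because if $X\geq 0$ then $\tr(X\rho)\geq 0$ for any state $\rho$, applied to $X=M-M^2$. There is no serious obstacle here, but I would take care that the manipulations $\|AB\|_1\leq\|A\|_2\|B\|_2$ and the cyclicity used in $\tr(\sqrt{\rho}M^2\sqrt{\rho})=\tr(M^2\rho)$ are valid in the possibly infinite-dimensional Hilbert space $\H$; since $\rho$ is trace-class and $M$ is bounded, all the traces involved are finite and the inequalities go through verbatim.
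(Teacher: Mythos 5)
Your proof is correct. The key step is the same as in the paper's proof --- namely, using $0\leq M\leq\id$ to get $M^{2}\leq M$ and hence $\tr(M^{2}\rho)\leq\tr(M\rho)\leq\epsilon$ --- but you reach the trace-norm bound by a slightly different route. The paper purifies $\rho$ to a vector $\ket{\psi}$, observes that $\bra{\psi}(M^{2}\otimes\id)\ket{\psi}\leq\epsilon$ gives $\|(M\otimes\id)\ket{\psi}\|_{2}\leq\sqrt{\epsilon}$, identifies this $2$-norm with the trace norm of the rank-one operator $(M\otimes\id)\proj{\psi}$, and finally uses contractivity of the partial trace to descend to $\|M\rho\|_{1}$. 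You instead factor $M\rho=(M\sqrt{\rho})\sqrt{\rho}$ and apply the Schatten--H\"older inequality $\|AB\|_{1}\leq\|A\|_{2}\|B\|_{2}$, computing $\|M\sqrt{\rho}\|_{2}^{2}=\tr(M^{2}\rho)$ directly. The two arguments are essentially isomorphic (the purification is just $\sqrt{\rho}$ viewed as a vector in $\H\otimes\H$, and $(M\otimes\id)\ket{\psi}$ corresponds to $M\sqrt{\rho}$), but yours avoids introducing the auxiliary system and the partial-trace monotonicity step, which makes it marginally more self-contained; your closing remark about trace-class $\rho$ and bounded $M$ ensuring everything is well defined in infinite dimensions is a welcome addition that the paper leaves implicit.
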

\begin{proof}
Let $\ket{\psi}$ be a purification of $\rho$. Then,
\begin{equation}
\epsilon \geq \bra{\psi}(M\otimes \id)\ket{\psi}\geq \bra{\psi}(M^2\otimes \id)\ket{\psi}.
\end{equation}
Thus, $\|(M\otimes \id)\ket{\psi}\|_2\leq \sqrt{\epsilon}$ and, furthermore, 
\begin{equation}
\|(M\otimes \id)\ket{\psi}\|_2 = \|(M\otimes \id)\ket{\psi}\bra{\psi}\|_1\geq \|M\rho\|_1,
\end{equation}
which proves the lemma.
\end{proof}

\begin{lemma}
\label{lem:overlap}
Let $P \in S(E^+)$ satisfy \eqref{eq:approx_data} for $j=1, \ldots, N-1$ with a pure state realization $(\H,\psi,H,M)$, then the states $\ket{\psi(t_j)}=e^{iHt_j}\ket{\psi}$ evolved to times $t_j$, for $j=1, \ldots, N-1 $ satisfy
\begin{equation}
    |\braket{\psi}{\psi(t_j)}| \leq \sqrt{2\delta-\delta^2}.
\end{equation}
\end{lemma}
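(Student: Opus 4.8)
The plan is to convert the two kinds of data constraints into a single statement that the POVM element $M_0$ ``accepts'' $\ket{\psi(t_0)}$ while ``rejecting'' $\ket{\psi(t_j)}$, and then to invoke the fact that two pure states well separated by one two-outcome measurement cannot overlap much. First I would fix notation: since $t_0=0$ we have $\ket{\psi}=\ket{\psi(t_0)}$, and I write $M_0$ for the POVM element of the outcome whose ideal probability is $1$ at $t_0$ (and $0$ at the later times), so that $P(0|t)=\bra{\psi(t)}M_0\ket{\psi(t)}$ for the evolved state $\ket{\psi(t)}$. Reading off \eqref{eq:approx_data}: at $t_0$, where $\mathbf{D}_N(0|t_0)=1$, the $\ell_1$ distance being $\le\delta$ forces $P(0|t_0)=\bra{\psi(t_0)}M_0\ket{\psi(t_0)}\ge 1-\delta/2$ (the factor $2$ coming from the two-outcome identity $\sum_a|P(a)-Q(a)|=2(1-P(0))$ when $Q=(1,0)$); at each $t_j$, $j\ge1$, where $\mathbf{D}_N(0|t_j)=0$, the same reasoning gives $P(0|t_j)=\bra{\psi(t_j)}M_0\ket{\psi(t_j)}\le\delta/2$. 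Note that, although the statement singles out $j=1,\dots,N-1$, both the $t_0$ datapoint and the $t_j$ datapoint enter the argument; this is consistent with \eqref{eq:approx_data} being assumed on the full range $j=0,\dots,N-1$.

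The key inequality I would use is that, for any unit vectors $\ket{a},\ket{b}$ and any $M$ with $0\le M\le\id$,
\[
|\braket{a}{b}|\le\sqrt{\bra{a}M\ket{a}\,\bra{b}M\ket{b}}+\sqrt{\bra{a}(\id-M)\ket{a}\,\bra{b}(\id-M)\ket{b}}.
\]
This is immediate by splitting $\braket{a}{b}=\bra{a}M\ket{b}+\bra{a}(\id-M)\ket{b}$ and applying Cauchy--Schwarz to each term through the factorizations $M=\sqrt{M}\sqrt{M}$ and $\id-M=\sqrt{\id-M}\sqrt{\id-M}$; equivalently, it expresses the fact that the pure-state fidelity $|\braket{a}{b}|$ cannot exceed the classical fidelity of the two output distributions of the measurement $\{M,\id-M\}$, by monotonicity of fidelity under channels.

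I would then apply this with $\ket{a}=\ket{\psi(t_0)}$, $\ket{b}=\ket{\psi(t_j)}$, $M=M_0$, and set $p:=\bra{a}M_0\ket{a}\in[1-\delta/2,1]$, $q:=\bra{b}M_0\ket{b}\in[0,\delta/2]$, so the right-hand side becomes $f(p,q):=\sqrt{pq}+\sqrt{(1-p)(1-q)}$. The remaining and genuinely only nontrivial step is to maximize $f$ over this box. The partial derivatives are $\partial_pf=\tfrac12\big(\sqrt{q/p}-\sqrt{(1-q)/(1-p)}\big)$ and $\partial_qf=\tfrac12\big(\sqrt{p/q}-\sqrt{(1-p)/(1-q)}\big)$; for $\delta<1$ one has $q\le\delta/2<1-\delta/2\le p$ throughout the box, so $f$ is strictly decreasing in $p$ and strictly increasing in $q$, and its maximum is attained at the corner $(p,q)=(1-\delta/2,\delta/2)$. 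There $f=2\sqrt{(1-\delta/2)(\delta/2)}=\sqrt{2\delta-\delta^2}$, which yields $|\braket{\psi}{\psi(t_j)}|\le\sqrt{2\delta-\delta^2}$ and proves the lemma.

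The main obstacle is not conceptual but a matter of sharpness and bookkeeping: obtaining the tight constant $\sqrt{2\delta-\delta^2}$, rather than the looser $\sqrt{2\delta}$ that results from bounding the two square-root terms separately, requires the monotonicity argument that pins the maximizer of $f$ to the box corner. One should also double-check the factors of $2$ in the passage from the $\ell_1$ data constraints to the bounds on $p$ and $q$, since these propagate directly into the final estimate and are exactly what produces the $-\delta^2$ correction.
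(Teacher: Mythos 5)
Your proof is correct and yields exactly the paper's constant $\sqrt{2\delta-\delta^2}$, but it takes a different route through the standard distinguishability toolbox. The paper combines the two statistics into the single quantity $\tr\bigl(M_0(\proj{\psi}-\proj{\psi(t_j)})\bigr)\geq 1-\delta$ and bounds it from above by $\tfrac12\|\proj{\psi}-\proj{\psi(t_j)}\|_1=\sqrt{1-|\braket{\psi}{\psi(t_j)}|^2}$, so the result drops out in one line from the Helstrom bound together with the pure-state trace-distance--fidelity identity. You instead bound the overlap directly by the classical fidelity of the two outcome distributions, $|\braket{a}{b}|\leq\sqrt{pq}+\sqrt{(1-p)(1-q)}$, and then maximize over the box $p\in[1-\delta/2,1]$, $q\in[0,\delta/2]$; your monotonicity argument pinning the maximum to the corner is sound (the derivative formula degenerates at $p=1$, but continuity of $f$ there makes the conclusion safe), and your factor-of-two bookkeeping from the $\ell_1$ constraint is right. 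The trade-off is that the paper's argument is shorter modulo quoting Fuchs--van de Graaf, whereas yours is more self-contained (only Cauchy--Schwarz) at the cost of the explicit two-variable optimization; both are tight here because for pure states the fidelity and trace-distance routes saturate simultaneously. Your observation that the $j=0$ datapoint must also be assumed, despite the lemma statement quantifying only over $j=1,\dots,N-1$, correctly identifies a small imprecision in the paper's phrasing.
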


\begin{proof}
First, notice that \eqref{eq:approx_data} implies that for any $j=1, \ldots, N-1$, we can lower bound \begin{equation}
    \tr({M}_0 (\ketbra{\psi}{\psi}-\ketbra{\psi(t_j)}{\psi(t_j)})) \geq 1-\delta. \label{eq:1}
\end{equation}
At the same time
\begin{align}
  \tr({M}_0 (\ketbra{\psi}{\psi} &-\ketbra{\psi(t_j)}{\psi(t_j)})) \nonumber \\
  &\leq \frac{1}{2} \| \ketbra{\psi}{\psi}-\ketbra{\psi(t_j)}{\psi(t_j)}))\|_1 \nonumber\\
  &=\sqrt{1- |\braket{\psi}{\psi(t_j)}|^2}.
\end{align}
Combining this with \eqref{eq:1} completes the proof.    
\end{proof}

We are now in a position to prove Proposition~\ref{prop_self_test}.

\begin{proof}[Proof of Proposition~\ref{prop_self_test}]
First, we note that any realisation $(\H,\rho,H,M)$ can be purified to a realisation $(\H,\psi,H,M)$. For the latter, we have by Lemma~\ref{lem:overlap} that 
$|\braket{\psi}{\psi(t_j)}| \leq O(\sqrt{\delta})$.
By Lemma~\ref{lemma_density_noise}, this implies for the energy density $\sigma(E) dE$ of $\ket{\psi}$ that 
\begin{equation} \label{eq:energy_density}
\sum_{k=0}^{N-1}\int_{\frac{E^+(k-\delta^{1/8})}{N-1}}^{\frac{E^+(k+\delta^{1/8})}{N-1}}\sigma(E) dE \gtrapprox 1-O(\delta^{1/4}), \ k=0,\ldots,(N-1).
\end{equation}

For $k=0,\ldots,N-1$, define the orthogonal projectors
\begin{equation}
\Pi_k := \int_{\frac{E^+ (k-\delta^{1/8})}{N-1}}^{\frac{E^+ (k+\delta^{1/8})}{N-1}} \ketbra{E}{E} dE,
\end{equation}
(extending the spectrum to the negative at $k=0$ is just a notational convenience), and call $\Pi$ their sum. We next introduce the state
\begin{equation}
\ket{\psi'}:=\frac{\Pi\ket{\psi}}{\|\Pi\ket{\psi}\|_2}.
\end{equation}
By eq. (\ref{eq:energy_density}), we have that $\ket{\psi}, \ket{\psi'}$ are very similar for small $\delta$. More concretely,
\begin{equation}
\|\proj{\psi}-\proj{\psi'}\|_1\leq O(\delta^{1/8}).
\label{diff_psi_psi_prime}
\end{equation}
Let us further introduce the Hamiltonian 
\begin{align} \label{binned}
    H':= \sum_{k=0}^{N-1} \frac{E^+ k}{N-1} \Pi_k,
\end{align}
and denote by $\ket{\psi'(t)}$ ($\ket{\psi(t)}$) the state that results from propagating $\ket{\psi'}$ ($\ket{\psi}$) with $H'$ ($H$) for a time $t$.

For $\delta t^{1/8}\leq \frac{\pi (N-1)}{E^+}$, through the same discretization arguments used in Section \ref{sec:charact}, we arrive at
\begin{align}
&|\bra{\psi'(t)}e^{-iHt}\ket{\psi'}|\geq \cos\left(\frac{E^+\delta^{1/8}t}{N-1}\right),
\end{align}
which implies that
\begin{equation}
\|\proj{\psi'(t)}-e^{-iHt}\proj{\psi'}e^{iHt}\|_1\leq O(t\delta^{1/8}).
\end{equation}
Thus,
\begin{align}
&\|\proj{\psi(t)}-\proj{\psi'(t)}\|_1\nonumber\\
&\leq \|e^{-iHt}\proj{\psi}e^{iHt}-e^{-iHt}\proj{\psi'}e^{iHt}\|_1\nonumber\\
&+\|e^{-iHt}\proj{\psi'}e^{iHt}-\proj{\psi'(t)}\|_1\nonumber\\
&\leq O(\delta^{1/8})+O(\delta^{1/8}t).
\end{align}
In particular, we have that
\begin{align}
&\|\proj{\psi(t_l)}-\proj{\psi'(t_l)}\|_1\leq O(\delta^{1/8}),
\label{diff_psi_psi_prime_t}
\end{align}
for $l=0,...,N-1$.
Due to the last equation and the fact that $P$ fits $\mathbf{D}_N$, it holds that
\begin{align}
&\tr(M_1\proj{\psi'(0)})\leq O(\delta^{1/8}),\nonumber\\
&\tr(M_0\proj{\psi'(t_l)})\leq O(\delta^{1/8}),l\not=0.
\label{primed_stats}
\end{align}

The state $\ket{\psi'}$ can be written as 
\begin{align}
    \ket{\psi'} = \sum_k \ket{\psi_j (k)} \ket{\phi_j(k)}_A,
\end{align}
where the states $\{\ket{\psi_j (k)}\}_j$ are an orthonormal basis of the energy eigenspace of $H'$ with eigenvalue $\frac{E^+k}{N-1}$, and the states $\ket{\phi_j(k)}$ are non-normalised, non-orthogonal states of $\rho$'s purifying system $\mathcal{H}_A$.
Now, the trace-class operators 
$\Phi_k:=\sum_j \ketbra{\phi_j(k)}{\phi_j(k)}$ satisfy
\begin{align}
    \| \Phi_{k_1}-\Phi_{k_2} \|_1 
    & \!= \! \frac{1}{N} \| \!\!\!\sum_{k,s=0}^{N-1} \!e^{\frac{2\pi i (k_1-k)s}{N}} \Phi_k \! - \!\! \! \sum_{k,s=0}^{N-1} \! e^{\frac{2\pi i (k_2-k)s}{N}} \Phi_k\|_1 \nonumber\\
    &\leq 2 \| \sum_{k=0}^{N-1} e^{\frac{-2\pi i k s}{N}} \Phi_k \|_1, 
\end{align}
where in the first line we used the geometrical series and in the second the triangle inequality. We further find, due to the polar decomposition, that
\begin{align}
    \| \sum_{k=0}^{N-1} e^{\frac{-2\pi i k s}{N}} \Phi_k \|_1 &= \max_{U, \text{ unitary}} |\tr( U \sum_{k=0}^{N-1} e^{\frac{-2\pi i k s}{N}} \Phi_k)| \nonumber \\
    &= \max_{U, \text{ unitary}} |\tr(\id \otimes U e^{-i H't_s} \ketbra{\psi'}{\psi'})| \nonumber \\    
   &\leq O(\delta^{1/16}).
\end{align} 
The explanation for the last inequality is as follows:
\begin{align}
&|\tr((\id \otimes U) e^{-i H't_s} \ketbra{\psi'}{\psi'})|\nonumber\\
&\leq |\tr((M_0 \otimes U) e^{-i H't_s} \ketbra{\psi'}{\psi'})|\nonumber\\
&+|\tr((M_1 \otimes U) e^{-i H't_s} \ketbra{\psi'}{\psi'})|\nonumber\\
&= |\tr((\id \otimes U)(M_0\otimes \id)| e^{-i H't_s} \ketbra{\psi'}{\psi'})\nonumber\\
&+|\tr((\id \otimes U) e^{-i H't_s} \ketbra{\psi'}{\psi'}(M_1\otimes \id)|.
\end{align}
Each of the last two terms is bounded by $O(\delta^{1/16})$. Indeed, by eq. (\ref{primed_stats}) we have that
\begin{align}
&\|(M_0\otimes \id)\ket{\psi'(t_s)}\|_2^2\nonumber\\
&=\bra{\psi'(t_s)}((M_0)^2\otimes \id)\ket{\psi'(t_s)}\nonumber\\
&\leq\bra{\psi'(t_s)}(M_0\otimes\id)\ket{\psi'(t_s)}\leq O(\delta^{1/8}),
\end{align}
and similarly for the second term. We conclude that
\begin{align}
\|\Phi_{k_1}-\Phi_{k_2} \|_1\leq O(\delta^{1/16}). 
\label{eq:phi}
\end{align}

Now, for each $k$, let us consider the vectors 
\begin{equation} 
    \ket{\psi_k'}= \sum_j \ket{\psi_j(k)} \ket{\phi_j(k)}.
\end{equation}
$\ket{\psi_k'}$ is a purification of $\Phi_k$, with purifying system $\mathcal{H}_k$ (the eigenspace of $H'$ to eigenvalue $\frac{E^+ k}{N-1}$).
Since $\tr(\sum_k\Phi_k)=1$, by \eqref{eq:phi} it follows that
\begin{equation}
|\tr(\Phi_k)-\frac{1}{N}|\leq O(\delta^{1/16}),\forall k.
\label{norm_approx}
\end{equation}
Let $k$ be such that $\mbox{dim}(\H_k)=\min_j\mbox{dim}(\H_j)$, and let $P_k$ be the projector onto the support of $\Phi_k$. Define:
\begin{align}
&\Phi''_j:= \frac{P\Phi_jP}{N\tr(\Phi_jP)},\mbox{ for }j\not=k,\nonumber\\
&\Phi''_k:= \frac{\Phi_k}{N\tr(\Phi_kk)},\nonumber\\
&\ket{\psi''_j}:= \frac{1}{\sqrt{N}}\frac{(\id\otimes P)\ket{\psi_j}}{\|(\id\otimes P)\ket{\psi_j}\|_2},\mbox{ for }j\not=k,\nonumber\\
&\ket{\psi''_k}:= \frac{1}{\sqrt{N}}\frac{\ket{\psi_k}}{\|\ket{\psi_k}\|_2}.
\end{align}
It can be verified that $\tr_{\H}(\proj{\psi''_j})=\Phi''_j$, for all $j$. In addition, by eqs. (\ref{eq:phi}) and (\ref{norm_approx}), we have that 
\begin{align}
&\|\Phi_j-\Phi''_k\|_1\leq O(\delta^{1/16}),\nonumber\\
&\|\ket{\psi''_j}-\ket{\psi_j}\|_2\leq O(\delta^{1/32}),
\end{align}
for $j=1,...,N$.
By assumption, the Schmidt number of the vectors $\{\ket{\psi''_j}\}_j$ is smaller than or equal to $d:=\mbox{dim}(\H_k)$. Call $d_k$ the dimension of the support of $\Phi_k$. Next, we modify the vector $\ket{\psi''_j}$ so that its reduced state in system $A$ is exactly equal to $\Phi_k$. 

To do so, we invoke Uhlmann's theorem \cite{nielsen00}: given the two states $\rho_k:=N\Phi''_k$, $\rho_j:=N\Phi''_j$, there exist purifications $\ket{\omega_k},\ket{\omega_j}\in \C^{d}\otimes \C^{d_k}$ such that
\begin{align}
&|\braket{\omega_k}{\omega_j}|^2=F(\rho_k,\rho_j)\nonumber\\
&:=\tr(\sqrt{\rho_1^{1/2}\rho_2\rho_1^{1/2}})^2.
\end{align}
Since all purifications are equivalent modulo local unitaries and $\mbox{dim}(\H_j)\geq d_k$, there exists an isometry $U_j$ such that
\begin{equation}
\sqrt{N}\ket{\psi''_j}=(U_j\otimes \id)\ket{\omega_j}.
\end{equation}
We therefore define the state
\begin{equation}
\ket{\tilde{\psi}_j}:=\frac{1}{\sqrt{N}}(U_j\otimes \id)\ket{\omega_k};
\end{equation}
from the arguments above, we have that
\begin{align}
&|\braket{\tilde{\psi}_j}{\psi''_j}|=\sqrt{F(\rho_k,\rho_j)}\nonumber\\
&\leq 1-\frac{N}{2}\|\Phi''_k-\Phi''_j\|_1=1-O(\delta^{1/16}).
\end{align}
Putting everything together, we have that the state
\begin{equation}
\ket{\tilde{\psi}}:=\frac{1}{\sqrt{N}}\sum_j\ket{\tilde{\psi}_j}
\end{equation}
satisfies
\begin{equation}
\|\ket{\tilde{\psi}}-\ket{\psi'}\|_2\leq O(\delta^{1/32}).
\label{diff_tilde_sin_tilde}
\end{equation}
Moreover, 
\begin{equation}
\tr_{\H}(\proj{\tilde{\psi}_j})=\Phi''_k=:\tilde{\Phi}.
\end{equation}

Let $\tilde{\Phi}=\sum_l \frac{\lambda_l}{N}\ketbra{\varphi_l}{\varphi_l}$ be the spectral decomposition of $\tilde{\Phi}$. Then, for $k\in\{0,...,N-1\}$, there must exist an orthonormal set of vectors $\{\ket{\hat{\psi}_l(k)}\}_l\subset \H_k$ and $\{\lambda_l\}_l\subset \R^+$, with $\sum_l\lambda_l=1$, such that
\begin{equation}
\ket{\tilde{\psi}_k}=\frac{1}{\sqrt{N}}\sum_{l}\sqrt{\lambda_l}\ket{\hat{\psi}_l(k)}\ket{\varphi_l}.
\end{equation}
Hence, we can write 
\begin{equation}
\tilde{\rho} =\tr_A(\proj{\tilde{\psi}})= \sum_l \lambda_l \ketbra{\hat{\psi}_l}{\hat{\psi}_l},
\label{decomp_rho_tilde}
\end{equation}
with
\begin{equation}
\ket{\hat{\psi}_l}:=\frac{1}{\sqrt{N}}\sum_{k=0}^{N-1}\ket{\hat{\psi}_l(k)}.
\end{equation}
Also, for $\rho':=\tr_A(\proj{\psi'})$, it holds that
\begin{align}
&\|\rho'-\tilde{\rho}\|_1\leq \|\proj{\psi'}-\proj{\tilde{\psi}}\|_1\nonumber\\
&\leq O(\delta^{1/32}),
\label{diff_rho_prime_rho_tilde}
\end{align}
where the last equality follows from eq. (\ref{diff_tilde_sin_tilde}). 

Combined with eq. (\ref{primed_stats}), the last relation implies that
\begin{align}
&\tr(M_1\tilde{\rho})\leq O(\delta^{1/32}),\nonumber\\
&\tr(M_0e^{-iHt_l}\tilde{\rho}e^{iHt_l})\leq O(\delta^{1/32}),l\not=0.
\end{align}
By Lemma \ref{lemma_POVM_annihilation}, this implies that
\begin{align}
&\|M_1\tilde{\rho}\|_1\leq O(\delta^{1/64}),\nonumber\\
&\|M_0e^{-iHt_l}\tilde{\rho}e^{iHt_l}\|_1\leq O(\delta^{1/64}),l\not=0.
\label{stats_tilde_rho}
\end{align}

We therefore introduce the modified POVM:
\begin{align}
&\tilde{M}_0:=\sum_j \proj{\hat{\psi}_j},\nonumber\\
&\tilde{M}_1:=\id-\tilde{M}_0,
\end{align}
which by decomposition (\ref{decomp_rho_tilde}) and eq. (\ref{stats_tilde_rho}) satisfies
\begin{align}
&\|(\tilde{M}_a-M_a)e^{-iHt_l}\tilde{\rho}e^{iHt_l}\|_1\leq O(\delta^{1/64}),
\label{N_times}
\end{align}
for $l=0,...,N-1$, $a=0,1$. In fact, the inequality above holds for general times $t$. Indeed, note that
\begin{align}
&e^{-iH't}\tilde{\rho}e^{i{H'}t}= \nonumber \\
&\sum_{l,m=0}^{N-1}\braket{\bar{\psi}^N(t_l)}{\bar{\psi}^N(t)}\braket{\bar{\psi}^N(t)}{\bar{\psi}^N(t_m)}e^{-i{H'}t_l}\tilde{\rho}e^{i{H'}t_m}.
\label{eq:anytime}
\end{align}
From this and eq. (\ref{N_times}) we thus infer that
\begin{align}
&\|(\tilde{M}_a-M_a)e^{-iHt}\tilde{\rho}e^{iHt}\|_1\leq O(\delta^{1/64}),
\label{all_times}
\end{align}
for all $t\in\R$, $a\in\{0,1\}$.

Call $H^\perp$ the orthogonal complement of the subspace spanned by $\{\hat{\psi}_l(k)\}_{l,k}\subset \H$, and let $\ket{\alpha_j}_j$ be an orthonormal basis thereof. Define the isometry
$V: \H \rightarrow \mathbb{C}^N \otimes \mathbb{C}^2 \otimes l_2(\C)$
\begin{align}
    V \ket{\hat{\psi}_l(k)} &:= \ket{k} \ket{0} \ket{l} \ \forall k,l \nonumber \\
    V \ket{\alpha_j} &:= \ket{0} \ket{1} \ket{j} \ \forall j,
\end{align}
 Applying this isometry we get 
 \begin{align}
 V \tilde{M}_a e^{-iH't} \tilde{\rho} e^{-iH't} V^\dagger = \bar{M}^N_a e^{-i\bar{H}^Nt} \bar{\rho} e^{-i\bar{H}^Nt} \otimes \gamma_{\text{junk}},
\end{align}
where $\gamma_{\text{junk}}=\ketbra{0}{0} \otimes \sum_j \lambda_j \ketbra{j}{j}$. 

To prove the statement of the lemma, it remains to bound the quantity
\begin{equation}
\|M_ae^{-iHt}\rho e^{iHt}-\tilde{M}_ae^{-iH't}\tilde{\rho} e^{iH't}\|_1.
\end{equation}
From eq. (\ref{diff_psi_psi_prime_t}) and the contractiveness of the trace, we have that
\begin{equation}
\|e^{-iHt}\rho e^{iHt}-e^{-iH't}\rho'e^{-iH't}\|_1\leq O(\delta^{1/8})+O(\delta^{1/8}t).
\end{equation}
Invoking eqs. (\ref{diff_rho_prime_rho_tilde}), (\ref{all_times}) and the triangular inequality, we conclude:
\begin{equation}
\|M_ae^{-iHt}\rho e^{iHt}-\tilde{M}_ae^{-iH't}\tilde{\rho} e^{iH't}\|_1\leq O(\delta^{1/8}t)+O(\delta^{1/64}).
\end{equation}

\end{proof}

\section{Proof of Lemma \ref{lemma_T_m_approx}}
\label{app:T_m_approx}
The proof relies on the following, general lemma.
\begin{lemma}
Let $Y\subset \R^n$ be a closed, convex cone with non-empty interior and let $\hat{n}\in\R^n$. Consider the following convex optimization problem:
\begin{align}
f^\star:=&\min \vec{f}\cdot \vec{y}\nonumber\\
\mbox{such that }&\vec{y}\in Y, \vec{y}\cdot \hat{n}=1,\nonumber\\
&\mathbb{A}\cdot \vec{y}\leq \vec{b}.
\label{problem_abstract}
\end{align}
Suppose that there exist $r>0$, and feasible $\vec{y}^0$ such that 
\begin{equation}
\vec{b}-\mathbb{A} \cdot \vec{y}^0>r \left(\begin{array}{c}1\\1\\\vdots\\1\end{array}\right).
\label{special_interior}
\end{equation}


Let $\|\|$ be any vector norm, and let $\vec{y}^+\in\R^n$ be such that 
\begin{enumerate}
    \item $\mathbb{A}\cdot\vec{y}^+\leq \vec{b}$.
    \item $\|\vec{y}^+-\vec{y}\|\leq \epsilon$, for some $\vec{y}\in Y$ with $\hat{n}\cdot \vec{y}=1$.
\end{enumerate}
Then, $\vec{f}\cdot{y}^+$ is lower bounded by 
\begin{equation}
f^\star-\epsilon\left(\max_{\|\vec{u}\|\leq 1}\|A\vec{u}\|_\infty\frac{\vec{f}\cdot\vec{y}^0-f^\star}{r}+\max_{\|\vec{u}\|\leq 1}|\vec{f}\cdot\vec{u}|\right).
\label{bound_accuracy}
\end{equation}

\end{lemma}
\begin{proof}
First we show that problem (\ref{problem_abstract}) has no duality gap. Take any point $\vec{y}$ belonging to the interior of $Y\cap\{\vec{y}:\vec{y}\cdot \hat{n}=1\}$. Then, there exists $\lambda\in (0,1)$ such that the interior point
\begin{equation}
\vec{y}^\lambda:=\lambda \vec{y}^0+(1-\lambda)\vec{y}
\end{equation}
satisfies the condition $\vec{b}>\mathbb{A}\cdot\vec{y}^\lambda$. Problem (\ref{problem_abstract}) therefore admits a strictly feasible point, and thus, by Slater's criterion \cite{Nocedal2006}, has no duality gap.

The dual of Problem (\ref{problem_abstract}) can be shown to be:
\begin{align}
f^\star:=&\max -\lambda-\vec{\mu}\cdot\vec{b}\nonumber\\
\mbox{such that }&\vec{f}+\lambda\hat{n}+\mathbb{A}^T\vec{\mu}\in Y^*,\nonumber\\
&\vec{\mu}\geq 0.
\label{abstract_dual}
\end{align}
For any $\delta>0$, there exists a feasible point of (\ref{abstract_dual}), with objective value greater than $f^\star-\delta$. Let $(\lambda,\vec{\mu})$ be any such point. Then we have that
\begin{align}
&0\leq (\vec{f}+\lambda\hat{n}+\mathbb{A}^T\vec{\mu})\cdot \vec{y}^0\nonumber\\
&=\vec{f}\cdot \vec{y}^0+\lambda +\vec{\mu}\cdot \mathbb{A}\cdot \vec{y}^0\nonumber\\
&\leq \vec{f}\cdot \vec{y}^0+\lambda+\vec{\mu}\cdot \vec{b}-r\sum_j\mu_j\nonumber\\
&\leq \vec{f}\cdot \vec{y}^0-f^\star +\delta-r\sum_j\mu_j.
\end{align}
It follows that
\begin{equation}
\sum_j\mu_j\leq \frac{1}{r}\left(\vec{f}\cdot \vec{y}^0-\vec{f}^\star +\delta\right).
\end{equation}
Since $\{\mu_j\}_j$ is bounded, then so is $\lambda$; indeed, by assumption,
\begin{equation}
-f^\star -\vec{\mu}\cdot\vec{b}\leq \lambda\leq -f^\star+\delta -\vec{\mu}\cdot\vec{b}.
\end{equation}
Since in the vicinity of the optimal objective value all dual variables are bounded, it follows that the dual problem (\ref{abstract_dual}) has an optimal solution $(\lambda^\star,\vec{\mu}^\star)$. Moreover, it satisfies
\begin{equation}
\|\vec{\mu}^\star\|_1\leq \frac{1}{r}\left(\vec{f}\cdot \vec{y}^0-\vec{f}^\star\right).
\label{bound_mu}
\end{equation}

Now, let $\vec{y}^+$ satisfy the conditions of the lemma. Then, there exists $\vec{y}\in Y$ satisfying $\vec{y}\cdot \hat{n}=1$, such that $\vec{y}^+=\vec{y}+\vec{\varepsilon}$, with $\|\vec{\varepsilon}\|\leq \epsilon$. Thus, we have:
\begin{align}
&0\leq \vec{f}\cdot\vec{y}+\lambda^\star+\vec{\mu}^\star\cdot \mathbb{A}\vec{y}\nonumber\\
&=\vec{f}\cdot\vec{y}^++\lambda^\star+\vec{\mu}^\star\cdot\mathbb{A}\cdot\vec{y}^+-\vec{\mu}^\star\cdot \mathbb{A}\vec{\varepsilon}-\vec{f}\cdot \vec{\varepsilon}\nonumber\\
&= \vec{f}\cdot\vec{y}^+-f^\star-\vec{\mu}^\star\cdot(\vec{b}-\mathbb{A}\vec{y}^+)-\vec{\mu}^\star\cdot \mathbb{A}\vec{\varepsilon}-\vec{f}\cdot \vec{\varepsilon}\nonumber\\
&\leq \vec{f}\cdot \vec{y}^+-\vec{f}^\star-\vec{\mu}^\star\cdot \mathbb{A}\vec{\varepsilon}-\vec{f}\cdot \vec{\varepsilon},
\end{align}
where we used the identity $\lambda^\star=-f^\star-\vec{\mu}^\star\cdot\vec{b}$ in the second line; and the inequalities $\mu\geq 0$, $\vec{b}-\mathbb{A}\vec{y}^+\geq0$ in the third one. From this point, one just needs to invoke the bound (\ref{bound_mu}) on $\|\vec{\mu}^\star\|_1$ to arrive at eq. (\ref{bound_accuracy}).
\end{proof}

To apply this result to our extrapolation scenario, consider an extrapolation problem $(\tilde{P},\delta,f)$ and choose a target set of datasets $T\subset \D^{N+1}$. Note that $\D^{N+1}$ has no interior, due to equality constraints of the form: 
\begin{equation}
\sum_aP(a|x,t_j)=\sum_aP(a|x',t_k).
\end{equation}
Let us then choose a lower dimensional representation $\D^{N+1}$ that is free from such linear constraints, e.g.: 
\begin{align}
&\left(P(a|x,t_j):a=1,...,A-1,\right.\nonumber\\
&\left.x=1,...,X, j=1,...,N+1\right)
\end{align}
We add an extra dimension to express normalization. We obtain the set of vectors of the form:
\begin{align}
&\mathbb{D}_{N+1}:=\left\{(1,\left(P(a|x,t_j):a=1,...,A-1,\right.\right.\nonumber\\
&\left.\left. x=1,...,X, j=1,...,N+1\right):P\in \D^{N+1}\right\}.
\end{align}
There exists an invertible linear transformation $C$ mapping $\D^{N+1}$ to $\mathbb{D}_{N+1}$. Let $Y$ denote the cone generated by the transformed target datasets $CT$ and define the vector $\hat{n}:=(1,0,...,0)$. Note that $C^{-1}\vec{y}\in T$ for any $\vec{y}\in Y$ with $\vec{y}\cdot\hat{n}=1$. We next define the vector $\vec{f}$ through the identity $\vec{f}\cdot \vec{y}=f(C^{-1}\vec{y})$. With these definitions, we account for the first two lines of problem (\ref{problem_abstract}). 

We still need to encode the constraints
\begin{equation}
\sum_a|P(a|x,t_j)-\tilde{P}(a|x,t_j)|\leq \delta(x,j),\forall x, j=1,...,N.
\end{equation}
These constraints are equivalent to
\begin{align}
&\sum_a g(a,x,j)P(a|x,t_j)\leq \delta(x,j)+\sum_a\tilde{P}(a|x,t_j)g(a,x,j),\nonumber\\
&\forall x, j=1,...,N,
\label{linearized_cons}
\end{align}
for all functions 
\begin{equation}
g:\{1,...,|A|\}\times\{1,...,|X|\}\times \{1,...,N\}\to\{-1,1\}.    
\end{equation}

In turn, those can be represented by the constraint
\begin{equation}
\mathbb{A}'\cdot{P}\leq \vec{b},
\end{equation}
where 
\begin{align}
&\mathbb{A}'_{(a,x,j,g),(a',x',k)}:=\delta_{a,a'}\delta_{x,x'}\delta_{jk}(1-\delta_{j,N+1})g(a,x,j),\nonumber\\
&b_{(x,j,g)}:=\delta(x,j)+\sum_a\tilde{P}(a|x,t_j)g(a,x,j).
\end{align}
Define
\begin{equation}
\mathbb{A}:=\mathbb{A}'C^{-1}.
\end{equation}
Then we have one-to-one mapping between all $\vec{P}\in T$ fitting the noisy dataset $(\tilde{P},\delta)$ and all vectors $\vec{y}\in Y$ satisfying $\vec{y}\cdot \hat{n}=1$ and $\mathbb{A} \vec{y} \leq \vec{b}$. We have just reformulated the extrapolation problem as an instance of problem (\ref{problem_abstract}).

To invoke the lemma, we just need to choose a norm $\|\|$. We take the norm defined by
\begin{equation}
\|\vec{s}\|:=\|C^{-1}\vec{s}\|',
\end{equation}
with $\|\|':=l_\infty(l_1(\R^{A})^{X(N+1)})$. That is:
\begin{equation}
\|q\|'=\max_{x,j}\sum_a|q(a|x,t_j)|.
\end{equation}
With this norm, 
\begin{equation}
\max_{\|\vec{v}\|\leq 1}\|\mathbb{A}\vec{v}\|_{\infty}=1.    
\end{equation}
In addition,
\begin{equation}
\max_{\|\vec{v}\|\leq 1}|\vec{f}\cdot \vec{v}|\leq \sum_{x}\max_{a}|f(a|x)|.
\end{equation}

We can reformulate our relaxation bounds as: for any $P^+\in T_m$, there exists $P\in T$ such that
\begin{equation}
\|P^+-P\|'\leq \epsilon_m.
\end{equation}

Consequently, if there exists $P^0\in T$ such that 
\begin{equation}
\delta(x,j)-\sum_{a}|P^0(a|x,t_j)-\tilde{P}(a|x,t_j)|\geq r>0,
\end{equation}
for all $x,j$, then we can invoke the lemma and conclude that
\begin{equation}
f(P^+)\geq f^\star-\epsilon_m\left(\frac{f(P^0(\tau))-f^\star}{r}+\sum_{x}\max_{a}|f(a|x)|\right).
\end{equation}
Replacing $f^\star$ by $\mu^-$, we arrive at the statement of Lemma \ref{lemma_T_m_approx}. The proof for $\mu^+$ is analogous.

\section{The limits of extrapolation}
\label{app:limits}
In this appendix, we set to prove the following two results:
\begin{prop}
\label{prop_nogo2}
Let $\tau>T$, and, under the hard energy constraint $\mbox{spec}(H)\subset \left[0,\frac{2}{T}\right]$, let the solution of the extrapolation problem $\mathbb{E}(N, T,\delta,\tau)$ be $[\frac{1}{2}-\delta',\frac{1}{2}+\delta']$. Then, for $\delta\ll \frac{T}{\tau}$, it holds that
\begin{equation}
\delta'\gtrsim \frac{1}{2}\delta\left(\frac{\tau}{T}\right)^{\frac{\log\left(\frac{1}{\delta}\right)}{\log\log\left(\frac{1}{\delta}\right)}}.
\end{equation}
\end{prop}

\begin{prop}
\label{prop_nogo1}
Assume the hard energy support constraint $\mbox{spec}(H)\subset [0,\frac{1}{T}]$. Then, for any $N\in\N$, the dataset $\mathbf{O}(N, T,\delta)$ has a Knightian uncertainty at $\tau>T$, unless $\delta\leq \frac{1}{g(\tau)}$, for some super-exponential function $g$, independent of $N$.
\end{prop}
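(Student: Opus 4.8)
The plan is to show that, once the noise $\delta$ exceeds a super-exponentially small threshold, \emph{every} distribution $Q\in\D$ is attained at time $\tau$ by some fitting timeline, so that no non-trivial prediction is possible. First I would reduce the claim to a one-dimensional reach estimate. Since there is a single two-outcome measurement, a target $Q$ is fixed by $q:=Q(1)\in[0,1]$, and the fitting constraint reads $|P(1|t_j)-\tfrac12|\le\delta/2$. The set of fitting timelines lying in $S(1/T)$ is convex (the intersection of the convex set $S(1/T)$ with an $\ell_1$-ball), and it is invariant under the outcome swap $1\leftrightarrow2$ because the dataset $\mathbf O(N,T,\delta)$ is symmetric. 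Hence the set of attainable values $P(1|\tau)$ is an interval of the form $[\tfrac12-\delta',\tfrac12+\delta']\cap[0,1]$, and Knightian uncertainty is \emph{equivalent} to $\delta'\ge\tfrac12$. It therefore suffices to lower-bound the maximal reachable deviation $\delta'=\delta'(N,T,\delta,\tau)$ and to show $\delta'\ge\tfrac12$ whenever $\delta\ge1/g(\tau)$.

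Next I would set up a concrete family of realizations. Using a pure-state realization and a rank-one effect $M_1=\proj\phi$ with $\|\phi\|\le1$, any realizable timeline has the form $P(1|t)=|\braket\phi{\psi(t)}|^2$ with $\ket{\psi(t)}=e^{-iHt}\ket\psi$; taking the equally spaced spectrum $\mbox{spec}(H)=\{0,\tfrac1{TD},\dots,\tfrac1T\}\subset[0,\tfrac1T]$ turns this into $P(1|t)=|p(e^{-it/(TD)})|^2$, where $p$ is a degree-$D$ polynomial whose coefficients are $\ell_1$-bounded by $1$. Crucially, $P(1|t)\in[0,1]$ is then \emph{automatic}, so the genuine-probability requirement is encoded exactly as this $\ell_1$ (equivalently $\|p\|_\infty\le1$ on the circle) constraint. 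Concretely I would take $p=\tilde p$ to be the generating function $\tilde p(w)=\sum_{k=0}^D r_k w^k$ of a probability vector $(r_k)$, for which $P=|\tilde p|^2$ is manifestly a valid $S(1/T)$ timeline; a short realizability lemma records that every such $|\tilde p|^2$ is attained.

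The heart of the argument is then an extremal estimate for these functions. Writing $\theta=t/(TD)$, the data times fall in the tiny arc $\theta_j\in(0,1/D]$ while $\tau$ sits at $\theta_\tau=(\tau/T)/D$. I would choose $(r_k)$ so that $|\tilde p(e^{i\theta})|^2$ is flat at the level $\tfrac12$ to order $\sim D$ across the data arc, while peaking at the value $1$ near $\theta_\tau$; the $D+1$ free parameters are exactly enough to impose $\sim D$ flatness conditions. Two estimates drive the conclusion: a Bernstein/Taylor bound showing that a degree-$D$ band-limited function which is $\tfrac12$-flat to order $D$ on $[0,T]$ deviates there by at most $\sim1/D!\sim D^{-D}$ (so the construction fits data of noise $\delta\sim D^{-D}$), and an analytic-continuation bound showing that this same function reaches a deviation $\gtrsim\delta\,(\tau/T)^{D}$ at $\tau$. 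Optimising the free degree $D$ for a given $\delta$ — the largest $D$ compatible with interval-deviation $\le\delta$ is $D\sim\log(1/\delta)/\log\log(1/\delta)$ — yields $\delta'\gtrsim\delta\,(\tau/T)^{D}$. Knightian uncertainty ($\delta'\ge\tfrac12$) then holds precisely when $\delta\,(\tau/T)^{D(\delta)}\gtrsim1$; solving this threshold relation gives $\delta_\ast=1/g(\tau)$ with optimal degree $D\sim\tau/T$ and hence $g(\tau)\sim(\tau/T)^{\Theta(\tau/T)}$, a super-exponential function manifestly independent of $N$.

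The main obstacle I anticipate is reconciling the blow-up at $\tau$ with the global probability bound. The naive flat-and-growing choice $P(1|t)-\tfrac12\propto(e^{it/(TD)}-1)^D$ is flat to order $D$ at the origin but has $\ell_1$-coefficient norm $\sim2^D$ once normalised to reach $\tfrac12$ at $\tau$, so it violates $P\le1$ and is \emph{not} realizable; this same obstruction is what forbids an arbitrarily large reach and produces the finite, clamped value $\delta'$. The real work is therefore to exhibit a genuinely realizable (globally bounded, Fejér–Riesz / generating-function) timeline that is simultaneously $\delta$-flat on all of $[0,T]$ and still attains the $(\tau/T)^{D}$ growth at $\tau$, and to verify the two matching estimates above with explicit constants. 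Uniformity in $N$ then comes for free: the dense-sampling limit $N\to\infty$, where flatness is demanded on the whole interval $[0,T]$, is the worst case, and any finite $N$ is less constrained, so a bound proven for the interval controls every $\mathbf O(N,T,\delta)$ with the same $g$.
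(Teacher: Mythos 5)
Your overall strategy is the same as the paper's: reduce to showing that some fitting timeline reaches a deterministic outcome at $\tau$ (whence, by the swap symmetry and convexity, the whole interval $[0,1]$ is reachable), obtain $N$-independence by demanding flatness on all of $(0,T]$ rather than at the sample points, and locate the threshold by trading off flatness on the data window against growth at $\tau$. However, the central object — an $\ell_1$-realizable function that is $\delta$-flat at level $\tfrac12$ on the whole window and still attains full amplitude at $\tau$ — is exactly the part you leave open (``the real work is therefore to exhibit\dots''), and the one concrete ansatz you do commit to is broken. If $P(1|t)=|\tilde p(e^{-it/(TD)})|^2$ with $\tilde p$ the generating function of a probability vector, then $P(1|0)=(\sum_k r_k)^2=1$, and by continuity $P(1|t_1)\approx 1$ for $t_1=T/N$ small; this is incompatible with the data value $\tfrac12$, and a fortiori incompatible with your own worst-case reduction, which requires $P(1|t)\approx\tfrac12$ uniformly on $(0,T]$. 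More generally, a rank-one $|f|^2$ ansatz is the wrong shape here: to reach $P(1|\tau)=1$ you need $|f(\theta_\tau)|=\sum_k|c_k|$, i.e.\ full phase alignment, which fights against $|f|^2\approx\tfrac12$ on an adjacent arc. You correctly observe that the naive flat choice $(e^{i\theta}-1)^D$ fails because its $\ell_1$ coefficient norm blows up as $2^D$, but you do not resolve this tension, and the proposition is not proved without resolving it.

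The paper's resolution is worth contrasting with your sketch. It first proves a realization lemma (Lemma~\ref{lemma_construction}) showing that \emph{any} signed, $\ell_1$-normalized cosine sum $\bigl(\sum_k c_k\cos(E_k t)\bigr)/\sum_l|c_l|$ arises as $P(1|t)-P(0|t)$ for a \emph{mixed-state} realization (a mixture of $\ket{\pm}$ states correlated with energy eigenstates, measured with a $\sigma_x$-type observable) — a strictly larger class than your pure-state, rank-one $|\tilde p|^2$ family, and one that does not force any particular value at $t=0$. It then plugs in the binomial superoscillation $c_k=\binom{n}{k}(1-\lambda)^{n-k}\lambda^k$, $E_k=k/(nT)$, giving
\begin{equation}
P(1|t)-P(0|t)=\frac{(1-\lambda+\lambda e^{-it/(nT)})^n+(1-\lambda+\lambda e^{it/(nT)})^n}{2(2\lambda-1)^n},
\end{equation}
which is $O\bigl((2\lambda-1)^{-n}\bigr)$ uniformly on $[0,T]$ yet equals exactly $(-1)^n$ at $\tau=\pi nT$; superexponentiality of $g$ then follows not from a $1/D!$ Taylor estimate but from letting the free parameter $\lambda\to\infty$ (for each $\lambda$ the tolerable noise decays like $(2\lambda-1)^{-n}$ for $n\geq n_0(\lambda)$, so $\log\delta^\star_n/n\to-\infty$). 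To complete your argument you would need to either reproduce such an explicit superoscillating construction or prove your two matching estimates for a realizable family; as written, the proposal identifies the right obstacle but does not overcome it.
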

Both proofs will rely on the following construction.
\begin{lemma}
\label{lemma_construction}
Given a set of non-zero real numbers $\{c_k\}_{k=0}^n$ and $\{E_k\}_{k=0}^n\subset \R$, there exists a timeline $P(t)$ with realization $(\C^{2n},\rho, H,M)$ such that $\mbox{spec}(H)=\{E_k\}_{k=1}^n\cup\{0\}$ and
\begin{equation}
\alpha(t):=P(1|t)-P(0|t)= \frac{\sum_kc_k\frac{e^{-iE_kt}+e^{iE_kt}}{2}}{\sum_l|c_l|}.
\end{equation}
\end{lemma}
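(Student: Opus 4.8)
The plan is to realize $\alpha(t)$ as a convex combination of $n$ elementary two-dimensional timelines, one for each frequency $E_k$, and then to assemble them using the direct-sum construction already employed in Section~\ref{sec:framework} to prove convexity of the dataset sets. The key observation is that here $E_0=0$ (the spectrum is written $\{E_k\}_{k=1}^n\cup\{0\}$, i.e.\ $E_0=0$), so the $k=0$ summand is the constant $c_0$, while each nonzero frequency $E_k$ will be produced by a single block carrying the two energies $0$ and $E_k$. Gluing $n$ such blocks accounts for the ambient space $\C^{2n}=\bigoplus_{k=1}^n\C^2$ and for the spectrum $\{0,E_1,\dots,E_n\}$.

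First I would treat one block. In $\C^2$ take $H_k=E_k\proj{1}$, initial pure state $\ket{+}=\tfrac{1}{\sqrt2}(\ket{0}+\ket{1})$, and a two-outcome measurement specified through the real Hermitian observable $O_k:=2M_1^{(k)}-\id=\bigl(\begin{smallmatrix}a_k&b_k\\ b_k&a_k\end{smallmatrix}\bigr)$. A one-line computation then yields
\begin{equation}
\alpha_k(t):=\bra{+}e^{iH_kt}O_ke^{-iH_kt}\ket{+}=a_k+b_k\cos(E_kt).
\end{equation}
Since the eigenvalues of $O_k$ are $a_k\pm b_k$, the operators $M_0^{(k)}=\tfrac12(\id-O_k)$ and $M_1^{(k)}=\tfrac12(\id+O_k)$ form a legitimate POVM precisely when $|a_k|+|b_k|\le 1$; thus any affine-in-cosine profile obeying this bound is realizable within one block.

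It then remains to fix convex weights $\lambda_k\ge 0$, $\sum_k\lambda_k=1$, and block parameters so that $\sum_k\lambda_k\alpha_k(t)$ equals the target. Writing $C:=\sum_{l=0}^n|c_l|$, I would take
\begin{equation}
\lambda_k=\frac{|c_k|+|c_0|/n}{C},\qquad b_k=\frac{c_k}{|c_k|+|c_0|/n},\qquad a_k=\frac{c_0/n}{|c_k|+|c_0|/n},
\end{equation}
for which one checks directly that $\sum_k\lambda_k=1$, that $|a_k|+|b_k|=1$ (so every block is a valid POVM), and that $\lambda_kb_k=c_k/C$ and $\sum_k\lambda_ka_k=c_0/C$. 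Forming the realization $(\bigoplus_k\C^2,\ \bigoplus_k\lambda_k\proj{+},\ \bigoplus_kH_k,\ \bigoplus_kM^{(k)})$ and using that a direct sum implements the convex combination $\alpha(t)=\sum_k\lambda_k\alpha_k(t)$, we obtain $\alpha(t)=\tfrac1C\bigl(c_0+\sum_{k=1}^n c_k\cos(E_kt)\bigr)$, which is exactly the claimed expression. The only genuinely delicate point is the bookkeeping for the constant $k=0$ term: having no frequency of its own, it must be distributed across the $n$ oscillating blocks (through the offsets $a_k$), and one must simultaneously keep each block a valid measurement --- which is guaranteed here by the saturated bound $|a_k|+|b_k|=1$.
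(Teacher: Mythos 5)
Your block construction is essentially the paper's own proof in disguise: the paper takes $\rho=\frac{1}{\sum_l|c_l|}\sum_{k}|c_k|\proj{\phi(c_k)}\otimes\proj{k}$ with $\ket{\phi(c)}=\frac{1}{\sqrt2}(\ket{0}+\sign(c)\ket{1})$, $H=\proj{1}\otimes\sum_kE_k\proj{k}$, and the single observable $A=(\ket{0}\bra{1}+\ket{1}\bra{0})\otimes\id$, which is exactly a direct sum of qubit blocks with weights $\lambda_k=|c_k|/\sum_l|c_l|$, offsets $a_k=0$ and amplitudes $b_k=\sign(c_k)$. Your per-block computation $\alpha_k(t)=a_k+b_k\cos(E_kt)$ and the POVM-validity criterion $|a_k|+|b_k|\le 1$ are both correct.

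The gap is your opening assumption that $E_0=0$. The hypothesis of the lemma is that $\{E_k\}_{k=0}^n$ is an arbitrary subset of $\R$, and the target expression contains the term $c_0\frac{e^{-iE_0t}+e^{iE_0t}}{2}=c_0\cos(E_0t)$; your construction instead produces the constant $c_0/C$ (with your $C=\sum_{l=0}^n|c_l|$), which agrees with the target only when $E_0=0$. The lemma is in fact invoked in the proof of Proposition \ref{prop_nogo2} with $E_k=\frac{n-2k}{nT}$, so that $E_0=\frac{1}{T}\neq 0$, and there your argument does not apply. (The clause ``$\mbox{spec}(H)=\{E_k\}_{k=1}^n\cup\{0\}$'' that led you to set $E_0=0$ is a typo in the statement; the paper's own realization has spectrum $\{0\}\cup\{E_k\}_{k=0}^n$.) The repair is immediate and in fact removes the only delicate part of your argument: use $n+1$ blocks, one for each $k=0,\dots,n$, with $\lambda_k=|c_k|/C$, $a_k=0$, $b_k=\sign(c_k)$. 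Then no offset bookkeeping is needed, every block is trivially a valid POVM, and $\sum_{k=0}^n\lambda_k\alpha_k(t)=\frac{1}{C}\sum_{k=0}^nc_k\cos(E_kt)$ as required.
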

\begin{proof}
Define $\ket{\phi(c)}:=\frac{1}{\sqrt{2}}(\ket{0}+\sign(c)\ket{1})$, and consider the quantum state $\rho\in B(\C^{2}\otimes \C^{n})$, given by
\begin{align}
&\rho=\frac{1}{\sum_l|c_l|}\sum_{k=0}^n|c_k|\proj{\phi(c_k)}\otimes \proj{k}.
\label{state_c}
\end{align}
and the Hamiltonian
\begin{equation}
H=\proj{1}\otimes \sum_kE_k\proj{k}.
\label{hamil_E}
\end{equation}
We also introduce the dichotomic operator
\begin{equation}
A:=\left(\ket{0}\bra{1}+\ket{1}\bra{0}\right)\otimes\id_n,
\label{dicho_gen}
\end{equation}
which defines the two-outcome POVM $M$ through $M_a:=\frac{1}{2}(\id+(-1)^aA)$. It can be verified that the timeline with realization $(\C^{2n},\rho, H,M)$ satisfies
\begin{equation}\label{eqn:trace1}
P(1|t)-p(0|t)=\tr(e^{-iHt}\rho e^{iHt} A)=\frac{\sum_kc_k\frac{e^{-iE_kt}+e^{iE_kt}}{2}}{\sum_l|c_l|}.
\end{equation}
    
\end{proof}

Now we are well equipped to prove Propositions \ref{prop_nogo2}, \ref{prop_nogo1}.

\begin{proof}[Proof of Proposition \ref{prop_nogo2}]

For $n=2m, m\in \N$, set 
\begin{align}
&E_k=\frac{n-2k}{nT},\nonumber\\
&c_k=\frac{1}{2^n}\left(\begin{array}{c}n\\k\end{array}\right)(-1)^k,
\end{align}
for $k=0,...,n$. Note that $\sum_k|c_k|=1$. By Lemma \ref{lemma_construction}, there exists a timeline $P_n(t)$ complying with the hard energy support constraint $\mbox{spec}(H)\subset \left[-\frac{1}{T},\frac{1}{T}\right]$ such that
\begin{align}
&P_n(1|t)-P_n(0|t)\nonumber\\
&=\frac{1}{2}\sum_{k=0}^n \frac{1}{2^n}\left(\begin{array}{c}n\\k\end{array}\right)\left((-e^{-\frac{it}{nT}})^k(e^{\frac{it}{nT}})^{n-k}\right.\nonumber\\
&\left. +(-e^{\frac{it}{nT}})^k(e^{-\frac{it}{nT}})^{n-k}\right)\nonumber\\
&=\frac{1}{2}\left(\sin\left(\frac{t}{nT}\right)^n+\sin\left(-\frac{t}{nT}\right)^n\right),\nonumber\\
&=\sin\left(\frac{t}{nT}\right)^n,
\label{asin}
\end{align}
for all $t$. In the last line, we invoked that $\sin(\bullet)$ is an odd function and that $n$ is even.

Note that the spectrum of $H$ is actually contained in $\left[-\frac{1}{T},\frac{1}{T}\right]$. This Hamiltonian generates exactly the same timeline as $H^+:=H+\frac{1}{T}$, whose spectrum is indeed contained in $\left[0,\frac{2}{T}\right]$. For mathematical convenience, though, we will continue the proof with $H$ instead of $H^+$.

\begin{equation}\label{asin2}
\overline{P}_n(1|t)-\overline{P}_n(0|t)=-\sin\left(\frac{t}{nT}\right)^n,
\end{equation}
for all $t$.


For $t\in [0,T]$, eq. (\ref{asin}) tell us that
\begin{align}
&\sum_a\left|P_n(a|t)-\frac{1}{2}\right|= \sin\left(\frac{1}{n}\right)^n,
\end{align}
Hence, we set 
\begin{equation}
\delta= \sin\left(\frac{1}{n}\right)^n.
\end{equation}

On the other hand, for a fixed extrapolation time $\tau>T$, by eq. (\ref{asin}) it holds that
\begin{equation}
\delta'\geq \left|P_n(0|t)-\frac{1}{2}\right|=\frac{1}{2}\sin\left(\frac{\tau}{nT}\right)^n.
\end{equation}
Recall that $n=2m$, so the right-hand side is non-negative.

For large $n$, we have 
\begin{equation}
\delta\approx \frac{1}{n^n}, \delta'\gtrsim \frac{1}{2}\left(\frac{\tau}{Tn}\right)^n.
\end{equation}
The first relation implies that $n\approx e^{W\left(\log\left(\frac{1}{\delta}\right)\right)}$, where $W$ denotes Lambert's $W$ function, i.e., $W(s)e^{W(s)}=s$. For large $s$, this function behaves as \cite{LambertW}:
\begin{equation}
W(s)\approx \log(s)-\log(\log(s))+o(1),
\end{equation}
which implies $n\approx \frac{\log\left(\frac{1}{\delta}\right)}{\log\log\left(\frac{1}{\delta}\right)}$. Hence,
\begin{equation}
\frac{\delta'}{\delta}\gtrsim \frac{1}{2}\left(\frac{\tau}{T}\right)^n\approx \frac{1}{2}\left(\frac{\tau}{T}\right)^{\frac{\log\left(\frac{1}{\delta}\right)}{\log\log\left(\frac{1}{\delta}\right)}}.
\end{equation}

\end{proof}

\begin{proof}[Proof of Proposition \ref{prop_nogo1}]

For arbitrary $\lambda>1$, $n\in\N$, define 
\begin{align}
&E_k=\frac{k}{nT},\nonumber\\
&c_k=\left(\begin{array}{c}n\\k\end{array}\right)(1-\lambda)^{n-k}\lambda^k,
\end{align}
for $k=0,...,n$. Invoking Lemma \ref{lemma_construction}, we have that there exists a timeline $P_{\lambda,n}(t)$, compatible with the hard energy constraint $\mbox{spec}(H)\subset \left[0,\frac{1}{T}\right]$, such that
\begin{align}
&P_{\lambda,n}(1|t)-P_{\lambda,n}(0|t)\nonumber\\
&=\frac{1}{2}\frac{\sum_{k=0}^n\left(\begin{array}{c}n\\k\end{array}\right)(1-\lambda)^{n-k}\left(\lambda^ke^{-it\frac{k}{TN}}+\lambda^ke^{it\frac{k}{TN}}\right)}{\sum_{j=0}^n\left(\begin{array}{c}n\\j\end{array}\right)\lambda^j(\lambda-1)^{n-j}}\nonumber\\
&=\frac{1}{2}\frac{(1-\lambda+\lambda e^{-i\frac{t}{nT}})^n+(1-\lambda+\lambda e^{i\frac{t}{nT}})^n}{(2\lambda-1)^n}.
\label{done_sum}
\end{align}

This can be rewritten as:
\begin{align}
&P_{\lambda,n}(1|t)-P_{\lambda,n}(0|t)=\frac{\cos\left(\frac{\lambda t}{T}\right)+\epsilon_n\left(\lambda,\frac{t}{T}\right)}{(2\lambda-1)^n},
\label{super_osc}
\end{align}
with
\begin{align}
&\epsilon_n\left(\lambda,\frac{t}{T}\right):=\frac{1}{2}\left[(1-\lambda+\lambda e^{-i\frac{t}{nT}})^n\right.\nonumber\\
&+\left.(1-\lambda+\lambda e^{i\frac{t}{nT}})^n\right]-\cos(\frac{\lambda t}{T}).
\label{eqn:deltanu}
\end{align}
As a consequence of the identity $e^x=\lim_{n\to\infty}(1+x/n)^n$, we have that $\lim_{n\to \infty}\epsilon_n\left(\lambda,\tilde{t}\right)=0$, for all $\tilde{t}\in\R$. 

Therefore, there exists a number $n_0(\lambda)\in\N$ such that, for $n\geq n_0(\lambda)$ and $t\in [0, T]$, it holds that
\begin{equation}\label{epsilon1}
|\epsilon_n\left(\lambda,\tilde{t}\right)|\leq 1,
\end{equation}
for $\tilde{t}\in[0,1]$. For $n\geq n_0(\lambda)$ we hence have that
\begin{align}
&\sum_a\left|P_{\lambda,n}(a|t)-\frac{1}{2}\right|=|P_{\lambda,n}(1|t)-P_{\lambda,n}(0|t)|\nonumber\\
&\leq \frac{2}{(2\lambda+1)^n}=:\delta_n(\lambda).
\end{align}
Then, for any $N$, the timeline $P_{\lambda,n}$ fits the noisy dataset $\mathbf{O}(N,T,\delta_n(\lambda))$, see eq. (\ref{def_dataset_O}).

Now, set $\tau=\pi n T$. Then, by eq. (\ref{done_sum}), we have that
\begin{equation}
P_{\lambda,n}(1|\tau)-P_{\lambda,n}(0|\tau)=\frac{(1-2\lambda)^n}{(2\lambda-1)^n}=(-1)^n.
\end{equation}
Namely, $P_{\lambda,n}(a|\tau)=\frac{(-1)^{a+n+1}+1}{2}$.
If we swap the POVM elements of the realization of $P_{\lambda,n}$, we obtain another timeline $\overline{P}_{\lambda,n}$, which also fits $\mathbf{O}(N,T,\delta_n(\lambda))$ and predicts that 
\begin{equation}
\overline{P}_{\lambda,n}(a|\tau)=\frac{(-1)^{a+n+1}+1}{2}.
\end{equation}

In sum, for $\delta=\delta_n(\lambda)$, $n\geq n(\lambda)$, there is Knightian uncertainty at $\tau=\pi n$ independently of the number $N$ of measurement times. Call $\delta^\star_n$ the minimum noise that predicts a Knightian uncertainty at $\tau=\pi n T$. We have proven that
\begin{align}
&\lim_{n\to\infty}\frac{\log(\delta^\star_n)}{n}\nonumber\\
&\leq \lim_{n\to\infty}\frac{\log(\delta_n(\lambda))}{n}\nonumber\\
&=-(2\lambda+1).
\end{align}
Since $\lambda>1$ is arbitrary, it follows that $\delta^\star_n$ is the inverse of a function $g(n)$, superexponential in $n$ (and hence, superexponential in $\tau$).

\end{proof}

\section{An aha! dataset generated by two-level systems}
\label{app:aha_2}
Let $A=2$ and consider the single-measurement dataset $\tilde{P}_1$:
\begin{equation}
\tilde{P}_1(a|0)=\tilde{P}_1\left(a|\frac{\pi}{E^+}\right)=\frac{1}{2},a=1,2,
\end{equation}
depicted in Figure \ref{fig:weird_effects} with two blue points.

\begin{figure}
    \centering
    \includegraphics[width=\linewidth]{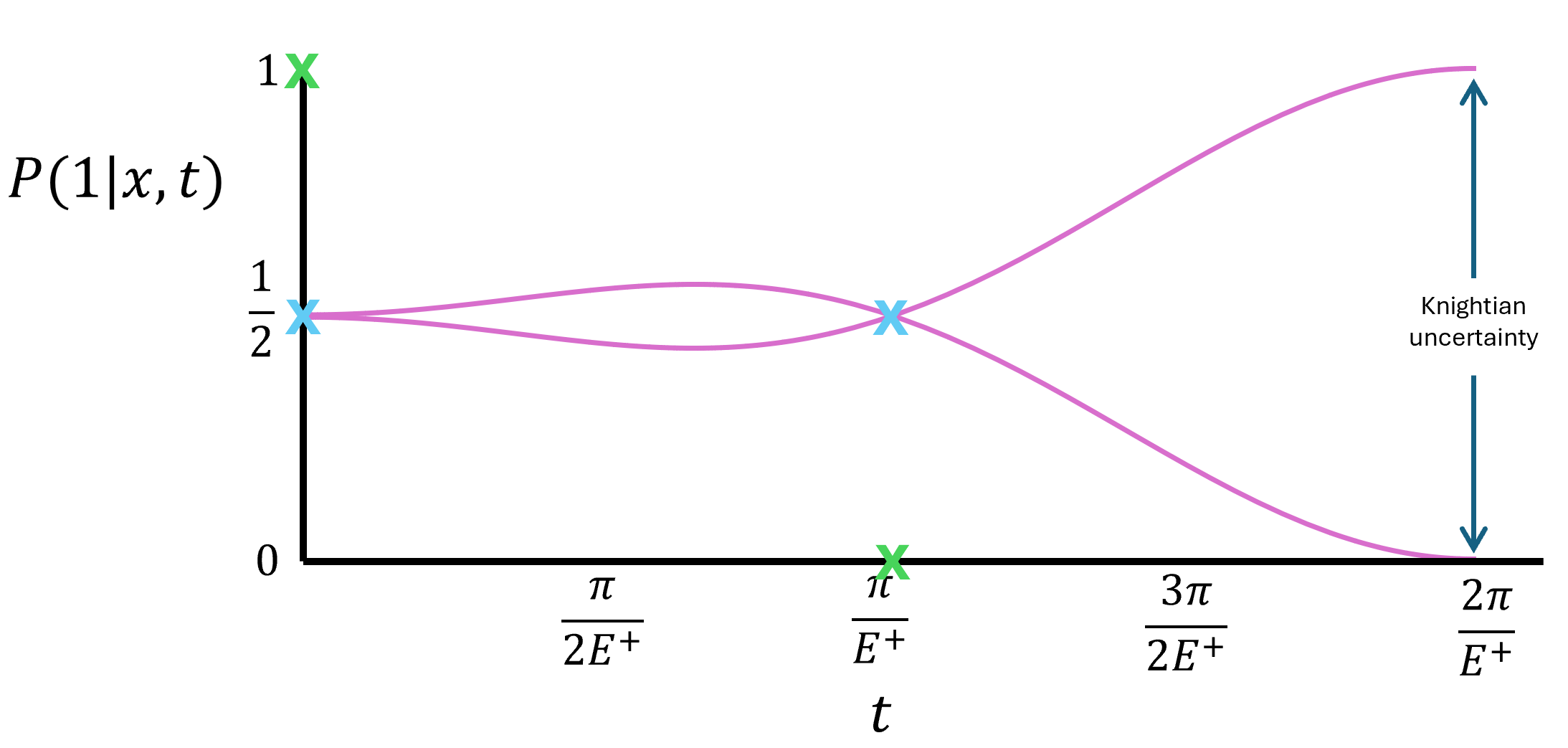}
    \caption{\textbf{An aha! dataset in two-level quantum systems.} The datasets $\tilde{P}_1,\tilde{P}_2$ are respectively depicted by blue and green crosses. Purple curves denote the timeline with realization $(\C^3,\ket{\bar{\psi}^3},\bar{H}^3,\bar{M})$, and its complement. $\tilde{P}_2$ is shown to be an aha! dataset for $\tilde{P}_1$ at $\tau=\frac{2\pi}{E^+}$.}
    \label{fig:weird_effects}
\end{figure}

Let $\tau:=\frac{2\pi}{E^+}$, and consider the realization $(\C^3,\ket{\bar{\psi}^3},\bar{H}^3,\bar{M})$, with
\begin{align}
\bar{M}_{1|1}=e^{-i\bar{H}^3\tau_2}\frac{7\id+9\proj{\bar{\psi}_3}}{16}e^{i\bar{H}^3\tau_2}.
\end{align}
The timeline with realization $(\C^3,\ket{\bar{\psi}^3},\bar{H}^3,\bar{M})$ fits $\tilde{P}_1$, predicting $P(a|1,\tau)=\delta_{a,1}$. The complementary timeline, obtained through the exchange $a=1\leftrightarrow a=2$, predicts the complementary distribution $\delta_{a,2}$. By taking convex combinations of these two timelines, we can obtain any datapoint at $\tau$, and so $\tilde{P}_1$ has Knightian uncertainty at time $\tau$. This is denoted in Figure \ref{fig:weird_effects} by a vertical purple dotted line.

Now, add the dataset $\tilde{P}_2(a|t_j):=\mathbf{D}_2(a|t_j)$, depicted in Figure \ref{fig:weird_effects} with green dots, and consider the full dataset $\tilde{P}$, with $\tilde{P}(a|x,t_j)=\tilde{P}_x(a|t_j)$, for $x=1,2$.

By Proposition \ref{prop_self_test}, any realization $(\H,\rho,H,M)$ of $\tilde{P}_2$ has the property that 
\begin{equation}
\rho(t)=V^\dagger e^{-i\bar{H}^2t}(\proj{\bar{\psi}^2}\otimes\gamma) e^{i\bar{H}^2t} V,    
\end{equation}
for some isometry $V$ and state $\gamma$. Since $e^{-i\bar{H}^2\tau}=1$, this means that $\rho(0)=\rho(\tau)$, and consequently all timelines $P$ fitting $\tilde{P}$ satisfy $P(a|1,\tau_2)=\tilde{P}(a|1,0)=\frac{1}{2}$. Thanks to the addition of the dataset $\tilde{P}_2$, we have hence gone from a scenario of total uncertainty to a scenario of total certainty. It follows that $\tilde{P}_2$ is an aha! dataset for $\tilde{P}_1$.

\section{Timelines fitting the datasets $\mathbf{A}_1, \mathbf{A}_2$}
\label{app:aha_3}
Consider the timeline $P^+$, with realization $(\C^3,\ket{\psi^+},H^+,M^+)$, where 
\begin{align}
&\ket{\psi^+}=\frac{1}{\sqrt{3}}(\ket{1}+\ket{2}+\ket{3}),\nonumber\\
&H^+=\frac{E^+}{2}\proj{2}+E^+\proj{3},\nonumber\\
&M^+_{1|1}=\proj{\psi^+}+\frac{1}{3}\proj{\psi^+(t_1)},\nonumber\\
&M^+_{1|2}=\id-\proj{\psi^+(t_1)}.
\end{align}
It can be verified that $P^+$ fits the dataset $\mathbf{A}$, defined by eqs. (\ref{aha_partial}), (\ref{aha_full}). In addition, $P^{+}(1|x,\tau)=1$, for $x=1,2$.

We next define, for $x=1,2$, the timeline $P^{-}_x$, with realization $(\C^4,\ket{\psi^{-}_x},H^{-}_x,M^{(-,x)})$, where
\begin{align}
&\ket{\psi^{-}_1}=\frac{1}{\sqrt{6}}\ket{1}+\frac{1}{\sqrt{3}}\ket{2}+\frac{1}{\sqrt{3}}\ket{3}+\frac{1}{\sqrt{6}}\ket{4},\nonumber\\
&\ket{\psi^{-}_2}=\frac{1}{2}(\ket{1}+\ket{2}+\ket{3}+\ket{4}),\nonumber\\
&H^{-}_1 =\frac{E^+}{4}\proj{2}+\frac{E^+}{2}\proj{3}+\frac{3E^+}{4}\proj{4},\nonumber\\
&H^{-}_2 =\frac{E^+}{2}\proj{2}+\frac{3E^+}{4}\proj{3}+E^+\proj{4},\nonumber\\
&M^{(-,1)}_1=\ket{\psi^{(-,1)}}\bra{\psi^{(-,1)}},\nonumber\\
&M^{(-,2)}_1=\proj{\psi^{(-,2)}}+\proj{\alpha},
\end{align}
with
\begin{equation}
    \ket{\alpha}:= \frac{1}{2}\ket{1}-\frac{1}{2}\ket{2}+\frac{1}{2}\ket{3}-\frac{1}{2}\ket{4}.
\end{equation}
It is immediate that, for $x=1,2$, the timeline $P^{-}_x$ fits $\mathbf{A}_x$. Moreover, $P^{-}_x(1|\tau)=0$, for $x=1,2$.

\section{Timelines fitting the dataset $\mathbf{F}$}
\label{app:fog}
Given a probability distribution $q=(q_1, q_2, q_3)$, consider the timeline $P_{q}$, generated by
\begin{align}
&\rho=\proj{\bar{\psi}^4},H=\bar{H}^4, \nonumber\\
&M_1=\proj{\bar{\psi}^4}+q_1\proj{\bar{\psi}^4(t_1)},\nonumber\\
&M_2=\proj{\bar{\psi}^4(t_2)}+q_2\proj{\bar{\psi}^4(t_1)}, \nonumber\\
&M_3=\proj{\bar{\psi}^4(t_3)}+q_3\proj{\bar{\psi}^4(t_1)},
\end{align}
where $\ket{\bar{\psi}}^4$ and $\bar{H}^4$ are those from \eqref{reference_realization}. 
Each such timeline fits the dataset $\mathbf{F}$ defined in eq. (\ref{fog_dataset}). Moreover, at $\tau_1:=t_1+\frac{6\pi}{E^+}$, $P_q$ predicts the datapoint $P(a|\tau_1)=q_a$, $a=1,2,3$.

The timelines $P^0, P^1, P^2$ alluded to in Section \ref{sec:fog} are defined as:
\begin{align}
&P^0=P_{(1,0,0)},\nonumber\\
&P^1=P_{(0,1,0)},\nonumber\\ 
&P^2=P_{(0,0,1)}.
\end{align}

\end{appendix}

\end{document}